\newtheorem{thm}{Theorem}
\newtheorem{cor}[thm]{Corollary}
\newtheorem{rem1}[thm]{Remark}
\newtheorem{lemma}[thm]{Lemma}
\newtheorem{defn}[thm]{Definition}
\newtheorem{prop}[thm]{Proposition}
\newtheorem{ex1}[thm]{Example}
\newenvironment{rem}{\begin{rem1}\rm}{\end{rem1}}
\newenvironment{ex}{\begin{ex1}\rm}{\end{ex1}}
\numberwithin{equation}{section}
\numberwithin{thm}{section}
\renewcommand{\P}{\mathbb{P}}
\newcommand{\Q}{\mathbb{Q}}
\newcommand{\R}{\mathbb{R}}
\renewcommand{\S}{\mathbb{S}}
\newcommand{\W}{\mathcal{W}}
\newcommand{\olW}{\W}
\newcommand{\X}{{\bf X}}
\newcommand{\lrparen}[1]{\left(#1\right)}
\newcommand{\lrsquare}[1]{\left[#1\right]}
\newcommand{\lsquare}[1]{\left[#1\right.}
\newcommand{\rsquare}[1]{\left.#1\right]}
\newcommand{\lrcurly}[1]{\left\{#1\right\}}
\newcommand{\Ft}[1]{\mathcal{F}_{#1}}
\newcommand{\Ldp}[2]{L_d^{#1}(#2)}
\newcommand{\Ldq}[1]{\Ldp{q}{#1}}
\newcommand{\Ldpshort}[2]{L^{#1}_{#2}}
\newcommand{\LdpF}[1]{\Ldpshort{p}{#1}}
\newcommand{\LdqF}[1]{\Ldpshort{q}{#1}}
\newcommand{\LdiF}[1]{\Ldpshort{\infty}{#1}}
\newcommand{\LdoF}[1]{\Ldpshort{1}{#1}}
\newcommand{\LdzF}[1]{\Ldpshort{0}{#1}}
\newcommand{\LdpK}[3]{\Ldpshort{#1}{#2}(#3)}
\newcommand{\Lp}[2]{L^{#1}({#2})}
\newcommand{\Lpshort}[2]{L^{#1}_{#2}(\R)}
\newcommand{\LpF}[1]{\Lpshort{p}{#1}}
\newcommand{\LiF}[1]{\Lpshort{\infty}{#1}}
\newcommand{\LoF}[1]{\Lpshort{1}{#1}}
\newcommand{\E}[1]{\mathbb{E}\lrsquare{#1}}
\newcommand{\EQ}[1]{\mathbb{E}^{\mathbb{Q}}\lrsquare{#1}}
\newcommand{\Et}[2]{\E{\left.#1 \right| \mathcal{F}_{#2}}}
\newcommand{\EQt}[2]{\EQ{\left.#1 \right| \mathcal{F}_{#2}}}
\newcommand{\dQdP}{\frac{d\mathbb{Q}}{d\mathbb{P}}}
\newcommand{\dQndP}[1]{\frac{d\mathbb{Q}_{#1}}{d\mathbb{P}}}
\newcommand{\dQidP}{\dQndP{i}}
\newcommand{\dSdP}{\frac{d\mathbb{S}}{d\mathbb{P}}}
\newcommand{\genseq}[4]{(#1_#4)_{#4=#2}^{#3}}
\newcommand{\seq}[1]{\genseq{#1}{0}{T}{t}}
\newcommand{\trans}[1]{#1^{\mathsf{T}}}
\newcommand{\transp}[1]{\trans{\lrparen{#1}}}
\newcommand{\prp}[1]{#1^{\perp}}
\newcommand{\plus}[1]{#1^+}
\newcommand{\plusp}[1]{\plus{(#1)}}
\newcommand{\diag}[1]{#1 }
\newcommand{\cl}{\operatorname{cl}}
\newcommand{\co}{\operatorname{co}}
\newcommand{\interior}{\operatorname{int}}
\newcommand{\env}{\operatorname{env}}
\newcommand{\as}{\text{a.s.}}
\newcommand{\sas}{\text{ a.s.}}
\newcommand{\Pas}{\;\P\text{-}\as}
\DeclareMathOperator*{\esssup}{ess\,sup}
\DeclareMathOperator*{\essinf}{ess\,inf}
\begin{document}

\title{A comparison of techniques for dynamic multivariate risk measures}
\author{Zachary Feinstein \thanks{Washington University in St. Louis, Department of Electrical and Systems Engineering, St. Louis, MO 63130, USA, zfeinstein@ese.wustl.edu.} \and Birgit Rudloff \thanks{Princeton University, Department of Operations Research
and Financial Engineering; and Bendheim Center for Finance, Princeton, NJ 08544, USA,
brudloff@princeton.edu. Research supported by NSF award DMS-1007938.}}
\date{May 9, 2013 (update: January 29, 2015)}
\maketitle
\abstract{This paper contains an overview of results for dynamic multivariate risk measures.  We provide the main results of four different approaches.  We will prove under which assumptions results within these approaches coincide, and how properties like primal and dual representation and time consistency in the different approaches compare to each other.
\\[.2cm]
{\bf Keywords and phrases:} dynamic risk measures, transaction costs, set-valued risk measures, multivariate risk
\\[.2cm]
{\bf Mathematics Subject Classification (2010):} 91B30, 46N10, 26E25 }

\section{Introduction}
\label{sec_intro}
The concept of coherent risk measures was introduced in an axiomatic way in \cite{AD97,AD99} to find the minimal capital required to cover the
risk of a portfolio. The notion was relaxed by introducing convex risk measures in \cite{FS02,FS11}.  In these papers the risk was measured only at time zero, in a
frictionless market, for univariate claims, and with only a single eligible asset that can be used for the capital requirements and serves as the num\'{e}raire.  We call this the static scalar framework. In this paper these four assumptions will be removed and different methods compared.

The static assumptions were relaxed by considering dynamic risk measures, where the risk evaluation of a portfolio is updated as time progresses and new information become available.  In the dynamic framework time consistency plays an important role and has been studied for example in \cite{R04,BN04,DS05,RS05,CDK06}.

Eliminating the assumption that the financial markets are frictionless required a new framework.
Since the `value' of a portfolio is not uniquely determined anymore when bid and ask prices or market illiquidity exist, it is natural
to consider portfolios as vectors in physical units instead, i.e. a portfolio is specified by the number of each of the asset which is
held as opposed to their value. But even in the absence of transaction costs multivariate claims might be of interest, e.g. when assets are denoted in different currencies with fluctuating exchange rates, or different business lines with no direct exchange or different regularity rules are considered, see \cite{CM13}. In contrast to frictionless univariate models also the choice of the num\'{e}raire assets matters, which lead to
different approaches: pick a num\'{e}raire and allow capital requirements to be in this num\'{e}raire, which allows a risk manager to work with
scalar risk measures again (see e.g.~\cite{FMM13,ADM09,FS06,K09,Sc04}); or use the more general num\'{e}raire-free approach and allow risk compensation to be made in a basket of assets which leads to risk measures that are set-valued. This approach was first studied in Jouini, Meddeb, Touzi \cite{JMT04} in the coherent case. Several extensions have been made.  In this paper we will introduce four approaches to deal with dynamic multivariate risk measures, and compare and relate them by giving conditions under which the results obtained in each approach coincide.  The four approaches we discuss are
\begin{enumerate}
\item a set-optimization approach;
\item a measurable selector approach;
\item an approach utilizing set-valued portfolios; and
\item a family of multiple asset scalar risk measures.
\end{enumerate}

The first three approaches correspond to the num\'{e}raire-free framework, whereas the last approach
includes scalar risk measures where a num\'{e}raire asset is chosen.

In~\cite{HHH07,HR08,HH10,HHR10} the results of \cite{JMT04} were extended to the convex case and a stochastic market model.  The
extension of the dual representation results were made possible by an application of convex analysis for set-valued functions
(set-optimization), see Hamel \cite{H09}. The dynamic case and time consistency was studied in \cite{FR12,FR12b}. We will call
this approach the set-optimization approach.  The values of risk measures and its minimal elements in this framework have been
studied and computed in \cite{LR11,HRY12,HLR13,LRU13,FR14-alg} via Benson's algorithm for linear vector optimization (see
e.g.~\cite{L11}) in the coherent and polyhedral convex case, respectively via an approximation algorithm in the convex case, see
\cite{LRU13,FR14-alg}.

\cite{TL12} extended the results of \cite{JMT04} for coherent risk measures to the dynamic case. We will call this the measurable selector approach as it considers the value of a risk measures as a random set, and then provides a primal and dual representation for the measurable selectors in that set.  Time consistency properties were also introduced and some equivalent characterizations discussed.

Most recently, in~\cite{CM13}, set-valued coherent risk measures were considered as functions from random sets into the upper sets.  The transaction costs model, and other financial considerations like trading constraints, or illiquidity, are then embedded into the construction of ``set-valued portfolios''. A subclass of risk measures in this framework can be constructed using a vector of scalar risk measures and \cite{CM13} gives upper and lower bounds as well as dual representations for this subclass. We will present here the dynamic extension of this approach. Time consistency properties have not yet been studied within this framework. However, by comparing and relating the different approaches we will see that a larger subclass can be obtained by using the set-valued risk measures of the set-optimization approach, which provides already a link to dual representations and time consistency properties for this larger subclass.

The fourth approach is to consider a family of dynamic scalar risk measures to evaluate the risk of a
multivariate claim. This approach has not been studied so far in the dynamic case.
In the special case of frictionless markets, the family of scalar risk measures coincides with scalar risk measures using multiple eligible assets as discussed in~\cite{FMM13,ADM09,FS06,K09,Sc04}. Also the scalar static risk measure of multivariate claims with a single eligible asset studied in \cite{BR06}; the scalar liquidity adjusted risk measures in market with frictions as studied in \cite{WA13}; and the scalar superhedging price in markets with transaction costs, see \cite{BLPS92,BV92,PL97,JK95,R08,RZ11,LR11}, are special cases of this approach.
Thus, the family of dynamic scalar risk measures for
portfolio vectors generalizes these special cases in a unified way to allow for frictions, multiple eligible assets, and multivariate portfolios in a dynamic framework. The connection to the set-optimization approach allows to utilize the dual representation and time consistency results deduced there.

Other papers in the context of set-valued risk measures are \cite{T06}, where an extension of the tail conditional expectation to the set-valued framework of~\cite{JMT04} was presented and a numerical approximation for calculation was given; and \cite{CM07}, where set-valued risk measures in a more abstract setting were studied and
a consistent structure for scalar-valued, vector-valued, and set-valued risk measures (but for constant solvency cones) was created.  Furthermore, in \cite{CM07} distribution
based risk measures were extended to the set-valued framework via depth-trimmed regions. More recently, vector-valued risk measures were studied in~\cite{BBG12}.

Section~\ref{sec_rm} introduces the four approaches mentioned above.
In section~\ref{sec_relation} these four approaches are compared by showing how the set-optimization  approach corresponds to each of the other three.  For each comparison, assumptions are given under which there is a one-to-one relationship between the approaches.
These relations allow generalizations in most of the different approaches that go beyond the results obtained so far.

\section{Dynamic risk measures}
\label{sec_rm}
Consider a filtered probability space $\lrparen{\Omega,\Ft{},\seq{\mathcal{F}},\P}$ satisfying the usual conditions with $\Ft{0}$ being the completed
trivial sigma algebra and $\Ft{T} = \Ft{}$. Let $|\cdot|$ be an arbitrary norm in $\R^d$.  Denote $\LdpF{t} :=
\Lp{p}{\Omega,\Ft{t},\P;\R^d}$ for $p \in
\lrsquare{0,+\infty}$ (with $\LdpF{} := \LdpF{T}$).  If $p = 0$, $\LdzF{t}$ is the linear space of the
equivalence classes of $\Ft{t}$-measurable functions $X: \Omega \to \R^d$.  For $p > 0$, $\LdpF{t}$ denotes the linear space of
$\Ft{t}$-measurable functions $X: \Omega \to \R^d$ such that $\|X\|_p = \lrparen{\int_{\Omega}|X(\omega)|^p d\P}^{1/p} < +\infty$ for $p \in
(0,+\infty)$, and $\|X\|_{\infty} = \esssup_{\omega \in \Omega} |X(\omega)| < +\infty$ for $p = +\infty$. For $p \in \lrsquare{1,+\infty}$ we will consider the dual pair $\lrparen{\LdpF{t}, \LdqF{t}}$, where $\frac{1}{p}+\frac{1}{q} = 1$ (with $q = +\infty$ when $p = 1$ and $q = 1$ when $p = +\infty$),  and endow it with the norm topology, respectively the weak* topology (that is the $\sigma \lrparen{\LdiF{t},\LdoF{t}}$-topology on $\LdiF{t}$) in the case $p = +\infty$ unless otherwise noted.

We write $\LdpF{t,+} = \lrcurly{X \in \LdpF{t}: \; X \in \R^d_+ \Pas}$ for the closed convex cone of $\R^d$-valued $\Ft{t}$-measurable random vectors with non-negative components.  Similarly define $\LdpF{+} := \LdpF{T,+}$.  We denote by $\LdpK{p}{t}{D_t}$ those random vectors in $\LdpF{t}$ that take $\P$-a.s. values in $D_t$. Let $1_D: \Omega \to \{0,1\}$ be the indicator function of $D \in \Ft{}$ defined by $1_D(\omega) = 1$ if $\omega \in D$ and $0$ otherwise.
Throughout we will consider the summation of sets by Minkowski addition.
To distinguish the spaces of random vectors from those of random variables, we will write $\LpF{t} := \Lp{p}{\Omega,\Ft{t},\P;\R}$ for the linear space of the equivalence classes of $p$ integrable $\Ft{t}$-measurable random variables $X: \Omega \to \R$. Note that an element $X \in \LdpF{t}$ has components $X_1,...,X_d$ in $\LpF{t}$. 

(In-)equalities between random vectors are always understood componentwise in the
$\P$-a.s. sense.  The multiplication between a random variable $\lambda\in \LiF{t}$ and a set of random vectors
$D\subseteq\LdpF{}$ is understood in the elementwise sense, i.e. $\lambda D=\{\lambda Y:Y\in D\}\subseteq\LdpF{}$ with $(\lambda
Y)(\omega)=\lambda(\omega)Y(\omega)$.
The multiplication and division between (random) vectors is understood in the componentwise sense,
i.e. $\diag{x}y := \transp{x_1 y_1,...,x_d y_d}$ and $x/y := \transp{x_1/y_1,...,x_d/y_d}$ for $x,y \in \R^d$ ($x,y \in
\LdpF{t}$) and with $y_i \neq 0$ (almost surely) for every index $i \in \{1,...,d\}$ for division.

As in \cite{K99} and discussed in \cite{S04,KS09}, the portfolios in this paper are in ``physical units'' of an asset rather than the value in a fixed num\'{e}raire, except where otherwise mentioned.  That is, for a portfolio $X \in \LdpF{t}$, the values of $X_i$ (for $1 \leq i \leq d$) are the number of units of asset $i$ in the portfolio at time $t$.

Let $\tilde{M}_t[\omega]$ denote the set of eligible portfolios, i.e. those
portfolios which can be used to compensate for the risk of a portfolio, at time $t$ and state $\omega$. We assume
$\tilde{M}_t[\omega]$ is a linear subspace of $\R^d$ for almost every $\omega \in \Omega$.  It then follows that $M_t :=
\LdpK{p}{t}{\tilde{M}_t}$  is a closed (and additionally weak* closed if $p = +\infty$) linear subspace of $\LdpF{t}$, see section~5.4 and proposition~5.5.1 in \cite{KS09}.
For example, $\tilde{M}_t[\omega]$ could specify a certain ratio of Euros and Dollars to be used for risk compensations. Another typical example is the case where a subset of assets are used for capital requirements, i.e. $\tilde{M}_t^n[\omega] = \lrcurly{m \in \R^d: \forall i \in \{n+1,...,d\}: m_i = 0}$ and $M_t^n = \LdpK{p}{t}{\tilde{M}_t^n}$.
We will denote $M_{t,+} := M_t \cap \LdpF{t,+}$ to be the nonnegative elements of $M_t$.  We will assume that $M_{t,+} \neq \{0\}$,
i.e. $M_{t,+}$ is nontrivial.

In the first three methods discussed below the risk measures have set-valued images.  In the set-optimization  approach (section~\ref{sec_vector}) and the set-valued portfolio approach (section~\ref{sec_fun-sets}) the image space is explicitly given by the upper sets, i.e. $\mathcal{P}\lrparen{M_t;M_{t,+}}$ where $\mathcal{P}\lrparen{\mathcal{Z};C} := \lrcurly{D \subseteq \mathcal{Z}: D = D + C}$ for some vector space $\mathcal{Z}$ and an ordering cone $C \subset \mathcal{Z}$.  Additionally, let $\mathcal{G}(\mathcal{Z};C) := \lrcurly{D \subseteq \mathcal{Z}: D = \cl\co\lrparen{D + C}} \subseteq \mathcal{P}(\mathcal{Z};C)$ be the upper closed convex subsets. It seems natural to use upper set as the values of risk measures since if one portfolio can cover the risk then any larger portfolio should also cover this risk.  Alternatively, one could consider the set of ``minimal elements'' of the risk compensating portfolios. However, in contrast to the upper sets, the  set of ``minimal elements'' is in general not a convex set when convex risk measure are considered.

\subsection{Set-optimization approach}
\label{sec_vector}

The set-optimization approach to dynamic risk measures is studied in \cite{FR12,FR12b}, where set-valued risk measures (\cite{HH10,HHR10}) were extended to the dynamic case.
A benefit of this method is that dual representations are obtained by a direct application of
the set-valued duality developed in~\cite{H09}, which allowed for the first time to study not only conditional coherent, but also convex set-valued risk measures.

In this setting we consider risk measures that map a portfolio vector into the
complete lattice $\mathcal{P}\lrparen{M_t;M_{t,+}}$ of upper sets.

Set-valued conditional risk measures have been defined in \cite{FR12}.  Here we give a stronger property for finiteness at zero than in \cite{FR12} to ease the comparison to the other approaches.
\begin{defn}
\label{defn_conditional}
A \textbf{\emph{conditional risk measure}} is a mapping $R_t: \LdpF{} \to \mathcal{P}(M_t;M_{t,+})$
which satisfies:
\begin{enumerate}
\item $\LdpF{+}$-monotonicity: if $Y - X \in \LdpF{+}$ then $R_t(Y) \supseteq R_t(X)$;
\item $M_t$-translativity: $R_t(X+m) = R_t(X)-m$ for any $X \in \LdpF{}$ and $m \in M_t$;
\item finiteness at zero: $R_t(0) \neq\emptyset$ and $R_t(0)[\omega] \neq \tilde{M}_t[\omega]$ for almost every $\omega\in\Omega$, where $R_t(0)[\omega] := \lrcurly{u(\omega): u \in R_t(0)}$.
\end{enumerate}
\end{defn}
For finiteness at zero, and elsewhere in later sections, we consider the $\omega$ projection of the risk compensating set $R_t(X)$.  We point out that $R_t(X)$ is a collection of random vectors and is \emph{not} a random set; therefore $R_t(X)[\omega] := \lrcurly{u(\omega): u \in R_t(X)}$ is the collection of risk covering portfolios at state $\omega$.  As $R_t(X)$ is not a random set, it is generally the case that $R_t(X) \neq \LdpK{p}{t}{R_t(X)} := \lrcurly{u \in M_t: \P\lrparen{\omega \in \Omega: u(\omega) \in R_t(X)[\omega]} = 1}$.

Below we consider additional properties for conditional risk measures that have useful financial and mathematical interpretations.  Note that the definition for $K$-compatibility below is more general than the one given in~\cite{FR12}, and corresponds to the definition in~\cite{HRY12}.
A conditional risk measure $R_t$ at time $t$ is
\begin{itemize}
\item\textbf{\emph{convex (conditionally convex)}} if for all $X,Y \in \LdpF{}$ and any $\lambda \in [0,1]$ (respectively $\lambda \in \LiF{t}$ such that $0 \leq \lambda \leq 1$)
    \[R_t(\lambda X + (1-\lambda)Y) \supseteq \lambda R_t(X) + (1-\lambda) R_t(Y);\]

\item\textbf{\emph{positive homogeneous (conditionally positive homogeneous)}} if for all $X \in \LdpF{}$ and any $\lambda \in \R_{++}$ (respectively $\lambda \in L^\infty_t(\R_{++})$)
    \[R_t(\lambda X) = \lambda R_t(X);\]

\item\textbf{\emph{coherent (conditionally coherent)}} if it is convex and positive homogeneous (respectively conditionally convex and conditionally positive homogeneous);

\item \textbf{\emph{normalized}} if $R_t(X) = R_t(X) + R_t(0)$ for every $X \in \LdpF{}$;

\item \textbf{\emph{local}} if for every $D \in \Ft{t}$ and every $X \in \LdpF{}$, $1_D R_t(X) = 1_D R_t(1_D X)$;

\item\textbf{\emph{$K$-compatible}} for some convex cone $K \subseteq \LdpF{}$ if $R_t(X) = \bigcup_{k \in K} R_t(X- k)$;

\item \textbf{\emph{closed}} if the graph of the risk measure
    \[\operatorname{graph}(R_t) = \lrcurly{(X,u) \in \LdpF{} \times M_t: u \in R_t(X)}\]
    is closed in the product topology (with the weak* topology if $p = +\infty$);

\item\textbf{\emph{convex upper continuous}} if
    \[R_t^{-1}(D) := \lrcurly{X \in \LdpF{}: R_t(X) \cap D \neq \emptyset}\]
    is closed (weak* closed if $p = +\infty$) for any closed convex set $D \in \mathcal{G}(M_t;M_{t,-})$.
\end{itemize}

(Conditional) convexity and coherence for a risk measure define a regulatory framework which promotes diversification.  Set-valued normalization is a generalization of the scalar normalization (zero capital needed to compensate the risk of the $0$ portfolio).  The local property means that the risks at some state (in $\Ft{t}$) only depend on the possible future values of the portfolio reachable from that state.  $K$-compatibility is closely related to a market model; assume for the moment an investor can trade the initial portfolio $0$ into any random vector in $-K$ by the terminal time $T$, then $K$-compatibility means considering the (minimal) risk of a portfolio when all possible trades  are taken into account.
The closure is the set-valued version of lower semicontinuity and is necessary for the dual representation to hold. Convex upper continuity is a stronger property than closure and is useful when characterizing or creating multi-portfolio time consistent risk measures, the details will be given below.

A \textbf{\emph{dynamic risk measure}} is a sequence $\seq{R}$ of conditional risk measures.  A dynamic risk measure is said to have a certain property if $R_t$ has that property for all times $t$.

A static risk measure in the sense of~\cite{HHR10} is a conditional risk measure at time $0$.
Note that for static risk measures convexity (positive homogeneity) coincides with conditional convexity (conditional positive homogeneity).

Any conditionally convex risk measure $R_t: \LdpF{} \to \mathcal{P}\lrparen{M_t;M_{t,+}}$ is local, see proposition 2.8 in~\cite{FR12}.

\begin{defn}
\label{defn_acceptance}
A set $A_t \subseteq \LdpF{}$ is a \textbf{\emph{conditional acceptance set}} at time $t$ if it satisfies $A_t + \LdpF{+} \subseteq A_t$,
$M_t \cap A_t \neq \emptyset$, and $\tilde{M}_t[\omega] \cap (\R^d \backslash A_t[\omega]) \neq \emptyset$ for almost every $\omega \in \Omega$, where $A_t[\omega] = \{X(\omega): X \in A_t\}$.
\end{defn}

The acceptance set of a conditional risk measure $R_t$ is given by $A_t = \lrcurly{X \in \LdpF{}: 0 \in
R_t(X)}$, which is the collection of ``risk free'' portfolios.  For any conditional acceptance set $A_t$,  the function defined by $R_t(X) = \lrcurly{u \in
M_t: X + u \in A_t}$ is a conditional risk measure. This is the primal representation for conditional risk measures via acceptance sets, see~\cite{FR12}.  This relation is one-to-one, i.e. we can consider an $(R_t,A_t)$ pair
or equivalently just one of the two.
Given a risk measure and acceptance set pair $(R_t,A_t)$ then the following properties hold, see Proposition 2.11 in~\cite{FR12}.
\begin{itemize}
\item $R_t$ is normalized if and only if $A_t + A_t \cap M_t = A_t$;
\item $R_t$ is (conditionally) convex if and only if $A_t$ is (conditionally) convex;
\item $R_t$ is (conditionally) positive homogeneous if and only if $A_t$ is a (conditional) cone;
\item $R_t$ has a closed graph if and only if $A_t$ is closed.
\end{itemize}

For the duality results below we will consider $p \in [1,+\infty]$.
Let $\mathcal{M}$ denote the set of $d$-dimensional probability measures absolutely continuous with respect to $\P$, and let $\mathcal{M}^e$ denote the set of $d$-dimensional probability measures equivalent to $\P$.
We will say $\Q = \P|_{\Ft{t}}$ for vector probability measures $\Q$ and some time $t\in [0,T]$ if for every $D \in\Ft{t}$ it follows that $\Q_i(D) = \P(D)$ for all $i = 1,...,d$.
Consider $\Q \in \mathcal{M}$. We will use a $\P$-almost sure version of the $\Q$-conditional expectation of $X \in \LdpF{}$ given by
\[\EQt{X}{t} := \Et{\diag{\xi_{t,T}(\Q)} X}{t},\] where $\xi_{r,s}(\Q) =
\transp{\bar{\xi}_{r,s}(\Q_1),...,\bar{\xi}_{r,s}(\Q_d)}$ for any times $0 \leq r \leq s \leq T$ with \[\bar{\xi}_{r,s}(\Q_i)[\omega] :=
\begin{cases}\frac{\Et{\dQidP}{s}(\omega)}{\Et{\dQidP}{r}(\omega)} & \text{on } \Et{\dQidP}{r}(\omega) > 0\\ 1 & \text{else}
\end{cases}\] for every $\omega \in \Omega$, see e.g.~\cite{CK10,FR12}.  For any probability measure $\Q_i \ll \P$ and any times $0 \leq r \leq s \leq t \leq T$, it follows that $\dQidP = \bar{\xi}_{0,T}(\Q_i)$,
$\bar{\xi}_{t,s}(\Q_i) = \bar{\xi}_{t,r}(\Q_i) \bar{\xi}_{r,s}(\Q_i)$, and $\Et{\bar{\xi}_{r,s}(\Q_i)}{r} = 1$ almost
surely.
The halfspace and the conditional ``halfspace'' in $\LdpF{t}$ with normal direction $w \in \LdqF{t}\backslash \{0\}$ are denoted by \[G_t(w) := \lrcurly{u \in \LdpF{t}: 0 \leq \E{\trans{w}u}}, \quad\quad \Gamma_t(w) := \lrcurly{u \in \LdpF{t}: 0 \leq \trans{w}u \Pas}.\]
We will define the set of dual variables to be
\begin{equation*}
\W_t := \lrcurly{(\Q,w) \in \mathcal{M} \times \lrparen{\plus{M_{t,+}} \backslash \prp{M_t}}: w_t^T(\Q,w) \in \LdqF{+}, \Q = \P|_{\Ft{t}}},
\end{equation*}
where for any $0 \leq t \leq s\leq T$
\[w_t^{s}(\Q,w) = \diag{w}\xi_{t,s}(\Q),\]
$\prp{M_t} = \lrcurly{v \in \LdqF{t}: \E{\trans{v}u} = 0 \; \forall u \in M_t}$ and $C^+=\lrcurly{v \in \LdqF{t}: \E{\trans{v}u} \geq 0 \; \forall u \in C}$ denotes the positive dual cone of a cone $C\subseteq \LdpF{t}$.

The set of dual variables $\W_t$ consists of two elements.  The first component is a vector probability measure absolutely continuous to the physical measure $\P$ and corresponds to the dual element in the traditional scalar theory.
The second component reflects the order relation in the image space as the $w$'s are the collection of possible relative weights between the eligible portfolios. This component is not needed in the scalar case.
The coupling condition $w_t^T(\Q,w) \in \LdqF{+}$ guarantees that the probability measure $\Q$ and the ordering vector $w$ are ``consistent'' in the following sense.
If a portfolio $X$ is component-wise ($\P$-)almost surely greater than or equal to another portfolio $Y$, then the $\Q$-conditional expectation keeps that relationship with respect to the order relation defined by $w$, that is $\trans{w}\EQt{X}{t}\geq\trans{w}\EQt{Y}{t}$ ($\P$-)almost surely.
In the following, we review the duality results from~\cite{FR12b}.  Note that since we are only considering closed (conditionally) convex risk measures we can restrict the image space to $\mathcal{G}(M_t;M_{t,+}) := \lrcurly{D \subseteq M_t: D = \cl\co\lrparen{D + M_{t,+}}}$.

\begin{cor}[Corollary 2.4 of~\cite{FR12b}]
\label{cor_conditional_dual}
A conditional risk measure $R_t: \LdpF{} \to \mathcal{G}(M_t;M_{t,+})$ is closed and conditionally convex if and only if
\begin{equation}
\label{conditional_convex_dual}
R_t(X) = \bigcap_{(\Q,w) \in \olW_t} \lrsquare{-\alpha_t^{\min}(\Q,w) + \lrparen{\EQt{-X}{t} + \Gamma_t\lrparen{w}} \cap M_t},
\end{equation}
where $-\alpha_t^{\min}$ is the minimal conditional penalty function given by
\begin{equation}
\label{conditional_min penalty}
-\alpha_t^{\min}(\Q,w) = \operatorname{cl}\bigcup_{Z \in A_t} \lrparen{\EQt{Z}{t} + \Gamma_t(w)} \cap M_t.
\end{equation}
$R_t$ is additionally conditionally coherent if and only if
\begin{equation}
\label{conditional_coherent_dual}
R_t(X) = \bigcap_{(\Q,w) \in \olW_{t}^{\max}} \lrparen{\EQt{-X}{t} + \Gamma_t\lrparen{w}} \cap M_t
\end{equation}
with
\begin{equation}
\label{max_dualset}
\W_{t}^{\max} = \lrcurly{(\Q,w) \in \W_t: w_t^T(\Q,w) \in \plus{A_t}}.
\end{equation}
\end{cor}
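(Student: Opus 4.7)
The plan is to invoke the general set-valued Fenchel--Moreau duality (developed in \cite{H09} and applied to risk measures in \cite{FR12b}) and then rewrite the conjugate/penalty in the form dictated by the structure of $M_t$, $M_{t,+}$, and the translativity axiom. First, since $R_t : \LdpF{} \to \mathcal{G}(M_t; M_{t,+})$ has a closed graph and is conditionally convex, it is in particular closed convex in the usual sense, so the set-valued biconjugation theorem applies and gives
\[
R_t(X) = \bigcap_{(v,w)} \lrsquare{-\alpha_t^{\min}(v,w) + H_{v,w}(X)},
\]
where $(v,w)$ ranges over a subset of $\LdqF{} \times \LdqF{t}$, the ``halfspaces'' $H_{v,w}(X)$ arise from scalarizing along $v$ and using $w$ to order the image in $M_t$, and $-\alpha_t^{\min}$ is the minimal conditional penalty (the conjugate computed over the acceptance set $A_t = R_t^{-1}(M_t)$).

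Second, I would reduce the dual variable $(v,w)$ to the form asserted in the statement. The $\LdpF{+}$-monotonicity of $R_t$ forces $v \in \LdqF{+}$; $M_t$-translativity of $R_t$ forces the scalarization along $v$ to agree with the scalarization of the order cone $M_{t,+}$ along $w$ when conditioned on $\Ft{t}$, which is exactly the statement that $v = \diag{w}\xi_{t,T}(\Q)$ for a vector probability measure $\Q$ with $\Q = \P\mid_{\Ft{t}}$. Requiring $w \in \plus{M_{t,+}} \backslash \prp{M_t}$ discards redundant directions (those perpendicular to $M_t$, which give no information, and those outside the positive dual of $M_{t,+}$, which contradict monotonicity), and the coupling condition $w_t^T(\Q,w) \in \LdqF{+}$ is precisely $v \geq 0$ rewritten. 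This identifies the dual set as $\W_t$, and the halfspace $H_{v,w}(X)$ rewrites as $(\EQt{-X}{t} + \Gamma_t(w)) \cap M_t$, giving \eqref{conditional_convex_dual}.

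Third, the formula \eqref{conditional_min penalty} for the minimal penalty follows by direct computation: using the primal representation $R_t(X) = \{u \in M_t : X+u \in A_t\}$, the Fenchel conjugate is computed as a supremum (in the set-valued sense, i.e.\ a closure of a union) of $(\EQt{Z}{t} + \Gamma_t(w)) \cap M_t$ over $Z \in A_t$. For the coherent characterization, I would use that conditional coherence is equivalent to $A_t$ being a conditional convex cone, so the union defining $-\alpha_t^{\min}(\Q,w)$ is a cone in $M_t$; hence $-\alpha_t^{\min}(\Q,w)$ is either $\{u \in M_t : \trans{w}u \geq 0 \text{ a.s.}\} \cap M_t$ (which is absorbed into the halfspace term, contributing nothing beyond it) when $w_t^T(\Q,w) \in \plus{A_t}$, or $M_t$ otherwise, in which case the corresponding term in the intersection is all of $M_t$ and can be dropped. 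This reduction to $(\Q,w) \in \W_t^{\max}$ gives \eqref{conditional_coherent_dual}.

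The main obstacle I anticipate is the bookkeeping in the second step: rigorously justifying that $M_t$-translativity forces the decomposition $v = \diag{w}\xi_{t,T}(\Q)$ with $\Q=\P\mid_{\Ft{t}}$, and showing that the $\P$-a.s.\ conditional halfspace $\Gamma_t(w)$ (rather than the expectational $G_t(w)$) is the correct object to intersect with $M_t$. Both facts rely on delicate interchange of $\Ft{t}$-conditioning with the set-valued scalarization, and this is really where the conditional (as opposed to static) character of the duality must be invoked carefully --- but all of this machinery is already assembled in \cite{FR12b}, so the proof is essentially a specialization of its main theorem combined with the translation between the acceptance set and the penalty function recorded above.
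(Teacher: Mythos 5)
Your proposal is correct and follows essentially the same route the paper relies on: the statement is imported as Corollary 2.4 of \cite{FR12b}, and the paper's own commentary immediately after it identifies the proof as set-valued Fenchel--Moreau conjugation/biconjugation in the sense of \cite{H09}, combined with the decomposition of the dual variable into a pair $(\Q,w)$ via monotonicity and $M_t$-translativity and the passage from $G_t(w)$ to $\Gamma_t(w)$ using locality of conditionally convex risk measures --- exactly your outline, including the absorption of the conical penalty in the coherent case. The only slip is the parenthetical $A_t = R_t^{-1}(M_t)$, which should read $A_t = \{X \in \LdpF{} : 0 \in R_t(X)\}$.
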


The more general convex and coherent case reads analogously to corollary~\ref{cor_conditional_dual}, just with
$\Gamma_t\lrparen{w}$ replaced by $G_t\lrparen{w}$ in equations~\eqref{conditional_convex_dual},~\eqref{conditional_min penalty}
and \eqref{conditional_coherent_dual}, see theorem~2.3 in~\cite{FR12b}. As shown in~\cite{HHR10,FR12}, the $G_t$-version of the minimal
penalty function $-\alpha_t^{\min}$ is the set-valued (negative) convex conjugate in the sense of~\cite{H09} and
the dual representation is the biconjugate, both with infimum and supremum defined for the image space
$\mathcal{G}(M_t;M_{t,+})$.
\begin{rem}
The dual representation given in~\cite{JMT04} for the static coherent case (and in~\cite{TL12} for the dynamic case, see
section~\ref{sec_measurable} below) uses a single dual variable.  This set of dual variables from~\cite{JMT04} is equivalent to
$\lrcurly{w_t^T(\Q,w): (\Q,w) \in \W_t}$, and as discussed in the proof of theorem 2.3 in~\cite{FR12b}, the dual
representation~\eqref{conditional_convex_dual} and~\eqref{conditional_coherent_dual} can be given by this set alone.  This means, the results presented in this section include the previously known dual representation results.
\end{rem}

We conclude this section by giving a brief description and equivalent characterizations of a time consistency property for set-valued risk measures in the set-optimization  approach.  The property we will discuss is multi-portfolio time consistency, which was proposed in~\cite{FR12} and further studied in~\cite{FR12b}.  We also return to the general case with $p \in [0,+\infty]$.
\begin{defn}
\label{defn_mptc}
A dynamic risk measure $\seq{R}$ is called \textbf{\emph{multi-portfolio time consistent}} if for all times $t,s \in [0,T]$ with $t < s$, all portfolios $X\in \LdpF{}$ and all sets ${\bf Y}\subseteq \LdpF{}$ the implication
\begin{equation}
  R_{s}(X) \subseteq \bigcup_{Y \in {\bf Y}} R_{s}(Y) \Rightarrow R_t(X) \subseteq \bigcup_{Y \in {\bf Y}} R_t(Y)
\end{equation}
is satisfied.
\end{defn}
Multi-portfolio time consistency means that if at some time any risk compensation portfolio for $X$ also compensates the risk of some portfolio $Y$ in the set ${\bf Y}$, then at any prior time the same relation should hold true. Implicitly within the definition, the choice of eligible portfolios can have an impact on the multi-portfolio time consistency of a risk measure.

In~\cite{FR12}, (set-valued) time consistency was also introduced.  This property is defined by
\[R_{s}(X) \subseteq R_{s}(Y) \Rightarrow R_t(X) \subseteq R_t(Y)\]
for any time $t,s \in [0,T]$ with $t < s$ and any portfolios $X,Y \in \LdpF{}$.  It is weaker than multi-portfolio time consistency, though in the scalar case both properties coincide.

Before we give some equivalent characterizations for multi-portfolio time consistency, we must give a few additional definitions.
These definitions are used for defining the stepped risk measures $R_{t,s}: M_{s} \to \mathcal{P}(M_t;M_{t,+})$ for $t \leq s$, as
discussed in~\cite[appendix C]{FR12b}.  We denote and define the stepped acceptance set by $A_{t,s} := A_t \cap M_{s}$.  And akin to corollary~\ref{cor_conditional_dual}, for the closed conditionally convex and closed (conditionally) coherent stepped risk measures we will define the minimal stepped penalty function (for the conditionally convex case with $M_t \subseteq M_{s}$) by $-\alpha_{t,s}^{\min}(\Q,w) := \cl \bigcup_{X \in A_{t,s}} \lrparen{\EQt{X}{t} + \Gamma_t(w)} \cap M_t$ for every $(\Q,w) \in \W_{t,s}$ and the maximal stepped dual set (for the  (conditionally) coherent case with $M_t \subseteq M_{s}$) by $\W_{t,s}^{\max} := \lrcurly{(\Q,w) \in \W_{t,s}: w_t^{s}(\Q,w) \in \plus{A_{t,s}}}$.  As can be seen, both the stepped penalty function and the stepped maximal dual set are with respect to dual elements $\W_{t,s}$, which in general differ from $\W_t$.  In the case that $\tilde{M}_t = M_0$ almost surely, it holds $\W_{t,s} \supseteq \W_t$ for all times $t \leq s \leq T$; if $\tilde{M}_{s} = \R^d$ almost surely then $\W_{t,s} = \W_t$.

In the below theorem, for the convex upper continuous (conditionally) coherent case we introduce two more definitions.  We define the mapping $H_t^{s}: 2^{\W_{s}} \to 2^{\W_t}$ for times $t \leq s \leq T$ by $H_t^{s}(D) := \lrcurly{(\Q,w) \in \W_t: \lrparen{\Q,w_t^{s}(\Q,w)} \in D}$ for $D \subseteq \W_{s}$.  Additionally, for $\Q,\R \in \mathcal{M}$ we denote by $\Q \oplus^{s} \R$ the pasting of $\Q$ and $\R$ in $s$, i.e. the vector probability measures $\S\in \mathcal{M}$ defined via
\[\dSdP = \diag{\xi_{0,s}(\Q)} \xi_{s,T}(\R).\]

The following theorem gives equivalent characterizations of multi-portfolio time consistency: a recursion in the spirit of Bellman's principle (property \ref{thm_equiv_recursive}  below), an additive property for the acceptance sets (property \ref{thm_equiv_acceptance}), the so called cocyclical property (property \ref{thm_equiv_conditional_penalty}) and stability (property \ref{thm_equiv_stable}). The properties are important for the construction of  multi-portfolio time consistent risk measures.
\begin{thm}[Theorem 3.4 of \cite{FR12}, corollary 3.5, corollary 4.3 and theorem 4.6 of \cite{FR12b}] 
\label{thm_mptc_equiv}
For a normalized dynamic risk measure $\seq{R}$ the following are equivalent:
\begin{enumerate}
\item \label{thm_equiv_mptc}$\seq{R}$ is multi-portfolio time consistent,
\item \label{thm_equiv_recursive} $R_t$ is recursive, that is for every time $t,s \in [0,T]$ with $t < s$
    \begin{equation}
    \label{recursive}
        R_t(X) = \bigcup_{Z \in R_{s}(X)} R_t(-Z) =: R_t(-R_{s}(X)).
    \end{equation}
\end{enumerate}
If additionally $M_t \subseteq M_{s}$ for every time $t,s \in [0,T]$ with $t < s$ then all of the above is also equivalent to
\begin{enumerate}
\setcounter{enumi}{2}
\item \label{thm_equiv_acceptance} for every time $t,s \in [0,T]$ with $t < s$
    \begin{equation}
    \label{sum_acceptance}
        A_t = A_{s} + A_{t,s}.
    \end{equation}
\end{enumerate}
If additionally $p \in [1,+\infty]$, $\tilde{M}_t = \R^n \times \{0\}^{d-n}$ almost surely for some $n \leq d$ for every time $t \in
[0,T]$, $\seq{R}$ is a c.u.c. conditionally convex risk measure and
\begin{equation*}
R_t(X) = \bigcap_{(\Q,w) \in \olW_t^e} \lrsquare{-\alpha_t^{\min}(\Q,w) + \lrparen{\EQt{-X}{t} + \Gamma_t(w)} \cap M_t}
\end{equation*}
for every $X \in \LdpF{T}$ where $\olW_t^e = \lrcurly{(\Q,w) \in \olW_t: \Q \in \mathcal{M}^e}$,
then all of the above is also equivalent to
\begin{enumerate}
\setcounter{enumi}{3}
\item \label{thm_equiv_conditional_penalty} for every time $t,s \in [0,T]$ with $t < s$
    \begin{equation}
    \label{sum_conditional_penalty}
        -\alpha_t^{\min}(\Q,w) = \cl \lrparen{-\alpha_{t,s}^{\min}(\Q,w) + \EQt{-\alpha_{s}^{\min}(\Q,w_t^{s}(\Q,w))}{t}}
    \end{equation}
    for every $(\Q,w) \in \W_t^e$.
\end{enumerate}
If additionally $p \in [1,+\infty]$, $\tilde{M}_t = \R^n \times \{0\}^{d-n}$ almost surely for some $n \leq d$ for every time $t \in [0,T]$ and $\seq{R}$ is a c.u.c. (conditionally) coherent risk measure then all of the above is also equivalent to
\begin{enumerate}
\setcounter{enumi}{4}
\item \label{thm_equiv_compose} for every time $t,s \in [0,T]$ with $t < s$
    \begin{equation}
    \label{intersect_dual}
        \W_t^{\max} = \W_{t,s}^{\max} \cap H_t^{s}\lrparen{\W_{s}^{\max}},
    \end{equation}
    which in turn is equivalent to
\item \label{thm_equiv_stable} for every time $t,s \in [0,T]$ with $t < s$
    \begin{equation}
    \label{eq_stable}
    \olW_t^{\max} = \lrcurly{(\Q \oplus^{s} \R,w): (\Q,w) \in \olW_{t,s}^{\max}, (\R,w_t^{s}(\Q,w)) \in \olW_{s}^{\max}}.
    \end{equation}
\end{enumerate}
\end{thm}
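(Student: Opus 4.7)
The plan is to prove the equivalences in a chain, starting from the core recursion and then successively translating it into the acceptance-set and dual-variable languages.

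First I would establish $(1) \Leftrightarrow (2)$. For $(1) \Rightarrow (2)$, apply multi-portfolio time consistency with the specific choice ${\bf Y} = \{-Z + u : Z \in R_s(X),\, u \in R_t(-Z)\}$, or more cleanly, use $M_t$-translativity together with the observation that $R_s(X) \subseteq R_s(-(-Z))$ for every $Z \in R_s(X)$; normalization is what lets the fixed-point-style inclusion $R_s(X) \subseteq \bigcup_{Z\in R_s(X)} R_s(-(-Z))$ promote to the recursion. For $(2) \Rightarrow (1)$, if $R_s(X) \subseteq \bigcup_{Y \in {\bf Y}} R_s(Y)$, then recursion and the obvious monotonicity of unions give $R_t(X) = \bigcup_{Z \in R_s(X)} R_t(-Z) \subseteq \bigcup_{Y \in {\bf Y}} \bigcup_{Z \in R_s(Y)} R_t(-Z) = \bigcup_{Y \in {\bf Y}} R_t(Y)$.

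Next, for $(2) \Leftrightarrow (3)$ under $M_t \subseteq M_s$, I would work through the bijection between $R_t$ and its acceptance set $A_t = \{X : 0 \in R_t(X)\}$. The direction $(3) \Rightarrow (2)$ uses: $u \in R_t(X)$ iff $X + u \in A_t = A_s + A_{t,s}$, which, writing $X+u = X' + m$ with $X' \in A_s$ and $m \in A_{t,s} \subseteq M_s \cap A_t$, produces $Z := -m \in R_s(X)$ (since $X + Z = X' \in A_s$ and $Z \in M_s$) with $u \in R_t(-Z)$ (since $-Z + u = m + u$ and $m \in A_t$ combined with translativity shows $u - (-m) \in R_t(-Z)$). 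The converse uses $m \in R_{t,s}(0) \cap A_{t,s}$-style decompositions, again via translativity.

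For the dual characterizations, I would exploit that under the c.u.c. hypotheses the minimal penalty function is the set-valued conjugate of $\mathbf{1}_{A_t}$ in the sense of Hamel~\cite{H09}, so the additive splitting $A_t = A_s + A_{t,s}$ conjugates to the cocyclical identity $(4)$ via the tower property $\mathbb{E}^{\mathbb{Q}}[\,\cdot\,|\Ft{t}] = \mathbb{E}^{\mathbb{Q}}[\mathbb{E}^{\mathbb{Q}}[\,\cdot\,|\Ft{s}]|\Ft{t}]$ applied to $\xi_{t,T}(\Q) = \diag{\xi_{t,s}(\Q)}\xi_{s,T}(\Q)$, with the dual ordering vector transforming as $w \mapsto w_t^s(\Q,w)$. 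For the coherent case $(5)$ follows because $-\alpha_t^{\min}$ collapses to the indicator of $\W_t^{\max}$, so the cocycle becomes the intersection formula. Finally $(5) \Leftrightarrow (6)$ uses the pasting $\Q \oplus^s \R$: one checks that $(\Q,w) \in \W_{t,s}^{\max}$ and $(\R, w_t^s(\Q,w)) \in \W_s^{\max}$ precisely parametrize the pairs $(\S, w) \in \W_t^{\max}$ with $\S = \Q \oplus^s \R$, where the factorization of $\xi_{0,T}(\S)$ into $\diag{\xi_{0,s}(\Q)}\xi_{s,T}(\R)$ makes $H_t^s$ the inverse of pasting on the $\R$-factor.

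The main obstacle will be the c.u.c. case: controlling the closure in~\eqref{sum_conditional_penalty} requires that the Minkowski sum of the stepped and one-step penalty images remain closed after a conditional expectation, and justifying the exchange of $\bigcap_{(\Q,w)}$ with composition of one-step representations requires the equivalent-measure restriction $\W_t^e$ and the rectangular structure $\tilde{M}_t = \R^n \times \{0\}^{d-n}$ so that the coupling condition $w_t^T(\Q,w) \in \LdqF{+}$ is stable under pasting. This is where the stronger topological and structural hypotheses in the theorem statement are genuinely used.
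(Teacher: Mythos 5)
First, a point of reference: the paper does not prove this theorem itself --- it is imported verbatim from Theorem~3.4 of~\cite{FR12} and Corollaries~3.5, 4.3 and Theorem~4.6 of~\cite{FR12b}, so there is no in-paper proof to compare against. Your outline does follow the architecture of those original proofs: normalization plus translativity to get the time-$s$ self-covering, MPTC applied in both directions to obtain recursiveness, the acceptance-set bijection for~\eqref{sum_acceptance}, and conjugation of the acceptance-set decomposition for the dual statements. Two concrete slips in the elementary part should be repaired. In $(1)\Rightarrow(2)$ the inclusion you invoke, $R_s(X)\subseteq R_s(-(-Z))=R_s(Z)$, is not the relevant one; what normalization actually gives is $R_s(X)=R_s(X)+R_s(0)=\bigcup_{Z\in R_s(X)}\lrparen{Z+R_s(0)}=\bigcup_{Z\in R_s(X)}R_s(-Z)$, and the reverse inclusion $R_s(-Z)=Z+R_s(0)\subseteq R_s(X)+R_s(0)=R_s(X)$ for each fixed $Z\in R_s(X)$; MPTC is then applied once to each of these two time-$s$ inclusions. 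In $(3)\Rightarrow(2)$ your choice $Z:=-m$ fails: from $X+u=X'+m$ one gets $X+(-m)=X'-u\neq X'$, so $-m$ need not lie in $R_s(X)$, and likewise $-Z+u=m+u$ need not lie in $A_t$. The correct choice is $Z:=u-m$, which uses $u\in M_t\subseteq M_s$ to ensure $Z\in M_s$, gives $X+Z=X'\in A_s$ so $Z\in R_s(X)$, and gives $-Z+u=m\in A_{t,s}\subseteq A_t$ so $u\in R_t(-Z)$. The converse direction $(2)\Rightarrow(3)$, which you leave at one vague sentence, is the easier computation ($X\in A_t$ iff $0\in\bigcup_{Z\in R_s(X)}R_t(-Z)$, read off $X=(X+Z)+(-Z)\in A_s+A_{t,s}$) but should be written out.

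For items~\ref{thm_equiv_conditional_penalty}--\ref{thm_equiv_stable} what you have is a correct description of the mechanism (conjugating $A_t=A_s+A_{t,s}$, the factorization $\xi_{t,T}(\Q)=\diag{\xi_{t,s}(\Q)}\xi_{s,T}(\Q)$, collapse of the penalty to the indicator of $\W_t^{\max}$ in the coherent case, and pasting for~\eqref{eq_stable}), but not a proof: the passage from the acceptance-set sum to the closed Minkowski sum of penalty functions, the role of convex upper continuity in controlling the closure in~\eqref{sum_conditional_penalty}, and the verification that $H_t^s$ inverts pasting on the second factor all require the detailed arguments of~\cite{FR12b} and cannot be reconstructed from the sketch as written. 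You correctly identify these as the hard points, but identifying them is not the same as closing them.
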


\subsection{Measurable selector approach}
\label{sec_measurable}

The measurable selector approach was proposed in~\cite{TL12}  and is an extension of~\cite{JMT04} to the dynamic framework. Only coherent risk measures are considered in this approach
as the technique used to deduce the dual representation relies on coherency.
The risk measures are assumed to be compatible to a conical market model at the final time $T$, i.e. portfolios are compared
based on the final ``values''.  In so doing, a new pre-image space denoted by $B_{K_T,n}$ is introduced, which will be defined below and is discuss in
remark~\ref{rem_preimage}.
In~\cite{TL12}, the space of eligible assets is $M_t^n = \LdpK{0}{t}{\tilde{M}_t^n}$ with $\tilde{M}_t^n[\omega] =
\lrcurly{m \in \R^d: \forall i \in \{n+1,...,d\}: m_i = 0}$, i.e. $n \leq d$ of the $d$ assets can be used to cover risk.

Let $\mathcal{S}_t^d$ be the set of $\Ft{t}$-measurable random sets in $\R^d$.  Recall that a mapping $\Gamma: \Omega \to 2^{\R^d}$ is an $\Ft{t}$-measurable random set if
\[\operatorname{graph} \Gamma = \lrcurly{(\omega,x) \in \Omega \times \R^d: x \in \Gamma(\omega)}\]
is $\Ft{t} \otimes \mathcal{B}(\R^d)$-measurable (where $\mathcal{B}(\R^d)$ is the Borel $\sigma$-algebra).  The random set $\Gamma$ is closed (convex, conical) if $\Gamma(\omega)$ is closed (convex, conical) for almost every $\omega \in \Omega$.

Let $K_T \in \mathcal{S}_T^d$ satisfy the following assumptions:
\renewcommand{\theenumi}{k\arabic{enumi}}
\begin{enumerate}
\item for almost every $\omega \in \Omega$: $K_T(\omega)$ is a closed convex cone in $\R^d$;
\item for almost every $\omega \in \Omega$: $\R^d_+ \subseteq K_T(\omega) \neq \R^d$;
\item for almost every $\omega \in \Omega$: $K_T(\omega)$ is a proper cone, i.e. $K_T(\omega) \cap -K_T(\omega) = \{0\}$.
\end{enumerate}
\renewcommand{\theenumi}{\arabic{enumi}}
It is then possible to create a partial ordering in $\LdzF{}$ defined by $K_T$ such that $X \geq_{K_T} Y$ if and only if $\P(X - Y
\in K_T) = 1$. The solvency cones with friction, see e.g.~\cite{K99,S04,KS09}, satisfy the conditions given above for $K_T$.

Let $n \leq d$, then we define $B_{K_T,n} := \lrcurly{X \in \LdzF{}: \exists c \in \R_+: c 1_{d,n} \geq_{K_T} X \geq_{K_T} -c
1_{d,n}}$ where the $i$-th component of $1_{d,n}\in\R^d$ is $1_{d,n}^i = \begin{cases}1 &\text{if }i \in \{1,...,n\}\\ 0 &\text{else}\end{cases}$.  Then we can define a norm on
$B_{K_T,n}$ by $\|X\|_{K_T,n} := \inf\lrcurly{c \in \R_+: c 1_{d,n} \geq_{K_T} X \geq_{K_T} -c 1_{d,n}}$, and
$(B_{K_T,n},\|\cdot\|_{K_T,n})$ defines a Banach space.

Let $\mathcal{S}_t^{d,n} \subseteq \mathcal{S}_t^d$
be such that $\Gamma \in
\mathcal{S}_t^{d,n}$ if  $\Gamma \in\mathcal{S}_t^d$ and $\Gamma(\omega) \subseteq \tilde{M}_t^n[\omega]$ for almost every $\omega \in \Omega$.

\begin{defn}
\label{defn_risk_process}
A \textbf{\emph{risk process}} is a sequence $\seq{\tilde{R}}$ of mappings $\tilde{R}_t: B_{K_T,n} \to \mathcal{S}_t^{d,n}$ satisfying
\begin{enumerate}
\item $\tilde{R}_t(X)$ is a closed $\Ft{t}$-measurable random set for any $X \in B_{K_T,n}$, $\tilde{R}_t(0) \neq \emptyset$, and
$\tilde{R}_t(0)[\omega] \neq \tilde{M}_t^n[\omega]$  for almost every $\omega \in \Omega$.
\item For any $X,Y \in B_{K_T,n}$ with $Y \geq_{K_T} X$ it holds $\tilde{R}_t(Y) \supseteq \tilde{R}_t(X)$.
\item $\tilde{R}_t(X + m) = \tilde{R}_t(X)-m$ for any $X \in B_{K_T,n}$ and $m \in M_t^n$.
\end{enumerate}

A risk process is \textbf{\emph{conditionally convex}} at time $t$ if for all $X,Y \in B_{K_T,n}$ and $\lambda \in \LiF{t}$ with $0 \leq \lambda \leq 1$ almost surely it holds $\lambda \tilde{R}_t(X) + (1-\lambda) \tilde{R}_t(Y)
\subseteq \tilde{R}_t(\lambda X + (1-\lambda) Y)$.

A risk process is \textbf{\emph{conditionally positive homogeneous}} at time $t$ if for all $X \in B_{K_T,n}$ and $\lambda \in L^0_t(\R_{++})$ with $\lambda X \in B_{K_T,n}$ it holds $\tilde{R}_t(\lambda X) = \lambda \tilde{R}_t(X)$.

A risk process is \textbf{\emph{conditionally coherent}} at time $t$ if it is both conditionally convex and conditionally positive homogeneous at time $t$.

A risk process is \textbf{\emph{normalized}} at time $t$ if $\tilde{R}_t(X) + \tilde{R}_t(0) = \tilde{R}_t(X)$ for every $X \in B_{K_T,n}$.
\end{defn}

Thus, the values $\tilde{R}_t(X)$ of a risk process are $\Ft{t}$-measurable random sets in $\R^d$. Primal and dual representations can be provided for the measurable selectors of this set.
Recall that $\gamma$ is a $\Ft{t}$-measurable selector of a $\Ft{t}$-random set $\Gamma$ if $\gamma(\omega) \in \Gamma(\omega)$ for almost
every $\omega \in \Omega$.  Then using the notation from above, the measurable selectors in $L^p$ are given by $\LdpK{p}{t}{\Gamma} =
\lrcurly{\gamma \in \LdpF{t}: \P(\gamma \in \Gamma) = 1}$.

\begin{defn}
\label{defn_risks_selector}
Given a risk process $\seq{\tilde{R}}$, then $S_{\tilde{R}}: [0,T] \times B_{K_T,n} \to 2^{M_t^n}$ is a \textbf{\emph{selector risk measure}} if $S_{\tilde{R}}(t,X) := \LdpK{0}{t}{\tilde{R}_t(X)}$ for every time $t$ and portfolio $X \in B_{K_T,n}$.  The \textbf{\emph{bounded selector risk measure}} is defined by $S_{\tilde{R}}^{\infty}(t,X) := S_{\tilde{R}}(t,X) \cap B_{K_T,n}$.
\end{defn}

\begin{defn}
\label{defn_msb-acceptance}
A set $A_t \subseteq B_{K_T,n}$ is a \textbf{\emph{conditional acceptance set}} at time $t$ if:
\begin{enumerate}
\item $A_t$ is closed in the $(B_{K_T,n},\|\cdot\|_{K_T,n})$ topology.
\item If $X \in B_{K_T,n}$ such that $X \geq_{K_T} 0$ then $X \in A_t$.
\item $B_{K_T,n}\cap M_t^n \not\subseteq A_t$.
\item $A_t$ is $\Ft{t}$-decomposable, i.e. if for any finite partition $(\Omega_t^n)_{n =1}^N \subseteq \Ft{t}$ of
$\Omega$ and any family $(X_n)_{n =1}^N \subseteq D$, then $\sum_{n =1}^N 1_{\Omega_t^n} X_n \in D$.
\item $A_t$ is a conditionally convex cone.
\end{enumerate}
\end{defn}

\begin{rem}
\label{rem_process-acceptance}
Note that the definition for $\Ft{t}$-decomposability above differs from that in~\cite{TL12}, as in that paper $\Ft{t}$-decomposability is considered with respect to countable rather than finite partitions.  We weakened the condition by adapting the proof of theorem 1.6 of chapter 2 from~\cite{M05} when $p = +\infty$ to the space $B_{K_T,n}$.
\end{rem}

\begin{prop}[Proposition 3.4 of~\cite{TL12}]
\label{prop_process-acceptance}
Given a conditionally coherent risk process $\tilde{R}_t$ at time $t$, then $A_t := \lrcurly{X \in B_{K_T,n}: 0 \in \tilde{R}_t(X)}$ is a conditional acceptance set at time $t$.
\end{prop}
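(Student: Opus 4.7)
My plan is to translate each of the five defining conditions of a conditional acceptance set through the equivalence $X \in A_t \iff 0 \in \tilde{R}_t(X)$ and then read off the needed property from the axioms of a conditionally coherent risk process. The key preliminary is $0 \in \tilde{R}_t(0)$: conditional positive homogeneity applied to $X=0$ gives $\tilde{R}_t(0) = \lambda \tilde{R}_t(0)$ for every $\lambda \in L^0_t(\R_{++})$, so scaling any selector of the nonempty random set $\tilde{R}_t(0)$ by $\lambda_k \downarrow 0$ and invoking pointwise closedness yields $0 \in \tilde{R}_t(0)$.

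Conditions (2), (4) and (5) then fall out quickly. For (2), if $X \geq_{K_T} 0$ then monotonicity of $\tilde{R}_t$ gives $\tilde{R}_t(X) \supseteq \tilde{R}_t(0) \ni 0$. For (5), conditional convexity and conditional positive homogeneity transfer to $A_t$: if $0 \in \tilde{R}_t(X)$ and $0 \in \tilde{R}_t(Y)$ then
\[
0 = \lambda\cdot 0 + (1-\lambda)\cdot 0 \in \lambda \tilde{R}_t(X) + (1-\lambda)\tilde{R}_t(Y) \subseteq \tilde{R}_t(\lambda X + (1-\lambda) Y),
\]
and similarly $0 \in \tilde{R}_t(\lambda X)$, with membership in $B_{K_T,n}$ preserved because $K_T$ is a cone. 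For (4), iterate conditional convexity with weights $1_{\Omega^k_t}$ drawn from a finite $\mathcal{F}_t$-partition; each summand $1_{\Omega^k_t} \tilde{R}_t(X_k)$ contains $0$, so their Minkowski sum does, and this sum is contained in $\tilde{R}_t\bigl(\sum_k 1_{\Omega^k_t} X_k\bigr)$.

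For (1), suppose $X_k \to X$ in $\|\cdot\|_{K_T,n}$ with $X_k \in A_t$, and pick $c_k \downarrow 0$ with $c_k 1_{d,n} \geq_{K_T} X_k - X$, equivalently $X + c_k 1_{d,n} \geq_{K_T} X_k$. Since $c_k 1_{d,n} \in M_t^n$, monotonicity combined with translativity gives
\[
\tilde{R}_t(X) - c_k 1_{d,n} = \tilde{R}_t(X + c_k 1_{d,n}) \supseteq \tilde{R}_t(X_k) \ni 0,
\]
so $c_k 1_{d,n} \in \tilde{R}_t(X)$ for every $k$. Letting $k \to \infty$ and using pointwise closedness of $\tilde{R}_t(X)$ yields $0 \in \tilde{R}_t(X)$, i.e. $X \in A_t$.

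The main obstacle is (3). By translativity, for $X \in M_t^n$ one has $X \in A_t \iff X \in \tilde{R}_t(0)$, so the task reduces to exhibiting a norm-bounded $\mathcal{F}_t$-measurable selector of $\tilde{M}_t^n \setminus \tilde{R}_t(0)$. Finiteness at zero guarantees that the $\mathcal{F}_t$-measurable set $D := \{\omega : \tilde{M}_t^n[\omega] \setminus \tilde{R}_t(0)[\omega] \neq \emptyset\}$ has positive probability; a Kuratowski--Ryll-Nardzewski style measurable selection theorem, applied to the relatively open random set $\tilde{M}_t^n \setminus \tilde{R}_t(0)$ on $D$, followed by truncation (translating a Euclidean bound into the required $K_T$-sandwich via $K_T \supseteq \R^d_+$) and extension by $0$ off $D$, produces the required witness in $\bigl(B_{K_T,n} \cap M_t^n\bigr) \setminus A_t$.
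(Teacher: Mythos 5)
The paper does not prove this proposition---it is imported verbatim as Proposition 3.4 of the cited reference [TL12]---so there is no in-paper argument to compare against. Your verification is correct and is the natural one: the preliminary step $0\in\tilde{R}_t(0)$ via positive homogeneity plus closedness, the sandwich-and-translate argument for norm-closedness of $A_t$, and the reduction of condition (3) to finding a bounded selector of $\tilde{M}_t^n\setminus\tilde{R}_t(0)$ all go through.

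Two small points of precision. First, in condition (3) the word ``truncation'' should not mean capping the components of the measurable selector $\gamma$, since the capped point could land back inside $\tilde{R}_t(0)[\omega]$; the safe move is to restrict to an event $E_c=\{|\gamma|\le c\}$ of positive probability and set $m:=1_{E_c}\gamma$, which lies in $B_{K_T,n}\cap M_t^n$ because $\R^d_+\subseteq K_T$, and which fails $m(\omega)\in\tilde{R}_t(0)[\omega]$ on $E_c$ --- and positive probability of failure already gives $0\notin\tilde{R}_t(m)$, hence $m\notin A_t$. (Note also that the finiteness-at-zero axiom makes your set $D$ have full, not merely positive, measure.) Second, the ``iterate conditional convexity'' step for decomposability needs a slightly careful induction since $(1_{\Omega_t^2},\dots,1_{\Omega_t^N})$ is not a partition of unity once $\Omega_t^1$ is removed; one clean way is to set $D_j=\Omega_t^1\cup\dots\cup\Omega_t^j$ and show inductively that $0\in\tilde{R}_t\bigl(1_{D_j}Z_j+(1-1_{D_j})X_{j+1}\bigr)$ using $\lambda=1_{D_j}$. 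Both are routine repairs; the proof stands.
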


A primal representation of the selector risk measure is given as follows.
\begin{thm}[Theorem 3.3 in \cite{TL12}]
\label{thm_process-primal}
Let $A_t$ be a closed subset of $(B_{K_T,n},\|\cdot\|_{K_T,n})$.  Then $A_t$ is a conditional acceptance set if and only if there
exists some conditionally coherent risk process $\tilde{R}_t$ at time $t$ such that the associated bounded selector risk measure $S_{\tilde{R}}^{\infty}$ satisfies $S_{\tilde{R}}^{\infty}(t,X) = \lrcurly{m \in M_t^n: X + m \in A_t}$ for all $X\in B_{K_T,n}$.
\end{thm}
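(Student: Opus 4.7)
The plan is to prove the two implications separately; the ``$\Leftarrow$'' direction is nearly immediate, whereas ``$\Rightarrow$'' rests on a Hiai--Umegaki-type representation of closed decomposable sets of random vectors by closed random sets. For ``$\Leftarrow$'', suppose $\tilde{R}_t$ is a conditionally coherent risk process with $S^{\infty}_{\tilde{R}}(t,X)=\lrcurly{m\in M_t^n:X+m\in A_t}$ for every $X\in B_{K_T,n}$; Proposition~\ref{prop_process-acceptance} then gives that $A_t^{*}:=\lrcurly{X\in B_{K_T,n}:0\in\tilde{R}_t(X)}$ is a conditional acceptance set, and since the deterministic $0$ lies in $M_t^n\cap B_{K_T,n}$ one has the chain $X\in A_t^{*}\Leftrightarrow 0\in\tilde{R}_t(X)\Leftrightarrow 0\in S^{\infty}_{\tilde{R}}(t,X)\Leftrightarrow X\in A_t$, so $A_t=A_t^{*}$ is itself a conditional acceptance set.

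For ``$\Rightarrow$'', given $A_t$, I would introduce the auxiliary ``scalar'' risk set
\[ \rho_t(X):=\lrcurly{m\in M_t^n:\, X+m\in A_t}\subseteq B_{K_T,n}. \]
I would first verify directly that $\rho_t(X)$ is non-empty (pick $c\in\R_+$ with $c\,1_{d,n}\geq_{K_T}-X$; then $c\,1_{d,n}+X\geq_{K_T}0$, which lies in $A_t$ by axiom~2 of Definition~\ref{defn_msb-acceptance}, and $c\,1_{d,n}\in M_t^n$), is $\|\cdot\|_{K_T,n}$-closed (preimage of the closed set $A_t$ under the affine shift $m\mapsto X+m$, intersected with the closed space $M_t^n$), is $\Ft{t}$-decomposable, and satisfies $M_t^n$-translativity, $\geq_{K_T}$-monotonicity, conditional convexity and the conditional cone property, each of these read off term-by-term from the corresponding axiom of $A_t$. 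The core of the argument is then to realize $\rho_t(X)$ as the bounded measurable selectors of a single closed $\Ft{t}$-measurable random set contained in $\tilde{M}_t^n$: by a Hiai--Umegaki-type representation adapted to $(B_{K_T,n},\|\cdot\|_{K_T,n})$ in the spirit of Remark~\ref{rem_process-acceptance}, there is a unique such $\tilde{R}_t(X)\in\mathcal{S}_t^{d,n}$ with $S^{\infty}_{\tilde{R}}(t,X)=\rho_t(X)$, which gives the required identity.

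It then remains to verify that the map $\tilde{R}_t$ produced this way satisfies Definition~\ref{defn_risk_process}. Monotonicity follows from $\rho_t(X)\subseteq\rho_t(Y)$ whenever $Y\geq_{K_T}X$, itself a direct consequence of the axioms of $A_t$ (via $A_t+(K_T\cap B_{K_T,n})\subseteq A_t$); translativity, conditional convexity and conditional positive homogeneity transfer through the representation because bounded measurable selection commutes with Minkowski sums, scaling by $\LiF{t}$-variables and set inclusion; non-emptiness of $\tilde{R}_t(0)$ follows from $1_{d,n}\in\rho_t(0)=M_t^n\cap A_t$; and $\tilde{R}_t(0)[\omega]\neq\tilde{M}_t^n[\omega]$ almost surely is extracted from axiom~3 of Definition~\ref{defn_msb-acceptance} by a standard $\Ft{t}$-exhaustion argument. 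The main obstacle is the Hiai--Umegaki representation itself: the classical statement lives in $L^p$ spaces, so one has to check that the argument goes through for the norm $\|\cdot\|_{K_T,n}$ under the finite (rather than countable) decomposability convention adopted here, which is precisely the adaptation alluded to in Remark~\ref{rem_process-acceptance}.
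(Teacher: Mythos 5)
The paper itself offers no proof of this statement --- it is imported verbatim from \cite{TL12} --- so your proposal has to be judged against the machinery the paper does develop for the analogous correspondence, namely theorem~\ref{thm_decomposable} and remark~\ref{rem_process-acceptance}. Measured that way, your architecture is the right one and essentially the same as the paper's: the ``$\Leftarrow$'' direction via proposition~\ref{prop_process-acceptance} together with the identification $A_t=\lrcurly{X: 0\in S^{\infty}_{\tilde R}(t,X)}$ is correct, and the ``$\Rightarrow$'' direction via checking that $\rho_t(X)=\lrcurly{m\in M_t^n: X+m\in A_t}$ is closed, $\Ft{t}$-decomposable, monotone, translative and conditionally coherent, followed by a Hiai--Umegaki/Molchanov representation of closed decomposable sets by closed measurable random sets, is exactly the route the paper takes in theorem~\ref{thm_decomposable} for the $\LdpF{}$ setting. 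Deferring the $B_{K_T,n}$-version of that representation to the adaptation claimed in remark~\ref{rem_process-acceptance} is acceptable, since the paper does no more.

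There is, however, one step that does not go through as you wrote it: the verification that $\tilde R_t(0)[\omega]\neq \tilde M_t^n[\omega]$ for almost every $\omega$, which is part of axiom~1 of definition~\ref{defn_risk_process}. Axiom~3 of definition~\ref{defn_msb-acceptance} is a \emph{global} statement ($B_{K_T,n}\cap M_t^n\not\subseteq A_t$), whereas the property you must produce is \emph{pointwise almost everywhere}, and no ``$\Ft{t}$-exhaustion argument'' bridges the two. Concretely, take $\Omega=\{\omega_1,\omega_2\}$ with $\Ft{t}=2^{\Omega}$ and $A_t=\lrcurly{X\in B_{K_T,n}: X(\omega_2)\in K_T(\omega_2)}$: this satisfies every axiom of definition~\ref{defn_msb-acceptance} (closed, contains $\geq_{K_T}0$, misses $-1_{d,n}$, decomposable, conditionally convex cone), yet the random set it induces has $\tilde R_t(0)[\omega_1]=\tilde M_t^n[\omega_1]$ on an event of positive probability. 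What the argument actually requires is the pointwise non-acceptance axiom used in definition~\ref{defn_acceptance} of the set-optimization approach, namely $\tilde M_t[\omega]\cap(\R^d\setminus A_t[\omega])\neq\emptyset$ for a.e.\ $\omega$; you should either invoke that strengthened form or state explicitly that this is where the transcribed axioms are too weak to deliver the claim. A second, minor omission: the closedness of $M_t^n\cap B_{K_T,n}$ in the $\|\cdot\|_{K_T,n}$-topology, which you use to conclude that $\rho_t(X)$ is closed, is not automatic from the definition of the order-unit norm; it does hold, but only via the properness axiom (k3), which lets one write each coordinate functional $e_i$, $i>n$, as a difference of elements of $\plus{K_T}$, and this deserves a line of justification.
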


Below, we give the dual representation for coherent selector risk measures as done in theorem~4.1 and theorem~4.2
of~\cite{TL12}.  This dual representation can be viewed as the intersection of supporting halfspaces for the selector risk measure, which is the reason that coherence is needed in this approach.

From~\cite{TL12}, it is known that $(B_{K_T,n},\|\cdot\|_{K_T,n})$ is a Banach space, we will let $ba_{K_T,n}$ be the topological dual of $B_{K_T,n}$, and let $\plus{ba_{K_T,n}}$ denote the positive linear forms, that is \[\plus{ba_{K_T,n}} := \lrcurly{\phi \in ba_{K_T,n}: \phi(X) \geq 0 \; \forall X \geq_{K_T} 0}.\]

\begin{defn}[Definition~4.1 of~\cite{TL12}]
\label{defn_stable}
A set $\Lambda \subseteq ba_{K_T,n}$ is called \textbf{\emph{$\Ft{t}$-stable}} if for all $\lambda \in L^\infty_t(\R_{+})$ and $\phi \in \Lambda$, the linear form $\phi^{\lambda}: X \ni B_{K_T,n} \mapsto \phi(\lambda X)$ is an element of $\Lambda$.
\end{defn}

\begin{thm}[Theorem 4.1 of~\cite{TL12}]
\label{thm_msb-dual}
Let $\seq{\tilde{R}}$ be a sequence of $\seq{\mathcal{S}^{d,n}}$-valued mappings on $B_{K_T,n}$.  Then the following are equivalent:
\begin{enumerate}
\item $\seq{\tilde{R}}$ is a conditionally coherent risk process.
\item There exists a nonempty $\sigma(ba_{K_T,n},B_{K_T,n})$-closed subset $\mathcal{Q}_t \neq \{0\}$ of $\plus{ba_{K_T,n}}$ which is $\Ft{t}$-stable and satisfies the equality
    \begin{equation}
    S_{\tilde{R}}^{\infty}(t,X) = \lrcurly{u \in M_t^n \cap B_{K_T,n}: \phi(X + u) \geq 0 \; \forall \phi \in \mathcal{Q}_t}.
    \end{equation}
\end{enumerate}
\end{thm}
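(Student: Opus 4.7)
The plan is to exploit the bijection (Theorem~\ref{thm_process-primal}) between closed conditional acceptance sets in $B_{K_T,n}$ and conditionally coherent bounded selector risk measures, and then apply a bipolar argument in the Banach space $(B_{K_T,n},\|\cdot\|_{K_T,n})$ with its dual $ba_{K_T,n}$ to translate between the primal description of the acceptance set and a description via positive linear forms.

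For the easier direction $(2)\Rightarrow(1)$, given $\mathcal{Q}_t$ as stated, I would set $A_t := \{X \in B_{K_T,n}: \phi(X) \geq 0 \ \forall \phi \in \mathcal{Q}_t\}$, which is exactly the acceptance set of the candidate risk process via $S_{\tilde{R}}^{\infty}(t,X) = \{u \in M_t^n \cap B_{K_T,n}: X+u \in A_t\}$. The properties listed in Definition~\ref{defn_msb-acceptance} follow from elementary properties of $\mathcal{Q}_t$: $A_t$ is weak-closed (hence norm-closed) as an intersection of continuous linear inequalities; $A_t \supseteq \{X \geq_{K_T} 0\}$ because $\mathcal{Q}_t \subseteq \plus{ba_{K_T,n}}$; $A_t$ is a conditionally convex cone by linearity of each $\phi$; and $A_t$ is $\Ft{t}$-decomposable because for any partition $(\Omega_t^i)_{i=1}^N \subseteq \Ft{t}$ and $X_i \in A_t$, $\Ft{t}$-stability gives $\phi^{1_{\Omega_t^i}} \in \mathcal{Q}_t$ and hence $\phi\lrparen{\sum_i 1_{\Omega_t^i} X_i} = \sum_i \phi^{1_{\Omega_t^i}}(X_i) \geq 0$. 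Non-triviality $B_{K_T,n} \cap M_t^n \not\subseteq A_t$ follows from $\mathcal{Q}_t \neq \{0\}$. Theorem~\ref{thm_process-primal} then provides a conditionally coherent risk process with the claimed representation.

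For the main direction $(1)\Rightarrow(2)$, define $A_t := \{X \in B_{K_T,n}: 0 \in \tilde{R}_t(X)\}$; Proposition~\ref{prop_process-acceptance} says $A_t$ is a conditional acceptance set, so in particular a proper closed conditionally convex cone. Set $\mathcal{Q}_t := \{\phi \in ba_{K_T,n}: \phi(X) \geq 0 \ \forall X \in A_t\}$, the polar of $A_t$. Then $\mathcal{Q}_t$ is $\sigma(ba_{K_T,n},B_{K_T,n})$-closed as an intersection of weak*-closed half-spaces; $\mathcal{Q}_t \subseteq \plus{ba_{K_T,n}}$ because $A_t \supseteq \{X \geq_{K_T} 0\}$; and $\mathcal{Q}_t \neq \{0\}$ by Hahn--Banach separation of some $X_0 \in B_{K_T,n} \setminus A_t$ (which exists by non-triviality of $A_t$) from the closed convex cone $A_t$. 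The bipolar theorem then gives $A_t = \{X \in B_{K_T,n}: \phi(X) \geq 0 \ \forall \phi \in \mathcal{Q}_t\}$, and combining with Theorem~\ref{thm_process-primal} yields the desired dual formula for $S_{\tilde{R}}^{\infty}(t,X)$.

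The main obstacle is verifying $\Ft{t}$-stability of $\mathcal{Q}_t$, which amounts to showing that $A_t$ is invariant under multiplication by every $\lambda \in \LiF{t}$ with $\lambda \geq 0$. For a simple $\lambda = \sum_{i=1}^N c_i 1_{\Omega_t^i}$ with $c_i \geq 0$ and $(\Omega_t^i)_{i=1}^N \subseteq \Ft{t}$ a partition, the indices with $c_i > 0$ are handled by conditional positive homogeneity (applied to the constant $c_i \in L^\infty_t(\R_{++})$), while the $c_i = 0$ case is absorbed into the $\Ft{t}$-decomposable aggregation using $0 \in A_t$; summing gives $\lambda X \in A_t$. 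For general bounded nonnegative $\Ft{t}$-measurable $\lambda$, approximate uniformly by simple $\lambda_n \uparrow \lambda$ and use that $\|\lambda X - \lambda_n X\|_{K_T,n} \leq \|\lambda - \lambda_n\|_\infty \|X\|_{K_T,n}$ (since the $K_T$-order is preserved under multiplication by nonnegative $\Ft{t}$-measurable bounded variables) together with closure of $A_t$ to conclude $\lambda X \in A_t$. Then $\phi^\lambda(X) = \phi(\lambda X) \geq 0$ for every $X \in A_t$, so $\phi^\lambda \in \mathcal{Q}_t$. Careful bookkeeping of the interaction between conditional positive homogeneity (which requires strict positivity) and $\Ft{t}$-decomposability on finite partitions, now that the latter has been weakened relative to~\cite{TL12} as noted in Remark~\ref{rem_process-acceptance}, is the delicate point.
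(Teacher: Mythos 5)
The paper itself gives no proof of this statement---it is quoted verbatim as Theorem~4.1 of~\cite{TL12}---but your proposal reconstructs what is essentially the argument of that reference: pass to the acceptance set, take its one-sided polar cone in $ba_{K_T,n}$, apply the bipolar theorem to the norm-closed convex cone $A_t$, and derive $\Ft{t}$-stability from $\Ft{t}$-decomposability together with conditional positive homogeneity (handling the zero pieces of a simple $\lambda$ separately, then passing to the uniform limit), so the approach is the expected one and is sound. Two small points you gloss over: in $(2)\Rightarrow(1)$ the \emph{conditional} convexity and conditional positive homogeneity of $A_t$ also require the $\Ft{t}$-stability of $\mathcal{Q}_t$ (via $\phi(\lambda X+(1-\lambda)Y)=\phi^{\lambda}(X)+\phi^{1-\lambda}(Y)$), not mere linearity of each $\phi$; and the non-triviality $B_{K_T,n}\cap M_t^n\not\subseteq A_t$ needs the observation that any $0\neq\phi\in\plus{ba_{K_T,n}}$ satisfies $\phi(1_{d,n})>0$ (since $\|X\|_{K_T,n}1_{d,n}\geq_{K_T}X$ for all $X$), whence $-1_{d,n}\in (M_t^n\cap B_{K_T,n})\setminus A_t$.
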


We finish the discussion of the dual representation by considering the case when the risk process additionally satisfies a ``Fatou property'' as defined below.
\begin{defn}
\label{defn_msb-fatou}
A sequence $\seq{\tilde{R}}$ of $\seq{\mathcal{S}^{d,n}}$-valued mappings on $B_{K_T,n}$ is said to satisfy the \textbf{\emph{Fatou property}} if for all $X \in B_{K_T,n}$ and all times $t$
\[\limsup_{n \to +\infty} S_{\tilde{R}}^{\infty}(t,X_n) \subseteq S_{\tilde{R}}^{\infty}(t,X)\]
for any bounded sequence $(X_m)_{m \in \mathbb{N}} \subseteq B_{K_T,n}$ which converges to $X$ in probability.
\end{defn}
Note that in the above definition the limit superior is defined to be $\limsup_{n \to +\infty} B_n
= \cl\bigcup_{n \in \mathbb{N}} \bigcap_{m \geq n} B_m$ for a sequence of sets $(B_n)_{n \in \mathbb{N}}$.

For the following theorem we assume two additional properties on the convex cone $K_T$:
\renewcommand{\theenumi}{k\arabic{enumi}}
\begin{enumerate}
\setcounter{enumi}{3}
\item for almost every $\omega \in \Omega$: $\R^d_+ \backslash \{0\} \subseteq \interior[K_T(\omega)]$ or equivalently $\plus{K_T(\omega)} \backslash \{0\} \subseteq \interior[\R^d_+]$;
\item $K_T$ and $\plus{K_T}$ are both generated by a finite number of linearly independent and bounded generators denoted respectively by $(\xi_i)_{i = 1}^N$ and $(\plus{\xi_i})_{i = 1}^{N^+}$.
\end{enumerate}
\renewcommand{\theenumi}{\arabic{enumi}}

Let $L^{1,n}(\plus{K_T}) := \lrcurly{Z \in \LdpK{0}{}{\plus{K_T}}: \trans{1_{d,n}} Z \in \LoF{}}$.  In the following theorem we will use $L^{1,n}(\plus{K_T})$ as a dual space for $B_{K_T,n}$.  For $Z \in L^{1,n}(\plus{K_T})$, the linear form $\phi_Z(X) := \E{\trans{Z}X}$ belongs to $\plus{ba_{K_T,n}}$.  The norm $\|Z\|_{d,n} :=  \sup\lrcurly{|\E{\trans{Z}X}|: X \in B_{K_T,n}, \|X\|_{K_T,n} \leq 1}$ is the dual norm for any $Z \in L^{1,n}(\plus{K_T})$.

\begin{thm}[Theorem 4.2 of~\cite{TL12}]
\label{thm_msb-dual_fatou}
Let $\seq{\tilde{R}}$ be a conditionally coherent risk process on $B_{K_T,n}$ and let $K_T$ satisfy property $k1-k5$.  The following are equivalent:
\begin{enumerate}
\item For every time $t \in [0,T]$, there exists a closed conditional cone $\{0\} \neq \mathcal{Q}_t^1 \subseteq L^{1,n}(\plus{K_T})$ (in the norm topology, with norm $\|\cdot\|_{d,n}$) such that for any $X \in B_{K_T,n}$
    \begin{equation}
    S_{\tilde{R}}^{\infty}(t,X) = \lrcurly{u \in M_t^n \cap B_{K_T,n}: \forall Z \in \mathcal{Q}_t^1: \E{\trans{Z}(X + u)} \geq 0}.
    \end{equation}
\item $\seq{\tilde{R}}$ satisfies the Fatou property.
\item $C_t := \lrcurly{X \in B_{K_T,n}: 0 \in \tilde{R}_t(X)}$ is $\sigma(B_{K_T,n},L^{1,n}(\plus{K_T}))$-closed.
\end{enumerate}
\end{thm}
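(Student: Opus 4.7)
The plan is to establish the cycle of implications $(1)\Rightarrow(3)\Rightarrow(2)\Rightarrow(1)$. This mirrors Delbaen's classical equivalence between the Fatou property and weak* closedness for scalar coherent risk measures on $L^\infty$, now transposed to the nonstandard Banach space $(B_{K_T,n},\|\cdot\|_{K_T,n})$ paired with the predual candidate $L^{1,n}(\plus{K_T})$. The implication $(1)\Rightarrow(3)$ is nearly formal: by the primal relation (Theorem~\ref{thm_process-primal}) the acceptance set $C_t$ coincides with $A_t\cap B_{K_T,n}$, and specializing the assumed dual representation to $u=0$ identifies $C_t=\{X\in B_{K_T,n}:\E{\trans{Z}X}\geq 0\text{ for all }Z\in\mathcal{Q}_t^1\}$, an intersection of $\sigma(B_{K_T,n},L^{1,n}(\plus{K_T}))$-closed half-spaces.

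For $(3)\Rightarrow(2)$ I would exploit a pointwise domination estimate. If $\|X\|_{K_T,n}\leq c$ then both $c\,1_{d,n}-X$ and $X+c\,1_{d,n}$ belong to $K_T$, so for any $Z\in\plus{K_T}$ one gets $|\trans{Z}X|\leq c\,\trans{Z}1_{d,n}$ pointwise; the majorant is integrable when $Z\in L^{1,n}(\plus{K_T})$. Hence any norm-bounded sequence $(X_m)\subseteq B_{K_T,n}$ converging to $X$ in probability automatically converges in the topology $\sigma(B_{K_T,n},L^{1,n}(\plus{K_T}))$ by dominated convergence. If $u\in\bigcap_{m\geq N}S_{\tilde{R}}^\infty(t,X_m)$, then $X_m+u\in C_t$, the translated sequence is still norm-bounded and converges in probability to $X+u$, and weak* closedness forces $u\in S_{\tilde{R}}^\infty(t,X)$. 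Since $S_{\tilde{R}}^\infty(t,X)$ is norm-closed (preimage of $A_t$ under a continuous translation), the norm-closure in the definition of $\limsup$ preserves the inclusion and yields the Fatou property.

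The main obstacle is $(2)\Rightarrow(1)$, which I would split into two sub-steps. First, use Fatou to upgrade the norm-closedness of $A_t$ to $\sigma(B_{K_T,n},L^{1,n}(\plus{K_T}))$-closedness. By Krein--\v{S}mulian this reduces to verifying weak* closedness on each norm-ball $\{\|X\|_{K_T,n}\leq c\}$; for a net $(X_\alpha)$ there with weak* limit $X$, the plan is to pass to convex combinations via Mazur (convexity of $A_t$ keeps them in $C_t$) and then invoke assumption $k5$ — the polyhedral structure and finite bounded generators of $\plus{K_T}$ — to test against indicator-type forms $1_F\plus{\xi_i}$ for $F\in\mathcal{F}$, reducing weak* convergence to finitely many $\sigma(L^\infty,L^1)$-limits on scalar coordinates that admit an a.s.-convergent subsequence; Fatou then delivers $X\in C_t$. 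Second, once $C_t$ is $\sigma(B_{K_T,n},L^{1,n}(\plus{K_T}))$-closed, apply the set-valued bipolar / Hahn--Banach argument in this dual pairing to exhibit a conditional cone $\mathcal{Q}_t^1\subseteq L^{1,n}(\plus{K_T})$ whose polar recovers $C_t$, with the $\Ft{t}$-stability of the abstract dual set produced by Theorem~\ref{thm_msb-dual} translating to the conditional-cone property of $\mathcal{Q}_t^1$. The hard part is the Mazur/extraction passage: weak* convergence in the $L^{1,n}(\plus{K_T})$-pairing is strictly weaker than convergence in probability in general, and the finite-generator structure $k4$--$k5$ is exactly what enables the reduction, which is why those assumptions on $K_T$ cannot be dropped.
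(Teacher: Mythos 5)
This theorem is quoted verbatim from~\cite{TL12} (their Theorem~4.2) and the present paper gives no proof of it, so there is no in-paper argument to compare your proposal against; I can only assess the proposal on its own terms. Your cycle $(1)\Rightarrow(3)\Rightarrow(2)$ is sound and essentially complete. For $(1)\Rightarrow(3)$, note only that $0 \in \tilde{R}_t(X)$ must be read as $0$ being an a.s.\ selector, so that $C_t = \lrcurly{X : 0 \in S_{\tilde{R}}^{\infty}(t,X)}$; then setting $u=0$ in the dual representation does exhibit $C_t$ as an intersection of $\sigma(B_{K_T,n},L^{1,n}(\plus{K_T}))$-closed half-spaces. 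For $(3)\Rightarrow(2)$, the domination $|\trans{Z}X| \leq c\,\trans{Z}1_{d,n}$ with integrable majorant is correct and, combined with convergence in probability, upgrades norm-bounded convergence in probability to convergence in the pairing; translativity gives $X_m + u \in C_t$ and the norm-closedness of $S_{\tilde{R}}^{\infty}(t,X)$ handles the closure in the definition of $\limsup$. This direction is fine.

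The direction $(2)\Rightarrow(1)$ is where your sketch leaves the two load-bearing facts unestablished. First, Krein--\v{S}mulian is a statement about the weak* topology of a \emph{dual} Banach space, so invoking it presupposes that $(B_{K_T,n},\|\cdot\|_{K_T,n})$ is (isomorphic to) the topological dual of $(L^{1,n}(\plus{K_T}),\|\cdot\|_{d,n})$; this duality is precisely the nontrivial structural input that assumptions $k4$--$k5$ are designed to deliver, and it must be proved, not assumed. Second, Mazur's lemma converts \emph{weak} convergence into norm-convergent convex combinations; for a net converging in the weak* topology on a ball of $B_{K_T,n}$ one cannot apply it directly. The standard repair (embed the ball into an $L^1$- or $L^2$-type space via the finitely many bounded generators $\plus{\xi_i}$, where the relevant weak closure of the convex set $C_t \cap \lrcurly{\|X\|_{K_T,n} \leq c}$ coincides with its strong closure, then extract an a.s.\ convergent subsequence and apply the Fatou property) is what you gesture at, but the reduction from a net to a sequence and the verification that the extracted limit is the original $X$ are exactly the delicate points. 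Finally, in the bipolar step you should verify that the polar cone of $C_t$ may be taken inside $L^{1,n}(\plus{K_T})$ (positivity of the separating functionals follows from $X \geq_{K_T} 0 \Rightarrow X \in C_t$) and that the conditional-cone property of $\mathcal{Q}_t^1$ follows from $\Ft{t}$-decomposability of $C_t$; both are plausible but asserted rather than argued. In short: the architecture is the right one, but the proposal as written does not yet constitute a proof of the hard implication.
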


We conclude this section by discussing time consistency properties as they were defined in the measurable selector approach in \cite{TL12}.  As in the set-optimization  approach in the previous section one would like to define a property that is equivalent to a recursive form.  For this reason we will extend the risk process to be a function of a set.  For a set $\X \subseteq B_{K_T,n}$, let us define  $\tilde{R}_t(\X) \in \mathcal{S}_t^{d,n}$ via its selectors, that is
\[\LdpK{0}{t}{\tilde{R}_t(\X)} \cap B_{K_T,n} = \cl \env_{\Ft{t}} \bigcup_{X \in \X} S_{\tilde{R}}^{\infty}(t,X) =: S_{\tilde{R}}^{\infty}(t,\X),\]
where, for any $\Gamma \subseteq B_{K_T,n}$, $\env_{\Ft{t}} \Gamma$ denotes the smallest $\Ft{t}$-decomposable set (see
definition~\ref{defn_msb-acceptance}) which contains $\Gamma$.
This means that the measurable selectors of the risk process of a set are defined by the closed and $\Ft{t}$-decomposable version of the pointwise union.  Note that if $\X = \{X\}$ then this reduces to the prior definition on portfolios.
The risk process of a set is defined in this way because the selection risk measure must be closed and
$\Ft{t}$-decomposable-valued to guarantee the existence of an $\Ft{t}$-measurable random set $\tilde{R}_t(\X)$ such that $S_{\tilde{R}}^{\infty}(t,\X) = \LdpK{0}{t}{\tilde{R}_t(\X)} \cap B_{K_T,n}$.
\begin{defn}
\label{defn_tc}
A risk process $\seq{\tilde{R}}$ is called \textbf{\emph{consistent in time}} if for any $t,s \in [0,T]$ with $t < s$ and $X \in B_{K_T,n}$, ${\bf Y} \subseteq B_{K_T,n}$
\[\tilde{R}_{s}(X) \subseteq \tilde{R}_{s}({\bf Y}) \Rightarrow \tilde{R}_t(X) \subseteq \tilde{R}_t({\bf Y}).\]
\end{defn}

The following theorem gives equivalent characterizations of consistency in time, the last one being a recursion in the spirit of Bellman's principle.

\begin{thm}[Theorem 5.1 of~\cite{TL12}]
\label{thm_tc_equiv}
A normalized risk process $\seq{\tilde{R}}$ on $B_{K_T,n}$ is consistent in time if any of the following equivalent conditions hold:
\begin{enumerate}
\item If $\tilde{R}_{s}(X) \subseteq \tilde{R}_{s}({\bf Y})$ for $X \in B_{K_T,n}$ and ${\bf Y} \subseteq B_{K_T,n}$, then $\tilde{R}_t(X) \subseteq \tilde{R}_t({\bf Y})$ for $t \leq s \leq T$.
\item If $\tilde{R}_{s}(X) = \tilde{R}_{s}({\bf Y})$ for $X \in B_{K_T,n}$ and ${\bf Y} \subseteq B_{K_T,n}$, then $\tilde{R}_t(X) = \tilde{R}_t({\bf Y})$ for $t \leq s \leq T$.
\item For all $X \in B_{K_T,n}$, $S_{\tilde{R}}^{\infty}(t,X) = S_{\tilde{R}}^{\infty}(t,-S_{\tilde{R}}^{\infty}(s,X))$ for $t \leq s \leq T$.
\end{enumerate}
\end{thm}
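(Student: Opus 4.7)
The plan is to prove the cycle $(1) \Rightarrow (2) \Rightarrow (3) \Rightarrow (1)$, exploiting throughout the envelope description $S_{\tilde{R}}^{\infty}(t,\mathbf{Y}) = \cl \env_{\Ft{t}} \bigcup_{Y \in \mathbf{Y}} S_{\tilde{R}}^{\infty}(t,Y)$, translativity of $\tilde{R}_t$, normalization, and the fact that $0 \in \tilde{R}_t(0)$ almost surely, which holds because conditional positive homogeneity forces $\tilde{R}_t(0)$ to be a closed cone containing the origin.

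For $(1) \Rightarrow (2)$, equality splits into two inclusions. The forward inclusion $\tilde{R}_s(X) \subseteq \tilde{R}_s(\mathbf{Y})$ is exactly the hypothesis of (1). For the reverse inclusion $\tilde{R}_s(\mathbf{Y}) \subseteq \tilde{R}_s(X)$, the envelope description forces $\tilde{R}_s(Y) \subseteq \tilde{R}_s(X)$ for every $Y \in \mathbf{Y}$ individually; applying (1) to each singleton $Y$ against the set $\{X\}$ gives $\tilde{R}_t(Y) \subseteq \tilde{R}_t(X)$, and since $\tilde{R}_t(X)$ is itself closed and $\Ft{t}$-decomposable, passing to the envelope over $Y \in \mathbf{Y}$ preserves the inclusion to yield $\tilde{R}_t(\mathbf{Y}) \subseteq \tilde{R}_t(X)$.

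For $(2) \Rightarrow (3)$, the crux is to establish the identity $\tilde{R}_s(X) = \tilde{R}_s\bigl(-S_{\tilde{R}}^{\infty}(s,X)\bigr)$, after which (2) applied with $\mathbf{Y} = -S_{\tilde{R}}^{\infty}(s,X)$ directly produces (3). For ``$\supseteq$'', each $m \in S_{\tilde{R}}^{\infty}(s,X)$ satisfies $\tilde{R}_s(-m) = \tilde{R}_s(0) + m$ by translativity; combining $m \in \tilde{R}_s(X)$ with normalization yields $m + \tilde{R}_s(0) \subseteq \tilde{R}_s(X) + \tilde{R}_s(0) = \tilde{R}_s(X)$, and the envelope over these sets stays inside the already closed, $\Ft{s}$-decomposable $\tilde{R}_s(X)$. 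For ``$\subseteq$'', each selector $u \in S_{\tilde{R}}^{\infty}(s,X)$ lies in $\tilde{R}_s(-u) = \tilde{R}_s(0) + u$ (using $0 \in \tilde{R}_s(0)$), hence in $S_{\tilde{R}}^{\infty}\bigl(s,-S_{\tilde{R}}^{\infty}(s,X)\bigr)$ by the envelope description.

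For $(3) \Rightarrow (1)$, suppose $\tilde{R}_s(X) \subseteq \tilde{R}_s(\mathbf{Y})$, so $-S_{\tilde{R}}^{\infty}(s,X) \subseteq -S_{\tilde{R}}^{\infty}(s,\mathbf{Y})$. Monotonicity of $\mathbf{Y} \mapsto S_{\tilde{R}}^{\infty}(t,\mathbf{Y})$, evident from the envelope definition, combined with the singleton form of (3) yields $S_{\tilde{R}}^{\infty}(t,X) = S_{\tilde{R}}^{\infty}\bigl(t,-S_{\tilde{R}}^{\infty}(s,X)\bigr) \subseteq S_{\tilde{R}}^{\infty}\bigl(t,-S_{\tilde{R}}^{\infty}(s,\mathbf{Y})\bigr)$. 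It remains to establish the set-valued lift $S_{\tilde{R}}^{\infty}\bigl(t,-S_{\tilde{R}}^{\infty}(s,\mathbf{Y})\bigr) \subseteq S_{\tilde{R}}^{\infty}(t,\mathbf{Y})$, which via the pointwise form of (3) reduces to showing that every selector of $\tilde{R}_s(\mathbf{Y})$ can be traced back through $\Ft{s}$-decomposable pastings of selectors from individual $\tilde{R}_s(Y)$. The main obstacle is exactly this commutation: one must verify that the closed, $\Ft{s}$-decomposable envelope operation in the pre-image at time $s$ commutes with the risk assessment at time $t$, and it is here that $\Ft{t}$-decomposability of the acceptance structure together with conditional positive homogeneity of $\tilde{R}_t$ are essential.
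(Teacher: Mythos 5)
A preliminary remark: the paper does not prove this statement --- it is quoted verbatim as Theorem~5.1 of \cite{TL12} --- so there is no in-paper proof to compare against, and I can only judge your argument on its own terms.

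Your implications $(1)\Rightarrow(2)$ and $(2)\Rightarrow(3)$ are essentially sound, with one caveat: you justify $0\in\tilde{R}_s(0)$ by conditional positive homogeneity, which is not among the hypotheses (the theorem assumes only a \emph{normalized} risk process). You would instead need to extract $0\in\tilde{R}_s(0)$ from normalization, i.e.\ from $\tilde{R}_s(0)+\tilde{R}_s(0)=\tilde{R}_s(0)$ together with closedness, monotonicity and finiteness at zero; this is true but requires an argument you do not give. The genuine gap is in $(3)\Rightarrow(1)$. You correctly reduce the claim to the inclusion $S_{\tilde{R}}^{\infty}(t,-S_{\tilde{R}}^{\infty}(s,{\bf Y}))\subseteq S_{\tilde{R}}^{\infty}(t,{\bf Y})$, but then merely announce that ``one must verify'' that the closed $\Ft{s}$-decomposable envelope in the pre-image commutes with the time-$t$ evaluation --- you never verify it, and this is precisely the substantive content of the implication. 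Expanding both sides via the singleton form of (3) applied to each $Y\in{\bf Y}$, the two sides differ only in that one union is indexed by $\bigcup_{Y\in{\bf Y}}S_{\tilde{R}}^{\infty}(s,Y)$ and the other by its closed $\Ft{s}$-decomposable envelope. A pasting $1_B Z_1+1_{B^c}Z_2$ uses $B\in\Ft{s}$, not $B\in\Ft{t}$, so the $\Ft{t}$-decomposability of the values of $\tilde{R}_t$ does not absorb such elements; one needs a separate locality or pasting lemma for $\tilde{R}_t$ with respect to $\Ft{s}$-measurable partitions, plus a limit argument to handle the closure --- in effect the set-indexed recursion $S_{\tilde{R}}^{\infty}(t,{\bf Y})=S_{\tilde{R}}^{\infty}(t,-S_{\tilde{R}}^{\infty}(s,{\bf Y}))$. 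Until that lemma is stated and proved, the cycle $(1)\Rightarrow(2)\Rightarrow(3)\Rightarrow(1)$ is not closed.
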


\subsection{Set-valued portfolio approach}
\label{sec_fun-sets}

The approach for considering sets of portfolios, so called set-valued portfolios, as the argument of a set-valued risk measure was proposed in~\cite{CM13}. The reasoning for considering set-valued portfolios is to take the risk, not only of a portfolio $X$, but of
every possible portfolio that $X$ can be traded for in the market, into account.  We will denote by $\X$ the random set of
portfolios for which $X \in \LdpF{}$ can be exchanged.  The concept of set-valued portfolios appears naturally when trading opportunities in the market are taken into account.  Below we provide two examples, one in which no trading is allowed and another in which any possible trade can be used.  There are other examples provided in~\cite{CM13} on how a set-valued portfolio can be obtained, and the definition of the risk measure is independent of the method used to construct set-valued portfolios.
\begin{ex}
\label{ex_fun-sets_no-transaction}
The random mapping $\X = X + \R^d_-$ for a random vector $X \in \LdpF{}$ describes the case when no exchanges are allowed.
\end{ex}

\begin{ex}(Example 2.2 of~\cite{CM13})
The random mapping $\X = X + {\bf K}$ for a random vector $X \in \LdpF{}$ and a lower convex (random) set ${\bf K}$, such that $\LdpK{p}{}{{\bf K}}$ is closed, defines the set-valued portfolios related to the exchanges defined by ${\bf K}$.  If $K$ is a solvency cone (see e.g.~\cite{K99,S04,KS09}) or the sum of solvency cones at different time points, then ${\bf K} = -K$ is an exchange cone, and the associated random mapping defines a set-valued portfolio.
The setting of example~\ref{ex_fun-sets_no-transaction} corresponds to the
case where ${\bf K} = \R^d_-$.
\end{ex}

We will slightly adjust the definitions given in~\cite{CM13} to include the dynamic extension of such risk measures,  to incorporate  the set of eligible portfolios $M_t$, and go beyond the coherent case.

Let $\mathcal{S}_T^d$ denote the set of $\Ft{}$-random sets in $\R^d$ (as in section~\ref{sec_measurable} above). Let $\bar{\mathcal {S}}_T^d\subseteq \mathcal{S}_T^d$ be those random sets that are nonempty, closed, convex and lower, that is for $X\in\X$ also $Y\in\X$ whenever $X - Y \in \R^d_+$ $\P$-a.s. As in~\cite{CM13}, we will consider set-valued portfolios $\X \in \bar{\mathcal {S}}_T^d$. By proposition~2.1.5 and theorem 2.1.6 in \cite{M05}, the collection of $p$-integrable selectors of $\X$, that is $\LdpK{p}{}{\X}$, is a nonempty, closed, ($\Ft{}$-)conditionally convex, lower and $\Ft{}$-decomposable set,
which is an element of $\mathcal{G}(\LdpF{};\LdpF{-})$. In~\cite{CM13}, $\bar{\mathcal {S}}_T^d$ is used as the pre-image set, one could also use the family of sets of selectors $\{\LdpK{p}{}{\X}:\X \in \bar{\mathcal {S}}_T^d \}\subset \mathcal{G}(\LdpF{};\LdpF{-})$ as the pre-image set, which is particular useful when dynamic risk measures are considered and recursions due to multi-portfolio time consistency become important.
Recall that $\mathcal{P}(M_t;M_{t,+}) := \lrcurly{D \subseteq M_t: D = D + M_{t,+}}$ denotes the set of upper sets, which will be used as the image space for the risk measures. Closed (conditionally) convex risk measures map into $\mathcal{G}(M_t;M_{t,+})$.

In the following definition for convex risk measures we consider a modified version of set-addition used  in~\cite{CM13} which
is denoted by $\oplus$.  For two random sets $\X,{\bf Y} \in \mathcal{S}_T^d$, $\X \oplus {\bf Y} \in \mathcal{S}_T^d$ is the
random set defined by the closure of $\X[\omega] + {\bf Y}[\omega]$ for all $\omega \in \Omega$.
Note that, by proposition 2.1.4 in~\cite{M05}, if the probability space $(\Omega,\Ft{},\P)$ is non-atomic and $p \in [1,+\infty)$ then
$\LdpK{p}{}{\X \oplus {\bf Y}} = \cl\lrsquare{\LdpK{p}{}{\X} + \LdpK{p}{}{{\bf Y}}}$.

\begin{defn}[Definition 2.9 of~\cite{CM13}]
\label{defn_fun-sets_rm}
A function ${\bf R}_t:\bar{\mathcal {S}}_T^d\to \mathcal{P}(M_t;M_{t,+})$
is called a \textbf{\emph{set-valued conditional risk measure}} if it satisfies the following conditions.
\begin{enumerate}
\item Cash invariance: ${\bf R}_t(\X + m) = {\bf R}_t(\X) -
m$ for any $\X$ and $m \in M_t$.
\item Monotonicity: Let $\X \subseteq {\bf Y}$ almost surely,
then ${\bf R}_t({\bf Y}) \supseteq {\bf R}_t(\X)$.
\end{enumerate}

The risk measure ${\bf R}_t$ is said to be \textbf{\emph{closed-valued}} if its values are closed sets.

The risk measure ${\bf R}_t$ is said to be \textbf{\emph{(conditionally) convex}} if for every set-valued portfolio $\X,{\bf Y}$ and $\lambda \in [0,1]$
(respectively $\lambda \in \LiF{t}$ such that $0 \leq \lambda \leq 1$)
\[{\bf R}_t(\lambda \X \oplus (1-\lambda) {\bf Y}) \supseteq \lambda {\bf R}_t(\X) + (1-\lambda) {\bf R}_t({\bf Y}).\]

The risk measure ${\bf R}_t$ is said to be \textbf{\emph{(conditionally) positive homogeneous}} if for every $\X$ and $\lambda >
0$ (respectively $\lambda \in L^\infty_t(\R_{++})$)
\[{\bf R}_t(\lambda \X) = \lambda {\bf R}_t(\X).\]

The risk measure ${\bf R}_t$ is said to be \textbf{\emph{(conditionally) coherent}} if it is (conditionally) convex and (conditionally) positive homogeneous.
\end{defn}

The closed-valued variant of $R_t$ is denoted by $\bar{\bf R}_t(\X) = \cl ({\bf R}_t(\X))$ for every set-valued portfolio $\X\in\bar{\mathcal {S}}_T^d$.

A set-valued portfolio $\X$ is acceptable if $0 \in {\bf R}_t(\X)$, i.e. we can define the acceptance set ${\bf A}_t \subseteq\bar{\mathcal {S}}_T^d$ by ${\bf A}_t := \lrcurly{\X: 0 \in {\bf R}_t(\X)}$.  And a primal representation for the risk measures can be given by the usual definition ${\bf R}_t(\X) = \lrcurly{u \in M_t: \X + u \in {\bf A}_t}$ due to cash invariance.

We will now consider a subclass of set-valued conditional risk measures  presented in~\cite[section 3]{CM13} that are constructed using a scalar dynamic risk measure  for each component. For the remainder of this section we will consider the case when $M_t = \LdpF{t}$.  In~\cite{CM13}, only (scalar) law invariant coherent risk measures were considered for this approach, we will consider the more general case.

Let $\rho_t^1,...,\rho_t^d$ be dynamic risk measures defined on $\LpF{}$ with values in $L^p_t(\R \cup \{+\infty\})$.  For a random vector $X = \transp{X_1,...,X_d} \in \LdpF{}$ we define
\[{\boldsymbol\rho}_t(X) = \transp{\rho_t^1(X_1),...,\rho_t^d(X_d)}.\]
We say the vector $X \in \LdpF{}$ is \textbf{\emph{acceptable}} if ${\boldsymbol\rho}_t(X) \leq 0$, i.e. $\rho_t^i(X_i) \leq 0$ for all $i = 1,...,d$.  We say the set-valued portfolio $\X$ is \textbf{\emph{acceptable}} if there exists a $Z \in \LdpK{p}{}{\X}$ such that ${\boldsymbol\rho}_t(Z) \leq 0$.

\begin{defn}[Definition 3.3 of~\cite{CM13}]
\label{defn_constructive}
The \textbf{\emph{constructive conditional risk measure}} ${\bf R}_t: \bar{\mathcal {S}}_T^d\to \mathcal{P}(\LdpF{t};\LdpF{t,+})$ is defined for any set-valued portfolio $\X$ by \[{\bf R}_t(\X) = \lrcurly{u \in \LdpF{t}: \X + u \text{ is acceptable}},\]
which is equivalently to
\begin{equation}
\label{eq_constructive}
{\bf R}_t(\X) = \bigcup_{Z \in \LdpK{p}{}{\X}} ({\boldsymbol\rho}_t(Z) + \LdpF{t,+}).
\end{equation}
The closed-valued variant is defined by $\bar{\bf R}_t(\X) := \cl ({\bf R}_t(\X))$ for every $\X\in\bar{\mathcal {S}}_T^d$.
\end{defn}

In \cite{CM13}, the constructive (static) risk measures have been called selection risk measures, we modified the name here in accordance to the title of the paper  \cite{CM13} to avoid confusion with the measurable selector approach from section~\ref{sec_measurable}.

\begin{ex}
Consider the no-exchange set-valued portfolios from example~\ref{ex_fun-sets_no-transaction}.  Then the constructive conditional risk measure associated with any vector of scalar conditional risk measures is given by
\[{\bf R}_t(\X) = {\boldsymbol\rho}_t(X) + \LdpF{t,+}.\]
\end{ex}

\begin{thm}[Theorem 3.4 of~\cite{CM13}]
\label{thm_constructive}
Let ${\boldsymbol\rho}_t$ be a vector of dynamic risk measures, then ${\bf R}_t$ and $\bar{\bf R}_t$ given in definition~\ref{defn_constructive}   are both set-valued conditional risk measures.

If ${\boldsymbol\rho}_t$ is convex (conditionally convex, positive homogeneous, conditionally positive homogeneous, law invariant convex on an non-atomic probability space), then ${\bf R}_t$ and $\bar{\bf R}_t$ are convex (conditionally convex, positive homogeneous, conditionally positive homogeneous, law invariant convex on an non-atomic probability space).
\end{thm}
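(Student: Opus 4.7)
The plan is to verify, directly from the union-over-selectors formula~\eqref{eq_constructive}, that ${\bf R}_t$ and $\bar{\bf R}_t$ satisfy the two axioms of Definition~\ref{defn_fun-sets_rm} (cash invariance and monotonicity), and that each of the five structural properties (convexity, conditional convexity, positive homogeneity, conditional positive homogeneity, law-invariant convexity on a non-atomic space) is transferred from the scalar components $\rho_t^1,\ldots,\rho_t^d$ to ${\bf R}_t$ and $\bar{\bf R}_t$.  Everything reduces to applying the corresponding scalar property componentwise to an arbitrary selector, then observing that upper sets and unions of upper sets behave well under the required operations.

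For cash invariance I would first note that $\LdpK{p}{}{\X+m}=\LdpK{p}{}{\X}+m$ for any $m\in M_t=\LdpF{t}$, since shifting a random set by a fixed (in $\omega$) $\Ft{t}$-measurable vector shifts its selectors.  Componentwise scalar cash invariance then yields
\begin{equation*}
{\bf R}_t(\X+m) = \bigcup_{Z\in\LdpK{p}{}{\X}}\lrparen{{\boldsymbol\rho}_t(Z+m) + \LdpF{t,+}} = \bigcup_{Z\in\LdpK{p}{}{\X}}\lrparen{{\boldsymbol\rho}_t(Z) - m + \LdpF{t,+}} = {\bf R}_t(\X) - m.
\end{equation*}
Monotonicity is equally direct: $\X\subseteq{\bf Y}$ almost surely forces $\LdpK{p}{}{\X}\subseteq\LdpK{p}{}{{\bf Y}}$, and a union over a larger index set is larger.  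Both axioms pass to $\bar{\bf R}_t$ because translation in $\LdpF{t}$ is a homeomorphism, so closure commutes with the shift $u\mapsto u-m$, and the closure operator respects inclusion.

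For convexity, pick $u_i\in{\bf R}_t(\X_i)$ with witnesses $Z_i\in\LdpK{p}{}{\X_i}$ and $v_i\in\LdpF{t,+}$ satisfying $u_i={\boldsymbol\rho}_t(Z_i)+v_i$.  Then $\lambda Z_1+(1-\lambda)Z_2$ is a selector of $\lambda\X_1+(1-\lambda)\X_2\subseteq\lambda\X_1\oplus(1-\lambda)\X_2$, and componentwise convexity of each $\rho_t^i$ yields ${\boldsymbol\rho}_t(\lambda Z_1+(1-\lambda)Z_2)\le\lambda{\boldsymbol\rho}_t(Z_1)+(1-\lambda){\boldsymbol\rho}_t(Z_2)$ in the componentwise order, so $\lambda u_1+(1-\lambda)u_2\in{\bf R}_t(\lambda\X_1\oplus(1-\lambda)\X_2)$ by the upper-set structure.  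The conditionally convex case is the same argument with $\lambda\in\LiF{t}$, $0\le\lambda\le1$, since $\LdpK{p}{}{\cdot}$ and the union are $\Ft{t}$-stable.  For positive homogeneity, the bijection $Z\leftrightarrow\lambda Z$ between $\LdpK{p}{}{\X}$ and $\LdpK{p}{}{\lambda\X}$ (valid for $\lambda>0$ or for $\Ft{t}$-measurable $\lambda>0$), combined with componentwise scalar positive homogeneity, gives ${\bf R}_t(\lambda\X)=\lambda{\bf R}_t(\X)$ (also using $\lambda\LdpF{t,+}=\LdpF{t,+}$).  Each property descends to $\bar{\bf R}_t$ because closure preserves convexity and commutes with multiplication by scalars essentially bounded away from zero.

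The most delicate item is law-invariant convexity on a non-atomic probability space.  Componentwise law invariance of each $\rho_t^i$ implies that ${\boldsymbol\rho}_t(Z)$ and ${\boldsymbol\rho}_t(Z')$ agree in law whenever $Z$ and $Z'$ are equidistributed.  The main obstacle is lifting this to the statement that ${\bf R}_t(\X)={\bf R}_t(\X')$ whenever the random closed sets $\X,\X'$ have the same distribution (equivalently, the same capacity functional).  On a non-atomic space this is achieved by using measure-preserving transformations to identify the selector families $\LdpK{p}{}{\X}$ and $\LdpK{p}{}{\X'}$ modulo equidistribution, which is a standard but non-trivial application of coupling arguments for random sets.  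Convexity in this class is then inherited from the preceding paragraph, and the closed-valued variant $\bar{\bf R}_t$ is handled by taking closures on both sides.  This law-invariance step is the only place where the non-atomicity hypothesis enters; all other claims of the theorem hold on any probability space.
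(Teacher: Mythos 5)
This theorem is imported verbatim as Theorem~3.4 of~\cite{CM13}; the paper supplies no proof of it, so there is no internal argument to compare yours against and I can only judge the proposal on its merits. Your verification of the two axioms and your transfer of (conditional) convexity and (conditional) positive homogeneity are the natural arguments and are essentially correct: the identity $\LdpK{p}{}{\X+m}=\LdpK{p}{}{\X}+m$, the inclusion $\lambda \LdpK{p}{}{\X}+(1-\lambda)\LdpK{p}{}{{\bf Y}}\subseteq \LdpK{p}{}{\lambda\X\oplus(1-\lambda){\bf Y}}$, and the componentwise scalar properties do all the work, and the passage to $\bar{\bf R}_t$ via closures (translation being a homeomorphism, closure preserving inclusions and convexity) is routine. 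One minor caveat: in the conditionally positive homogeneous case the map $Z\mapsto\lambda Z$ for $\lambda\in\LiF{t}$ with $\lambda>0$ a.s.\ need not be a bijection of $\LdpK{p}{}{\X}$ onto $\LdpK{p}{}{\lambda\X}$, since $1/\lambda$ need not be essentially bounded, and the same issue affects $\cl(\lambda D)=\lambda\cl(D)$; this wrinkle is elided throughout the surrounding literature (and in corollary~\ref{cor_decomposable} of this paper), so I flag it only for completeness.

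The one genuine gap is in the law-invariance claim, which is the only nontrivial part of the theorem and the only place non-atomicity is used. You correctly reduce it to the statement that two identically distributed random closed sets on a non-atomic space have the same family of distributions of $p$-integrable selectors, so that $\lrcurly{{\boldsymbol\rho}_t(Z):Z\in\LdpK{p}{}{\X}}$ depends on $\X$ only through its law; but you then declare this ``a standard but non-trivial application of coupling arguments'' without proving or precisely citing it. This is an Artstein/Hart--Kohlberg type selection theorem (see Molchanov's book), and invoking it by name with a reference would close the argument; as written, the single step that actually carries the weight of the non-atomicity hypothesis is asserted rather than established. Everything else in the write-up is complete.
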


Furthermore, \cite{CM13} gives conditions under which the constructive (static) risk measure ${\bf R}_0$ defined in \eqref{eq_constructive} in the coherent case is closed, or Lipschitz and deduces upper and lower bounds for it and dual representations in certain special cases. Numerical examples for the calculation of upper and lower bounds are given.

\subsection{Family of scalar risk measures}
\label{sec_scalar}

Consider $\LdpF{}$, $p \in [1,+\infty]$ with the norm topology for $p \in [1,+\infty)$ and the weak* topology for $p = +\infty$. Recall from definition~\ref{defn_acceptance} that
a set $A_t \subseteq \LdpF{}$ is a \textbf{\emph{conditional acceptance set}} at time $t$ if it satisfies
$M_t \cap A_t \neq \emptyset$, $\tilde{M}_t[\omega] \cap (\R^d \backslash A_t[\omega]) \neq \emptyset$ for almost every $\omega \in \Omega$, and $A_t + \LdpF{+} \subseteq A_t$.

We will define a family of scalar conditional risk measures $\rho_t^{w}$ with parameter $w \in \plus{M_{t,+}} \backslash \prp{M_t}$ via their primal representation.  The scalar risk measures map into the random variables with values in the extended real line, that is, into the space $L^0_t(\bar\R)$ with $\bar\R:=\R \cup \{\pm\infty\}$.

\begin{defn}
\label{defn_multi-scalar}
A function $\rho_t^{w}: \LdpF{} \to L^0_t(\bar\R)$ satisfying
\begin{align}
\label{scalarRM}
\rho_t^{w}(X) = \essinf\lrcurly{\trans{w}u: u \in M_t, X + u \in A_t}
\end{align}
for a parameter $w \in \plus{M_{t,+}} \backslash \prp{M_t}$ and a conditional acceptance set $A_t$ is called a \textbf{\emph{multiple asset conditional risk measure}} at time $t$.
\end{defn}

Clearly, the scalar risk measures defined above  are scalarizations of a set-valued risk measure from the set-optimization  approach (see section~\ref{sec_vector}) defined by $R_t:=\{u \in M_t: X + u \in A_t\}$, where the scalarizations are taken with respect to vectors $w \in \plus{M_{t,+}} \backslash \prp{M_t}$, that is
\begin{align}
\label{scalarization}
\rho_t^{w}(X) = \essinf_{u \in R_t(X)} \trans{w}u=
   \essinf \lrcurly{\trans{w}u: u \in M_t, X + u \in A_t}.
\end{align}

Note, that when $R_t$ is $K$-compatible (that is $A_t=A_t+\LdpK{p}{t}{K}$) for some $\Ft{t}$-measurable random cone $K\subseteq \tilde{M}_t$, then $\rho_t^{w}(X)[\omega] = -\infty$ on $w(\omega) \not\in \plus{K[\omega]}$ for any $X \in \LdpF{}$.  Thus, one can restrict oneself in this case to parameters $w$ in the basis of $\LdpK{q}{t}{\plus{K}} \backslash \prp{M_t}$.

We will give some examples from the literature of scalar risk measures of form \eqref{scalarRM}.
\begin{ex}
In \cite{FMM13,ADM09,FS06,K09,Sc04} risk measures of form \eqref{scalarRM} have been studied in the static case.

In a frictionless market let the time $t$ prices be given by the (random) vector $S_t$. In this case the solvency cones
(see~\cite{K99,S04,KS09}) $\seq{K}$ are given by $K_t[\omega] = \lrcurly{x \in \R^d: \trans{S_t(\omega)}x \geq 0}$, where the normal vector $S_t(\omega)$ is the unique vector in the basis of $\plus{K_t[\omega]}$.  Let $A_t = A_t +\LdpK{p}{t}{K_t \cap \tilde{M_t}} + \LdpK{p}{}{K_T}$,
\[
\rho_t^{S_t}(X) = \essinf\lrcurly{\trans{S_t}u: u \in M_t, X + u \in A_t} =
\tilde{\rho}_t^{S_t}(\trans{S_T}X)
\]
for any $X \in \LdpF{}$ (since $\LdpK{q}{t}{\plusp{K_t\cap\tilde{M}_t}} := \LdpK{q}{t}{\plus{K_t}} + \prp{M_t}$). It can be seen that $\tilde{\rho}_t^{S_t}(Z) = \essinf\lrcurly{\trans{S_t}u: u \in M_t, Z + \trans{S_T}u \in
\tilde{A}_t}$ with $\tilde{A}_t = \lrcurly{\trans{S_T}X: X \in A_t}$ is the dynamic version of
the risk measures with multiple eligible assets defined in~\cite{FMM13,ADM09,FS06,K09,Sc04} (and with single eligible assets (which is not necessarily the original num\'{e}raire) defined in~\cite{FMM12a,FMM12b}).
$\tilde{A}_t$ satisfies definition~1 of~\cite{FMM13} for an acceptance set.
\end{ex}

\begin{ex}
\cite{BR06} discusses scalar static risk measure of multivariate claims, when only a single eligible asset is considered, that is
\[\rho(X) = \inf\lrcurly{m \in \R: X + m e_1 \in A}\]
for $X \in \LdiF{}$, where $A \subseteq \LdiF{}$ is an acceptance set.  We can see that this has the form $\rho(X)  = \inf\lrcurly{\trans{e_1}u: u \in \R \times \{0\}^{d-1}, X + u \in A}$, i.e. the scalarization of a set-valued risk measure with $M_0 = \R \times \{0\}^{d-1}$ and $w = e_1$.
\end{ex}

\begin{ex}
In \cite{WA13} so called liquidity-adjusted risk measure  $\rho^V: \LdiF{} \to \R$, which are scalar static risk measure of multivariate claims in markets with frictions, are studied, when only a single eligible asset is considered.
The primal representation
\[ \rho^V(X) = \inf \{  k\in \R: X + k e_1\in A^V \} \]
for
$A^V := \{  X\in \LdiF{} : V(X) \in A \}  $, where $V$ is a real valued function providing the  value of a portfolio $X$ under liquidity and portfolio constraints and $A\subseteq \LiF{}$ is the acceptance set of a scalar convex risk measure in the sense of \cite{FS02}. Clearly, $\rho^V(X)$ is of form \eqref{scalarRM}.
\end{ex}

\begin{ex}
\label{exSHP}
In \cite{BLPS92,BV92,PL97,JK95,R08} (and many other papers) the scalar superhedging price in a market with two assets and transaction costs has been studied. The $d$ asset case is treated in \cite{RZ11,LR11}. Let $\seq{K}$ be the sequence of solvency cones modeling the market with proportional transaction costs.

The $d$ dimensional version of the dual representation of the scalar superhedging price given in Jouini, Kallal \cite{JK95} reads as follows. Let $X\in \LdpF{}$ be a payoff in physical units.
Under an appropriate robust no arbitrage condition, the scalar superhedging price $\pi^a_i(X)$ in units of asset $i\in\{1,...,d\}$ at time $t=0$ is given by
\begin{align}
    \label{JK d assets}
   \pi^a_i(X)= \sup_{(S_t, \Q)\in\mathcal Q^i} \EQ{\trans{S_T}X},
\end{align}
where $\mathcal Q^i$ is the set of all processes $(S_t)_{t=0}^T$ and their equivalent martingale measures $\Q$ with $\frac{d\Q}{d\P}\in L^1(\mathcal F_T)$,
$S_t^i\equiv 1$, $\Et{\frac{d\Q}{d\P}}{t} S_t\in \Ldq{\Ft{t};\plus{K_t}}$  for all $t$.
Theorem~6.1 in \cite{LR11} shows that  \eqref{JK d assets} can be obtained by scalarizing the coherent set-valued risk measure with acceptance set $A_0=\sum_{s=0}^T \LdpK{p}{s}{K_s}$ and single eligible asset (asset $i$, which is also the num\'{e}raire asset, i.e. $M_0 = \lrcurly{m \in \R^d: m_j = 0 \; \forall j \neq i}$) w.r.t. the unit vector $w=e_i \in (K_0\cap M_0)^+$. Thus, $\pi^a_i$ is a special case of \eqref{scalarRM}.
\end{ex}

Of course any standard scalar risk measure in a frictionless markets with single eligible asset as in \cite{FS02,AD99} is also special cases of \eqref{scalarRM}, but in that case there is no advantage to explore the relationship with a set-valued risk measure via \eqref{scalarization}. In any other case, i.e. if one of the following is considered: multiple eligible assets, multivariate claims, transaction costs or other market frictions, it can be advantageous to explore \eqref{scalarization} as the dual representation of the corresponding set-valued risk measure given in section~\ref{sec_vector} can lead to a dual representation of the scalarization as demonstrated in \eqref{JK d assets}. Furthermore, even if one is interested in only one particular scalarization (as it is the case in all the examples above), the dual representation of the scalar risk measure might involve the whole family of scalarizations (as in example~\ref{exSHP}, where the constraints $S_t\in \plus{K_t}$ a.s. for all $t$ enter the scalar problem in \eqref{JK d assets}). This is related to time consistency properties of the scalar risk measure and multi-portfolio time consistency of the corresponding set-valued risk measure (see definition~\ref{defn_mptc}). In this paper we are only concerned with the connection between a family of scalar risk measures and a set-valued risk measure. Lemma~\ref{lemma_scalar_rep}
below gives very mild conditions under which a set-valued risk measures can be equivalently represented by a family of scalar risk
measures. Results about dual representations and the study of time consistency properties of the family of scalar risk measures are left for further research.

The main motivation to study a family of scalar risk measures in this section is that it allows to generalize all of the examples given above in a unified way by allowing multiple eligible assets, multivariate claims and frictions in the form of transaction costs, as well as considering a dynamic setting. As example~\ref{exSHP} suggests, viewing a scalar risk measure in a market with frictions as being a scalarization of a set-valued risk measure has the advantage of obtaining dual representations and conditions on time consistency by using the corresponding results of the set-valued risk measure.

A different approach concerning a family of scalar risk measures and multiple eligible assets in a frictionless market was taken in~\cite{JB08}.  In that paper, given a set of
eligible assets (with values $S_T^i$ for $i = 1,...,n$), the risk of the portfolio $X$ is the set of values $\lrcurly{\sum_{i = 1}^n
\rho_t^{S_T^i}(X_i) S_T^i: X = \sum_{i = 1}^n X_i}$ where $\rho_t^{S_T^i}$ is a risk measure in asset $i$ (with change of
num\'{e}raire).  However, we will not discuss this approach further since lemma 4.10 of that paper demonstrates that
$\rho_t^{S_T^0}(X) \leq \rho_t^{S_T^0}(-\sum_{i = 1}^n \rho_t^{S_T^i}(X_i) S_T^i)$ for any choice of num\'{e}raire $0$ and any allocation of $X = \sum_{i = 1}^n X_i$,
i.e. the family of risks (as a portfolio) has risk bounded below by the risk of the initial portfolio no matter the num\'{e}raire
chosen.

In the following proposition we show that the multiple asset conditional scalar risk measures satisfy monotonicity and a translative
property.  These properties are usually given as the definition of a risk measure in the literature given in the above examples. However, here we consider the primal representation~\ref{defn_multi-scalar} as the starting point.

\begin{prop}
\label{prop_multi-scalar_properties}
Let $\rho_t^{w}: \LdpF{} \to L^0_t(\bar\R)$ be a multiple asset conditional scalar risk measure at time $t$ for pricing vector $w \in
\plus{M_{t,+}} \backslash \prp{M_t}$.  Then $\rho_t^{w}$ satisfies the following conditions.
\begin{enumerate}
\item If $Y - X \in \LdpF{+}$ for $X,Y \in \LdpF{}$, then $\rho_t^{w}(Y) \leq \rho_t^{w}(X)$.
\item $\rho_t^{w}(X + m) = \rho_t^{w}(X) - \trans{w}m$ for all $X \in \LdpF{}$ and $m \in M_t$.
\end{enumerate}

Further, if we consider the family of such risk measures over all pricing vectors $w \in \plus{M_{t,+}} \backslash \prp{M_t}$ then we
have the following finiteness properties.
\begin{enumerate}
\setcounter{enumi}{2}
\item $\rho_t^{w}(0) < +\infty$ for every $w \in \plus{M_{t,+}} \backslash \prp{M_t}$.
\item $\rho_t^{w}(0) > -\infty$ for some $w \in \plus{M_{t,+}} \backslash \prp{M_t}$.
\end{enumerate}
\end{prop}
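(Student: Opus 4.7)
The plan is to handle (1)--(3) by direct unpacking of~\eqref{scalarRM} and (4) via a measurable selection followed by a separation argument.

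For (1), I would take any $u$ feasible for $X$ in the essinf, i.e.\ $u \in M_t$ with $X + u \in A_t$, and note that $Y + u = (X + u) + (Y - X) \in A_t + \LdpF{+} \subseteq A_t$ by upward closure; hence the feasible set for $Y$ contains the one for $X$, and the essinf is smaller. For (2), since $M_t$ is a linear subspace and $m \in M_t$, the substitution $v = u + m$ is a bijection on $\{u \in M_t : (X+m) + u \in A_t\}$ onto $\{v \in M_t : X + v \in A_t\}$, and then pulling the constant $-\trans{w}m$ outside the essinf gives the identity. For (3), Definition~\ref{defn_acceptance} provides some $u_0 \in M_t \cap A_t$, which is feasible for $X = 0$; therefore $\rho_t^w(0) \leq \trans{w}u_0$, and the right-hand side is almost surely finite since $w \in \LdqF{t}$ and $u_0 \in \LdpF{t}$ are each a.s.\ finite.

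For (4) my plan is as follows. First, use the properness condition $\tilde{M}_t[\omega] \not\subseteq A_t[\omega]$ almost surely, together with a measurable selection theorem (e.g.\ Kuratowski--Ryll-Nardzewski), to produce some $u^* \in M_t$ with $u^*(\omega) \in \tilde{M}_t[\omega] \setminus A_t[\omega]$ a.s.; in particular $u^* \notin R_t(0) := \lrcurly{u \in M_t : u \in A_t}$. Second, since $R_t(0) + M_{t,+} \subseteq R_t(0)$ is a proper upward-closed subset of $M_t$, I would apply Hahn--Banach separation to $u^*$ and $\cl \co R_t(0)$ in the duality $\lrparen{\LdpF{t},\LdqF{t}}$ to obtain a nonzero $w \in \LdqF{t}$ with $\E{\trans{w}(u - u^*)} \geq 0$ for every $u \in R_t(0)$. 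Upward closure then forces $w \in \plus{M_{t,+}}$ (otherwise letting $u$ grow in a $w$-negative direction of $M_{t,+}$ would violate the inequality), and nontriviality of the separation forces $w \notin \prp{M_t}$. The separating inequality will then yield $\rho_t^w(0) \geq \trans{w}u^* > -\infty$ almost surely.

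The main obstacle is part (4), because Definition~\ref{defn_acceptance} assumes neither convexity nor closedness of $A_t$: the separation must be applied to $\cl \co R_t(0)$ rather than $R_t(0)$ itself, and one must confirm that $u^*$ remains separated from this enlarged set and that the resulting supporting functional $w$ is nondegenerate on $M_t$. Under the convexity assumptions tacitly in force elsewhere in the paper (particularly for convex or coherent risk measures), this step is routine; in full generality it would require either a more careful choice of $u^*$, or an additional hypothesis ruling out the degenerate case where $\cl \co R_t(0) = M_t$.
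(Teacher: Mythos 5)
Your treatment of parts (1)--(3) is correct and essentially the paper's own argument: (1) and (2) are the same feasible-set inclusion and substitution the paper uses, and your (3) simply exhibits the feasible point $u_0 \in M_t \cap A_t$ that the condition $A_t \cap M_t \neq \emptyset$ in definition~\ref{defn_acceptance} is there to supply.

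Part (4) is where you genuinely diverge from the paper, and your route has a gap beyond the one you flag. The convexity issue you acknowledge is not cosmetic: definition~\ref{defn_acceptance} imposes neither convexity nor closedness on $A_t$, so $\cl\co\lrcurly{u \in M_t: u \in A_t}$ can perfectly well be all of $M_t$, in which case there is nothing to separate $u^*$ from and the argument collapses outside the convex case. More seriously, even when the separation does go through, it delivers a functional $w$ with $\E{\trans{w}u} \geq \E{\trans{w}u^*}$ for every feasible $u$, i.e.\ a lower bound on $\inf_{u} \E{\trans{w}u}$. This does not give the claimed almost sure inequality $\rho_t^{w}(0) = \essinf_{u} \trans{w}u \geq \trans{w}u^*$: the essential infimum of an arbitrary family of random variables can be identically $-\infty$ even though every member has expectation bounded below by the same constant (take the family $-n 1_{[k/n,(k+1)/n]}$, $0 \leq k < n$, on $[0,1]$). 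Passing from $\inf_u \E{\trans{w}u}$ to $\essinf_u \trans{w}u$ requires the family to be downward directed, equivalently $\Ft{t}$-decomposability of the feasible set --- exactly the hypothesis under which the proof of lemma~\ref{lemma_scalar_rep} may invoke the interchange of $\essinf$ and expectation, and not available under definition~\ref{defn_acceptance} alone. So your final step is a non sequitur as written.

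The paper sidesteps both problems by arguing pointwise in $\omega$: the third condition of definition~\ref{defn_acceptance}, namely $\tilde{M}_t[\omega] \cap (\R^d \setminus A_t[\omega]) \neq \emptyset$ for almost every $\omega$, is precisely the non-convex, state-by-state properness hypothesis designed to rule out $\rho_t^{w}(0)[\omega] = -\infty$ for all $w$ simultaneously, with no measurable selection and no separation in $\LdpF{t}$. If you want to keep a functional-analytic proof of (4), you must either add convexity, closedness and decomposability assumptions on $A_t$, or replace the scalarized separation by a conditional (pointwise, $\Gamma_t(w)$-type) one.
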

\begin{proof}
Let $\rho_t^{w}(X) := \essinf \lrcurly{\trans{w}u: u \in M_t, X + u \in A_t}$ for every $X \in \LdpF{}$, every $w \in
\plus{M_{t,+}} \backslash \prp{M_t}$, and some conditional acceptance set $A_t$.
\begin{enumerate}
\item Let $X,Y \in \LdpF{}$ such that $Y - X \in \LdpF{+}$.  Let $w \in \plus{M_{t,+}} \backslash \prp{M_t}$.
    \begin{align*}
    \rho_t^{w}(Y) &= \essinf \lrcurly{\trans{w}u: u \in M_t, Y + u \in A_t} = \essinf \lrcurly{\trans{w}u: u \in M_t, X + (Y-X) + u
\in A_t}\\
    &\leq \essinf \lrcurly{\trans{w}u: u \in M_t, X + u \in A_t} = \rho_t^{w}(X).
    \end{align*}
\item Let $X \in \LdpF{}$ and $m \in M_t$.  Let $w \in \plus{M_{t,+}} \backslash \prp{M_t}$.
    \begin{align*}
    \rho_t^{w}(X + m) &= \essinf \lrcurly{\trans{w}u: u \in M_t: X + m + u \in A_t}\\
    &= \essinf \lrcurly{\trans{w}(u - m): u \in M_t, X + u \in A_t} = \rho_t^{w}(X) - \trans{w}m.
    \end{align*}
\item Fix some $\omega \in \Omega$.  $\rho_t^{w}(0)[\omega] = +\infty$ for some $w \in \plus{M_{t,+}} \backslash \prp{M_t}$ if and only if $A_t[\omega] \cap \tilde{M}_t[\omega] = \emptyset$, which by $A_t \cap M_t \neq \emptyset$ is false.
\item Fix some $\omega \in \Omega$.  $\rho_t^{w}(0)[\omega] = -\infty$ for every $w \in \plus{M_{t,+}} \backslash \prp{M_t}$ if and only if $(\R^d \backslash A_t[\omega]) \cap \tilde{M}_t[\omega] = \emptyset$, which by definition is false.
\end{enumerate}
\end{proof}

\section{Relation between approaches}
\label{sec_relation}

In this section we compare the properties for each of the techniques for dynamic multivariate risk measures.  It will be shown that the set-valued portfolio approach to dynamic risk measures is the most general model into which every other approach can be embedded.  It will be shown in section~\ref{sec_vector-portfolio} that under weak assumptions on the construction of the set-valued portfolios, the set-optimization approach is equivalent to the set-valued portfolio approach.  Because additional properties for dynamic risk measures have been studied previously for the set-optimization approach and due to the (often) one-to-one relation with the set-portfolio approach, we will present the relations in this section as comparisons with the \emph{set-optimization approach}.

\subsection{Set-optimization approach versus measurable selectors}
\label{sec_vector-measurable}

In order to compare these two approaches, one first needs to agree on the same preimage and image space. One possibility would be to define the risk measures of section~\ref{sec_vector} on the space $B_{K_T,n}$. This can be done as the theory involved (set-optimization) works for any locally convex space as the preimage space. The other possibility is to consider the measurable selectors approach of section~\ref{sec_measurable} on $\LdpF{}$ spaces. This in not a problem for the definition of risk processes given in definition~\ref{defn_risk_process}, but could pose a problem for primal and dual representations,
see discussion in remark~\ref{rem_preimage} for more details. However, since for the comparison results we just work with the definitions, we will follow this path here. Thus, consider $\LdpF{}$ spaces for $p \in [0,+\infty]$ endowed with the metric topology (that is the norm topology for $p\geq 1$), even for $p=+\infty$ which is in contrast to \cite{FR12,FR12b} where the weak* topology is used for $p=+\infty$.
Also, as the definition of the risk process does not rely on the space of eligible portfolios to be $M_t^n$, we will use a general space of eligible portfolios $M_t$. We will show that when the dynamic risk measure has closed and conditionally convex images, the set-optimization  and the measurable selectors approach coincide.

\begin{rem}
\label{rem_preimage}
While the space $B_{K_T,n}$ shares many properties with $\LdiF{}$, the two do not coincide in general.  If $n = d$ or additional
assumptions (e.g. substitutability from~\cite{JMT04}) are satisfied, then $\LdiF{} \subseteq B_{K_T,n}$. If  $n = d$ and $K_T = \R^d_+$ almost surely, then $B_{K_T,n}=\LdiF{}$. However, in general the two spaces are not comparable in the set-inclusion relation.
Therefore, without additional assumptions, it is not trivial to use the representation results from~\cite{TL12} for the space $\LdiF{}$. Furthermore, the assumptions for the Fatou duality (theorem~\ref{thm_msb-dual_fatou}) exclude the special case $K_T = \R^d_+$ and thus exclude the case $B_{K_T,n}=\LdiF{}$ when $n = d$.
However, the definition for risk process can be given for $\LdpF{}$ spaces (and this is used in this section). But complications arise in both, the
primal and dual definition, as e.g. boundedness is used in the proofs in \cite{TL12}. \end{rem}

The definition for $\Ft{t}$-decomposability given below can be found in \cite[page 148]{M05} or \cite[page 260]{KS09}.
\begin{defn}
\label{defn_decomposable}
A set $D \subseteq \LdpF{}$ is said to be \textbf{\emph{$\Ft{t}$-decomposable}} if for any finite partition $(\Omega_t^n)_{n = 1}^N
\subseteq \Ft{t}$ of $\Omega$ and any family $(X_n)_{n = 1}^N\subseteq D$ for $N \in \mathbb{N}$, we have $\sum_{n = 1}^N 1_{\Omega_t^n} X_n \in
D$.
\end{defn}

The following theorem and corollary~\ref{cor_decomposable} below state that there is a one-to-one relation between conditional risk measures $R_t$
with closed and $\Ft{t}$-decomposable images and closed risk processes $\tilde{R}_t$.
In corollary~\ref{cor_convex_decomposable} we demonstrate that any conditional risk measure with closed and conditionally convex images also has $\Ft{t}$-decomposable images.

For notational purposes, let $\mathcal{S}_t :=\{\Gamma \in\mathcal{S}_t^d:\Gamma(\omega) \subseteq \tilde{M}_t[\omega]\;\as\}\subseteq \mathcal{S}_t^d$.
\begin{thm}
\label{thm_decomposable}
Let $\tilde{R}_t: \LdpF{} \to \mathcal{S}_t$ be a risk process at time $t$ (see definition~\ref{defn_risk_process}), then $R_t:
\LdpF{} \to \mathcal{P}(M_t;M_{t,+})$, defined by $R_t(X) := \LdpK{p}{t}{\tilde{R}_t(X)}$ for any $X \in \LdpF{}$, is a
conditional risk measure at time $t$ (see definition~\ref{defn_conditional}) with $\Ft{t}$-decomposable images.

Let $R_t: \LdpF{} \to \mathcal{P}(M_t;M_{t,+})$ be a conditional risk measure at time $t$ (see definition~\ref{defn_conditional})
with closed and $\Ft{t}$-decomposable images, then there exists a risk process $\tilde{R}_t: \LdpF{} \to \mathcal{S}_t$ (see definition~\ref{defn_risk_process}) such that $R_t(X) = \LdpK{p}{t}{\tilde{R}_t(X)}$ for any $X \in \LdpF{}$.
\end{thm}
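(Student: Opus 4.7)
The plan is to establish the claimed one-to-one correspondence around the natural candidate map $\tilde{R}_t \longleftrightarrow R_t := \LdpK{p}{t}{\tilde{R}_t(\cdot)}$, treating the two implications separately.

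For the forward direction, I would first verify $R_t(X) \in \mathcal{P}(M_t;M_{t,+})$: inclusion in $M_t$ follows from $\tilde{R}_t(X) \subseteq \tilde{M}_t$ a.s., while the upper-set property $R_t(X) + M_{t,+} \subseteq R_t(X)$ is obtained by combining translativity with monotonicity of the risk process---from $\tilde{R}_t(X+m) = \tilde{R}_t(X)-m$ and $\tilde{R}_t(X+m) \supseteq \tilde{R}_t(X)$ for $m \in M_{t,+}$ one infers pointwise that $v(\omega) + m(\omega) \in \tilde{R}_t(X)[\omega]$ a.s.\ whenever $v \in R_t(X)$. The three axioms of definition~\ref{defn_conditional} then transfer cleanly: $\LdpF{+}$-monotonicity and $M_t$-translativity by routine manipulation of selectors; finiteness at zero by invoking the Kuratowski--Ryll-Nardzewski measurable selection theorem on the closed nonempty $\Ft{t}$-measurable random set $\tilde{R}_t(0)$, together with a translation along the upper-set direction to land inside $\LdpF{t}$. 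Decomposability of $R_t(X)$ is immediate: $\sum_n 1_{D_n} u_n$ is $\Ft{t}$-measurable, lies in $\LdpF{t}$, and selects pointwise from $\tilde{R}_t(X)$.

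For the converse direction, the workhorse is a Hiai--Umegaki-style representation theorem (cf.\ Theorem~2.1.6 of~\cite{M05}): every closed $\Ft{t}$-decomposable subset of $\LdpF{t}$ coincides with $\LdpK{p}{t}{\Gamma}$ for a unique (up to $\P$-null sets) closed $\Ft{t}$-measurable random set $\Gamma \subseteq \R^d$. Applying this to each $R_t(X)$ produces the sought $\tilde{R}_t(X) \in \mathcal{S}_t$; the inclusion $\tilde{R}_t(X) \subseteq \tilde{M}_t$ a.s.\ follows from $R_t(X) \subseteq M_t$ by uniqueness, possibly after intersecting with $\tilde{M}_t$. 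The risk-process axioms are then pulled back through the bijection via uniqueness: monotonicity $R_t(Y) \supseteq R_t(X)$ translates to $\tilde{R}_t(Y) \supseteq \tilde{R}_t(X)$ a.s.; translativity $R_t(X+m) = R_t(X)-m = \LdpK{p}{t}{\tilde{R}_t(X)-m}$ yields $\tilde{R}_t(X+m) = \tilde{R}_t(X) - m$ a.s.; and finiteness at zero of $R_t$ gives both nonemptiness of $\tilde{R}_t(0)$ and the state-wise non-covering $\tilde{R}_t(0)[\omega] \neq \tilde{M}_t[\omega]$ a.s.

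The main obstacle will be the clean invocation of the selector representation theorem across the full range $p \in [0,+\infty]$ used in this section. The classical Hiai--Umegaki result directly addresses $p \in [1,+\infty)$; the case $p = 0$ requires the convergence-in-probability version of the representation, and $p = +\infty$ needs the $\LdiF{t}$ variant alluded to in remark~\ref{rem_process-acceptance}. A secondary subtlety is reconciling orderings: definition~\ref{defn_risk_process} is phrased with $\geq_{K_T}$, whereas on $\LdpF{}$ the natural order is componentwise $\LdpF{+}$; these are compatible because $\R^d_+ \subseteq K_T$ (axiom k2), so only the weaker $\LdpF{+}$-monotonicity actually transfers through the correspondence---which is exactly what definition~\ref{defn_conditional} demands.
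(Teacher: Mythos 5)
Your proposal is correct and follows essentially the same route as the paper: direct verification of the axioms of definition~\ref{defn_conditional} plus decomposability in the forward direction, and the representation of closed $\Ft{t}$-decomposable subsets of $\LdpF{t}$ as selector sets of closed $\Ft{t}$-measurable random sets (the paper cites proposition~5.4.3 of~\cite{KS09} for $p\in[0,+\infty)$ and theorem~1.6 of chapter~2 of~\cite{M05} for $p=+\infty$, i.e.\ exactly the Hiai--Umegaki-type result you name) in the converse direction. Your two flagged subtleties --- the version of the representation theorem needed for each $p$, and the fact that only componentwise $\LdpF{+}$-monotonicity rather than $\geq_{K_T}$-monotonicity transfers --- are both real and are handled only implicitly in the paper, so raising them is a point in your favour rather than a deviation.
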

\begin{proof}
\begin{enumerate}
\item Let $\tilde{R}_t: \LdpF{} \to \mathcal{S}_t$ be a risk process at time $t$.  Let $R_t: \LdpF{} \to \mathcal{P}(M_t;M_{t,+})$ be defined by $R_t(X) := \LdpK{p}{t}{\tilde{R}_t(X)}$ for any $X \in \LdpF{}$.  It remains to show that $R_t$ is a conditional risk measure at time $t$.
$\LdpF{+}$-monotonicity: let $X,Y \in \LdpF{}$ such that $Y - X \in \LdpF{+}$, then $\tilde{R}_t(Y) \supseteq \tilde{R}_t(X)$, and thus $R_t(X) \supseteq R_t(X)$.
$M_t$-translativity: let $X \in \LdpF{}$ and $m \in M_t$, then $R_t(X+m) = \LdpK{p}{t}{\tilde{R}_t(X+m)} = \LdpK{p}{t}{\tilde{R}_t(X)-m} = \LdpK{p}{t}{\tilde{R}_t(X)} - m = R_t(X) - m$.
Finiteness at zero: By $\tilde{R}_t(0) \neq \emptyset$ almost surely then trivially $R_t(0) = \LdpK{p}{t}{\tilde{R}_t(0)} \neq \emptyset$.  By $\tilde{R}_t(0) \neq \tilde{M}_t$ almost surely then if $u(\omega) \in \tilde{M}_t[\omega]\backslash\tilde{R}_t(0)[\omega]$ for almost every $\omega \in \Omega$ such that $u \in M_t$, then $u(\omega) \not\in R_t(0)[\omega]$ for almost every $\omega \in \Omega$.
$\Ft{t}$-decomposable images: Let $(\Omega_t^n)_{n = 1}^N \subseteq \Ft{t}$ for some $N \in \mathbb{N}$ be a finite partition of
$\Omega$ and let $(u_n)_{n = 1}^N \subseteq R_t(X)$ then $\sum_{n = 1}^N 1_{\Omega_t^n} u_n \in M_t$, then since $R_t(X)$ are the
measurable selectors of $\tilde{R}_t(X)$ it immediately follows that $\sum_{n = 1}^N 1_{\Omega_t^n} u_n \in R_t(X)$.
\item Let $R_t: \LdpF{} \to \mathcal{P}(M_t;M_{t,+})$ be a conditional risk measure at time $t$ with closed and $\Ft{t}$-decomposable
images.  By proposition 5.4.3 in~\cite{KS09} (for $p \in [0,+\infty)$) and theorem 1.6 of chapter 2 from~\cite{M05} (for $p =
+\infty$),
it follows that $R_t(X) = \LdpK{p}{t}{\tilde{R}_t(X)}$ for some almost surely closed random set $\tilde{R}_t(X)$ for every $X \in \LdpF{}$.  Trivially, we can see that $\tilde{R}_t(X) \subseteq \tilde{M}_t$ almost surely. It remains to show that $\tilde{R}_t$ is a risk process at time $t$.
 Let $X \in \LdpF{}$,
        then $\tilde{R}_t(X)$ is a closed $\Ft{t}$-measurable random set \cite[proposition 5.4.3]{KS09} and \cite[chapter 2 theorem 1.6]{M05}.
 Finiteness at zero of $R_t$ implies finiteness at zero of $\tilde{R}_t$.
Consider $X,Y \in \LdpF{}$ with $Y - X  \in \LdpF{+}$, then $R_t(Y) \supseteq R_t(X)$, which implies that  $\tilde{R}_t(Y) \supseteq \tilde{R}_t(X)$.
Let $X \in \LdpF{}$ and $m \in M_t$, then $R_t(X + m) = R_t(X) - m$. This implies $\LdpK{p}{t}{\tilde{R}_t(X + m)} = \LdpK{p}{t}{\tilde{R}_t(X)} - m = \LdpK{p}{t}{\tilde{R}_t(X)-m}$, i.e. $\tilde{R}_t(X + m) = \tilde{R}_t(X)-m$ almost surely.

\end{enumerate}
\end{proof}

In the below corollaries the conditional risk measure associated with the risk process (and vice versa) is defined as in theorem~\ref{thm_decomposable} above.
\begin{cor}
\label{cor_decomposable}
Let $\tilde{R}_t: \LdpF{} \to \mathcal{S}_t$ be a conditionally convex (conditionally positive homogeneous, normalized) risk process at time $t$ then the associated conditional risk measure is conditionally convex (conditionally positive homogeneous, normalized).

Let $R_t: \LdpF{} \to \mathcal{P}(M_t;M_{t,+})$ be a conditionally convex (conditionally positive homogeneous, normalized) conditional risk
measure at time $t$ with closed and $\Ft{t}$-decomposable images, then the associated risk process is conditionally convex (conditionally positive homogeneous, normalized).
\end{cor}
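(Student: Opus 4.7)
The overall strategy is to transfer each of the three structural properties (conditional convexity, conditional positive homogeneity, normalization) between $\tilde{R}_t$ and $R_t$ by combining pointwise arguments on random sets with the selector correspondence $R_t(X) = \LdpK{p}{t}{\tilde{R}_t(X)}$ established in Theorem~\ref{thm_decomposable}. The main tools are pointwise set arithmetic for the direction ``risk process $\Rightarrow$ conditional risk measure'' and a Castaing-type representation of closed $\Ft{t}$-decomposable $L^p$-selector sets for the reverse direction.

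For the forward direction (property of $\tilde{R}_t$ implies property of $R_t$), I would take selectors in $R_t(X)$, $R_t(Y)$ and verify the $R_t$-property pointwise. For conditional convexity, if $u \in R_t(X)$, $v \in R_t(Y)$, and $\lambda \in \LiF{t}$ with $0 \leq \lambda \leq 1$, then $\lambda u + (1-\lambda)v \in \LdpF{t}$ (by boundedness of $\lambda$) and almost surely $(\lambda u + (1-\lambda)v)(\omega) \in \lambda(\omega)\tilde{R}_t(X)(\omega) + (1-\lambda)(\omega)\tilde{R}_t(Y)(\omega) \subseteq \tilde{R}_t(\lambda X + (1-\lambda)Y)(\omega)$, giving $\lambda u + (1-\lambda)v \in R_t(\lambda X + (1-\lambda)Y)$. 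Conditional positive homogeneity is similar: $\lambda u \in \lambda \tilde{R}_t(X) = \tilde{R}_t(\lambda X)$ pointwise, with integrability preserved under multiplication by $\lambda$ and (for the reverse inclusion) by $1/\lambda$. For normalization, the inclusion $R_t(X) + R_t(0) \subseteq R_t(X)$ is immediate from the pointwise identity $\tilde{R}_t(X) + \tilde{R}_t(0) = \tilde{R}_t(X)$.

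For the reverse direction (property of $R_t$ implies property of $\tilde{R}_t$), I would invoke Proposition~5.4.3 of~\cite{KS09} (respectively Theorem~1.6 of Chapter~2 of~\cite{M05} when $p = +\infty$) to produce a Castaing representation $(u_n)_{n\in\N} \subseteq R_t(X)$ with $\tilde{R}_t(X)(\omega) = \cl\{u_n(\omega) : n \in \N\}$ almost surely. For conditional convexity, combining Castaing representations $(u_n)$ of $\tilde{R}_t(X)$ and $(v_m)$ of $\tilde{R}_t(Y)$ with the conditional convexity of $R_t$ yields $\lambda u_n + (1-\lambda) v_m \in R_t(\lambda X + (1-\lambda)Y)$, so pointwise $\lambda(\omega)u_n(\omega) + (1-\lambda)(\omega)v_m(\omega) \in \tilde{R}_t(\lambda X + (1-\lambda)Y)(\omega)$ almost surely; density of the Castaing families together with closedness of the target random set then delivers the pointwise inclusion $\lambda \tilde{R}_t(X) + (1-\lambda)\tilde{R}_t(Y) \subseteq \tilde{R}_t(\lambda X + (1-\lambda)Y)$. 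Conditional positive homogeneity follows analogously (multiply Castaing representers by $\lambda$), and the two inclusions making up normalization are obtained by combining Castaing reps of $\tilde{R}_t(X)$ and $\tilde{R}_t(0)$.

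The delicate point will be the forward direction for normalization, namely $R_t(X) \subseteq R_t(X) + R_t(0)$: one must decompose each $u \in R_t(X)$ as $u = v + w$ with $v \in R_t(X)$ and $w \in R_t(0)$. The pointwise decomposition exists by hypothesis, and an $\Ft{t}$-measurable selection of $(v,w)$ is available by applying a measurable selection theorem to the random set $\omega \mapsto \{(v,w) \in \tilde{R}_t(X)(\omega) \times \tilde{R}_t(0)(\omega) : v + w = u(\omega)\}$. Ensuring $v, w \in \LdpF{t}$ is the subtle step; I would handle it by using $\Ft{t}$-decomposability of $R_t(X)$ and $R_t(0)$ to paste together bounded decompositions on a countable $\Ft{t}$-partition $(\Omega_n)_{n \in \N}$, anchored by a fixed $w_0 \in R_t(0)$ (guaranteed by finiteness at zero) on the exceptional pieces.
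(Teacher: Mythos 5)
Your proposal is correct and takes essentially the same route as the paper: both directions run through the selector correspondence $R_t(X) = \LdpK{p}{t}{\tilde{R}_t(X)}$, with the paper compressing your pointwise and Castaing-representation arguments into the set identity $\LdpK{p}{t}{\Gamma_1 + \Gamma_2} = \LdpK{p}{t}{\Gamma_1} + \LdpK{p}{t}{\Gamma_2}$ and a citation of \cite[chapter 2, proposition 1.2(iii)]{M05} for passing from inclusions of selector sets back to almost-sure inclusions of the random sets. The one place you are more careful than the paper is the normalization step, where the paper simply asserts the equality of selector sets while you correctly isolate the integrable decomposition $u = v + w$ as the point requiring a measurable-selection and decomposability (pasting) argument.
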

\begin{proof}
\begin{enumerate}
\item Let $\tilde{R}_t: \LdpF{} \to \mathcal{S}_t$ be a risk process at time $t$ and $R_t$ be the associated conditional risk measure.
Let $\tilde{R}_t$ be conditionally convex.  Take $X,Y \in \LdpF{}$, $\lambda \in \LiF{t}$ with $0 \leq \lambda \leq 1$. Then,
        \begin{align*}
        \lambda R_t(X) + (1-\lambda)R_t(Y)
        & = \lambda \LdpK{p}{t}{\tilde{R}_t(X)} + (1-\lambda) \LdpK{p}{t}{\tilde{R}_t(Y)}\\
        & = \LdpK{p}{t}{\lambda \tilde{R}_t(X) + (1-\lambda) \tilde{R}_t(Y)}\\
        &\subseteq \LdpK{p}{t}{\tilde{R}_t(\lambda X + (1-\lambda) Y)} = R_t(\lambda X + (1-\lambda) Y).
        \end{align*}
    Let $\tilde{R}_t$ be conditionally positive homogeneous.  Take $X \in \LdpF{}$ and $\lambda \in L^\infty_t(\R_{++})$. Then,
        $
        \lambda R_t(X) = \lambda \LdpK{p}{t}{\tilde{R}_t(X)} = \LdpK{p}{t}{\lambda \tilde{R}_t(X)} = \LdpK{p}{t}{\tilde{R}_t(\lambda X)} = R_t(\lambda X).
       $
   Let $\tilde{R}_t$ be normalized and let $X \in \LdpF{}$. Then,
       $
        R_t(X) + R_t(0) = \LdpK{p}{t}{\tilde{R}_t(X)} + \LdpK{p}{t}{\tilde{R}_t(0)}
        = \LdpK{p}{t}{\tilde{R}_t(X) + \tilde{R}_t(0)} = \LdpK{p}{t}{\tilde{R}_t(X)} = R_t(X).
       $
\item Let $R_t: \LdpF{} \to \mathcal{P}(M_t;M_{t,+})$ be a conditional risk measure at time $t$ and let $\tilde{R}_t$ be the associated risk process.
    Let $R_t$ be conditionally convex. Take $X,Y \in \LdpF{}$ and $\lambda \in \LiF{t}$ with $0 \leq \lambda \leq 1$. Then,
        \begin{align*}
        \LdpK{p}{t}{\lambda \tilde{R}_t(X) + (1-\lambda) \tilde{R}_t(Y)} &= \lambda R_t(X) + (1-\lambda)R_t(Y)\\
        &\subseteq R_t(\lambda X + (1-\lambda) Y) = \LdpK{p}{t}{\tilde{R}_t(\lambda X + (1-\lambda) Y)}.
        \end{align*}
         By \cite[chapter 2 proposition 1.2 (iii)]{M05} it holds $\lambda \tilde{R}_t(X) + (1-\lambda) \tilde{R}_t(Y) \subseteq \tilde{R}_t(\lambda X + (1-\lambda) Y)$ almost surely.
   The proof for conditional positive homogeneity and normalization is analog.
   \end{enumerate}
\end{proof}

As discussed in sections~\ref{sec_vector} and~\ref{sec_measurable}, we have time consistency properties for both the set-optimization  and measurable selector approach to risk measures.  Therefore, we would like to be able to compare multi-portfolio time consistency (definition~\ref{defn_mptc}) and consistency in time (definition~\ref{defn_tc}).  These properties coincide in their notation, however as we will show below the two properties only coincide under additional assumptions.
\begin{cor}
\label{cor_tc}
Let $\seq{\tilde{R}}$ be a normalized conditionally convex consistent in time risk process, then the associated dynamic risk measure is multi-portfolio time consistent if it is convex upper continuous.

Let $\seq{R}$ be a normalized multi-portfolio time consistent dynamic risk measure with closed and $\Ft{t}$-decomposable images for
all times $t$, then the associated risk process is consistent in time.
\end{cor}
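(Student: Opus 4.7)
The plan is to exploit the recursive characterizations of both consistency in time (Theorem~\ref{thm_tc_equiv}(iii)) and multi-portfolio time consistency (Theorem~\ref{thm_mptc_equiv}(\ref{thm_equiv_recursive})), and to move between the selector framework and the set-optimization framework via Theorem~\ref{thm_decomposable}. In both directions the substance of the argument is to reconcile the plain union $\bigcup_{Z \in R_s(X)} R_t(-Z)$ that appears in the MPTC recursion with the closed $\Ft{t}$-decomposable envelope $\cl\env_{\Ft{t}} \bigcup_{Z \in R_s(X)} R_t(-Z)$ that arises in the consistency-in-time recursion through the definition $S_{\tilde R}^\infty(t,{\bf Y}) = \cl\env_{\Ft{t}} \bigcup_{Y \in {\bf Y}} S_{\tilde R}^\infty(t,Y)$.

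For the second implication, Theorem~\ref{thm_decomposable} translates $\tilde R_s(X) \subseteq \tilde R_s({\bf Y})$ at the level of selectors into $R_s(X) \subseteq \cl\env_{\Ft{s}} \bigcup_{Y \in {\bf Y}} R_s(Y)$. Applying the MPTC recursion to $X$ reduces the goal to showing $R_t(-Z) \subseteq \cl\env_{\Ft{t}} \bigcup_{Y \in {\bf Y}} R_t(Y)$ for every $Z$ in this $\Ft{s}$-envelope. The case $Z \in R_s(Y_0)$ for some $Y_0 \in {\bf Y}$ is immediate from MPTC applied to $Y_0$, giving $R_t(-Z) \subseteq R_t(Y_0)$. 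The extension to $Z$ in the $\Ft{s}$-decomposable hull, $Z = \sum_n 1_{\Omega_s^n} Z_n$ with $Z_n \in R_s(Y_n)$, proceeds by combining $\Ft{s}$-decomposability of $A_s$ (inherited from $\Ft{s}$-decomposable valuedness of $R_s(\cdot)$ at time $s$ together with normalization) with MPTC applied to $Y' := \sum_n 1_{\Omega_s^n} Y_n$ to embed $R_t(-Z)$ into $R_t(Y')$ and from there into the target envelope; the closure step uses the assumed closed valuedness of the $R_t$-images. Translating back through Theorem~\ref{thm_decomposable} yields $\tilde R_t(X) \subseteq \tilde R_t({\bf Y})$.

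For the first implication, consistency in time together with Theorem~\ref{thm_tc_equiv}(iii) gives $R_t(X) = \cl\env_{\Ft{t}} \bigcup_{Z \in R_s(X)} R_t(-Z)$, so by Theorem~\ref{thm_mptc_equiv}(\ref{thm_equiv_recursive}) it suffices to show that the union on the right is already closed and $\Ft{t}$-decomposable, i.e.\ that the envelope is redundant. Conditional convexity of $R_t$ implies locality (Proposition~2.8 of \cite{FR12}) and $\Ft{t}$-decomposability of its images; combined with $\Ft{s}$-decomposability of $R_s(X)$ and $\Ft{t} \subseteq \Ft{s}$ one obtains $\Ft{t}$-decomposability of the union by the standard gluing argument: for $u_i \in R_t(-Z_i)$ and $D \in \Ft{t}$ the vector $1_D u_1 + 1_{D^c} u_2$ lies in $R_t(-Z)$ with $Z := 1_D Z_1 + 1_{D^c} Z_2 \in R_s(X)$. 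Closedness of the union is precisely what convex upper continuity is engineered to control: for every $u \in M_t$ the set $\{X \in \LdpF{} : u \in R_t(X)\}$ equals the c.u.c.\ preimage $R_t^{-1}(u + M_{t,-})$ and is therefore closed, so the joint set $\{(u,Z) \in M_t \times M_s : u \in R_t(-Z)\}$ is closed; intersecting this with $M_t \times R_s(X)$ (closed by hypothesis) and projecting to the first coordinate via a net-based limiting argument compatible with the underlying topology then gives closedness of $\bigcup_{Z \in R_s(X)} R_t(-Z)$.

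The main obstacle in both directions is the same phenomenon, namely the interaction between closure/envelope operations and the union over risk compensators. In the second implication the envelope lives at the finer $\sigma$-algebra $\Ft{s}$ while the conclusion lives at the coarser $\Ft{t}$, so one must leverage decomposability of the acceptance set to convert $\Ft{s}$-decomposable combinations from ${\bf Y}$ into legitimate inputs for the MPTC recursion. In the first implication the difficult step is the closedness of $\bigcup_{Z \in R_s(X)} R_t(-Z)$, which need not be compact and is controlled exactly through convex upper continuity via the closed graph of $R_t$ composed with the reflection $Z \mapsto -Z$; no naive sequential extraction on $R_s(X)$ will work, which is precisely why c.u.c.\ enters as a hypothesis.
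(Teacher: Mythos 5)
Your high-level plan for the first implication matches the paper's: show that the closed $\Ft{t}$-decomposable envelope in $S_{\tilde R}^{\infty}(t,-S_{\tilde R}^{\infty}(s,X)) = \cl\env_{\Ft{t}}\bigcup_{Z \in R_{s}(X)} R_t(-Z)$ is redundant, handling decomposability via locality together with lemma~\ref{lemma_partition} (your gluing argument is exactly the paper's), and handling closedness via convex upper continuity. However, your justification of the closedness step is not a proof. Closedness of the graph of $Z \mapsto R_t(-Z)$ intersected with $M_t \times R_{s}(X)$ does not yield closedness of its projection onto the first coordinate: projections of closed sets are not closed without compactness or properness, and $R_{s}(X)$ is not compact. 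A ``net-based limiting argument'' here is precisely the naive extraction you yourself say cannot work. The actual mechanism is a composition theorem for convex upper continuous set-valued maps (the paper cites appendix~B of \cite{FR12b}): c.u.c.\ of $R_t$ and $R_{s}$ implies that $X \mapsto \bigcup_{Z \in R_{s}(X)}R_t(-Z)$ is itself c.u.c.\ and hence closed-valued. This is a statement about preimages of the composed map, not about projecting a closed graph.

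For the second implication you take a genuinely different, and much harder, route than the paper, and it has gaps. The paper does not verify the implication $\tilde R_{s}(X) \subseteq \tilde R_{s}({\bf Y}) \Rightarrow \tilde R_t(X) \subseteq \tilde R_t({\bf Y})$ directly; it invokes the equivalent recursion characterization, condition~(iii) of theorem~\ref{thm_tc_equiv}. Since MPTC plus normalization give $R_t(X) = \bigcup_{Z \in R_{s}(X)}R_t(-Z)$ and this set equals $R_t(X)$, which is closed and $\Ft{t}$-decomposable by hypothesis, the envelope is a no-op and $S_{\tilde R}^{\infty}(t,X) = S_{\tilde R}^{\infty}(t,-S_{\tilde R}^{\infty}(s,X))$ follows in one line. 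Your direct verification, by contrast, must control $R_t(-Z)$ for $Z$ ranging over the \emph{closed} $\Ft{s}$-decomposable hull of $\bigcup_{Y}R_{s}(Y)$, and two steps there do not go through: (i) passing from $Z_n \to Z$ with $R_t(-Z_n)$ contained in the target to the same containment for $R_t(-Z)$ requires upper continuity of $R_t$, which is not among the hypotheses in this direction (only closed images are assumed); and (ii) after gluing $Y' = \sum_n 1_{\Omega_{s}^n}Y_n$ along an $\Ft{s}$-partition, you cannot decompose $R_t(Y')$ back into the $R_t(Y_n)$ via locality, because locality of $R_t$ only applies to $\Ft{t}$-measurable sets and $\Omega_{s}^n \in \Ft{s}$. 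So $R_t(Y')$ need not sit inside $\cl\env_{\Ft{t}}\bigcup_{Y \in {\bf Y}}R_t(Y)$. You should switch to the recursion characterization for this direction.
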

\begin{proof}
\begin{enumerate}
\item Let $\seq{\tilde{R}}$ be a normalized conditionally convex risk process which is consistent in time such that the associated dynamic risk measure $\seq{R}$ is convex upper continuous.  By
    theorem~\ref{thm_tc_equiv}, it follows that $R_t(X) = \cl \env_{\Ft{t}} \bigcup_{Z \in R_{s}(X)} R_t(-Z)$ for any $X \in \LdpF{}$
    and any times $t,s \in [0,T]$ such that $t \leq s$.  By corollary~\ref{cor_decomposable} above, $\seq{R}$ is conditionally convex.

    We will show that the recursive form $\bigcup_{Z \in R_{s}(X)} R_t(-Z)$ is $\Ft{t}$-decomposable.  Let $N \in \mathbb{N}$,
$(u_n)_{n =1}^N \subseteq \bigcup_{Z \in R_{s}(X)} R_t(-Z)$ and $(\Omega_t^n)_{n=1}^N \subseteq \Ft{t}$ is a partition of $\Omega$.
Denote by $Z_n \in R_{s}(X)$ the element such that $u_n \in R_t(-Z_n)$ for every $n \in \{1,...,N\}$.  By
lemma~\ref{lemma_partition}, it follows that $\sum_{m =1}^N 1_{\Omega_t^m} Z_m \in R_{s}(X)$.  Then we can see
    \begin{align*}
    \sum_{n =1}^N 1_{\Omega_t^n}u_n &\in \sum_{n =1}^N 1_{\Omega_t^n} R_t(Z_n) = \sum_{n =1}^N 1_{\Omega_t^n} R_t(1_{\Omega_t^n} Z_n)\\
    &= \sum_{n =1}^N 1_{\Omega_t^n} R_t(1_{\Omega_t^n} \sum_{m =1}^N 1_{\Omega_t^m} Z_m) = \sum_{n =1}^N 1_{\Omega_t^n} R_t(\sum_{m
=1}^N 1_{\Omega_t^m} Z_m)\\
    &\subseteq \lrcurly{u \in M_t: \exists J \subseteq \{1,...,N\}: \P(\cup_{j \in J} A_j) = 1, \forall j \in J: 1_{\Omega_t^j} u \in
1_{\Omega_t^j} R_t(\sum_{m =1}^N 1_{\Omega_t^m} Z_m)}\\
    &= R_t(\sum_{m =1}^N 1_{\Omega_t^m} Z_m) \subseteq \bigcup_{Z \in R_{s}(X)} R_t(-Z).
    \end{align*}
    In the above we use the local property for conditionally convex risk measures (\cite[proposition 2.8]{FR12}) and lemma~\ref{lemma_partition}.
    Therefore, $\bigcup_{Z \in R_{s}(X)} R_t(-Z)$ is $\Ft{t}$-decomposable, and
    thus $R_t(X) = \cl \bigcup_{Z \in R_{s}(X)} R_t(-Z)$.  And as seen in~\cite[appendix B]{FR12b}, if $\seq{R}$ is convex upper continuous then $\bigcup_{Z \in R_{s}(X)} R_t(-Z)$ is closed for any $X \in \LdpF{}$. Therefore, $R_t(X) = \bigcup_{Z \in R_{s}(X)} R_t(-Z)$, i.e. $R_t(X)$ multi-portfolio time consistent.
\item Let $\seq{R}$ be a normalized multi-portfolio time consistent dynamic risk measure with closed and $\Ft{t}$-decomposable images
    for all time $t$.  Let $\seq{\tilde{R}}$ be the associated risk process.  By theorem~\ref{thm_mptc_equiv}, it follows that $R_t(X) =
    \bigcup_{Z \in R_{s}(X)} R_t(-Z)$ for any $X \in \LdpF{}$ and any times $t,s \in [0,T]$ such that $t \leq s$.  Since
    $R_t$ has closed and $\Ft{t}$-decomposable images then it additionally follows that $\bigcup_{Z \in R_{s}(X)} R_t(-Z) = \cl \env_{\Ft{t}}
    \bigcup_{Z \in R_{s}(X)} R_t(-Z)$ for any $X \in \LdpF{}$.  Therefore, $\LdpK{p}{t}{\tilde{R}_t(X)} =
    \LdpK{p}{t}{\tilde{R}_t(-R_{s}(X))}$ and thus, by theorem~\ref{thm_tc_equiv}, it follows that $\seq{\tilde{R}}$ is consistent in
    time.
\end{enumerate}
\end{proof}

The convex upper continuity in the first part of the above theorem could we weakened as one only needs $\bigcup_{Z \in R_{s}(X)} R_t(-Z)$ is closed for any $X \in \LdpF{}$ and $t\leq s$.

Up to this point we have made the additional assumption for conditional risk measures of section~\ref{sec_vector} to be
$\Ft{t}$-decomposable.  The following results (lemma~\ref{lemma_partition} and corollary~\ref{cor_convex_decomposable} below) demonstrate that a conditional risk measure with closed and conditionally convex images
satisfies a property stronger than $\Ft{t}$-decomposable images as the property remains true for any (possibly uncountable) partition as well.

\begin{lemma}
\label{lemma_partition}
Let $\seq{R}$ be a dynamic risk measure with closed and conditionally convex images.  Let $(A_i)_{i \in I} \subseteq \Ft{t}$ be a partition of $\Omega$.  Then
\[R_t(X) = \lrcurly{u \in M_t: \exists J \subseteq I \mbox{ with }\P(\cup_{j \in J} A_j) = 1\mbox{ such that } 1_{A_j} u \in 1_{A_j} R_t(X)\; \forall j \in J} \]
for any $X \in \LdpF{}$ and any time $t$.
\end{lemma}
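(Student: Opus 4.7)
The proof splits into the two inclusions. The inclusion from left to right is immediate: any $u \in R_t(X)$ satisfies $1_{A_i} u \in 1_{A_i} R_t(X)$ for every $i \in I$, so taking $J = I$ works.

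For the reverse inclusion, my plan is to reduce to a countable partition and then build a finite approximation inside $R_t(X)$. Since $(A_i)_{i \in I}$ partitions the probability space $\Omega$, at most countably many $A_i$ carry positive $\P$-measure, so the condition $\P(\cup_{j \in J} A_j) = 1$ is realized along a countable subfamily $\{j_1, j_2, \ldots\} \subseteq J$, and I may discard the rest. Because conditional convexity of $R_t$ implies the local property (proposition~2.8 of~\cite{FR12}), the hypothesis $1_{A_{j_k}} u \in 1_{A_{j_k}} R_t(X) = 1_{A_{j_k}} R_t(1_{A_{j_k}} X)$ furnishes, for each $k$, some $v_k \in R_t(X)$ with $1_{A_{j_k}} u = 1_{A_{j_k}} v_k$ $\P$-a.s.

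Now I fix any $v \in R_t(X)$, set $B_n := \bigcup_{k=1}^n A_{j_k} \in \Ft{t}$, and define
\[
u_n := \sum_{k=1}^n 1_{A_{j_k}} v_k + 1_{B_n^c} v.
\]
The key observation is that conditional convexity, specialized to $\lambda = 1_D$ with $D \in \Ft{t}$ and $Y = X$, yields the pasting rule $1_D w_1 + 1_{D^c} w_2 \in R_t(X)$ whenever $w_1, w_2 \in R_t(X)$. Induction over the atoms of the finite $\Ft{t}$-partition $\{A_{j_1}, \ldots, A_{j_n}, B_n^c\}$ then gives $u_n \in R_t(X)$.

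Finally, I pass to the limit using closedness of $R_t(X)$. Since $\P\bigl(\bigcup_k A_{j_k}\bigr) = 1$, we have $1_{B_n^c} \to 0$ $\P$-a.s., whence $u_n - u = 1_{B_n^c}(v - u) \to 0$ $\P$-a.s., dominated by $|u|+|v|$. Dominated convergence yields $u_n \to u$ in $L^p$-norm for $p \in [1,+\infty)$ and in probability for $p=0$; in either case closedness forces $u \in R_t(X)$. The main obstacle is the case $p = +\infty$: the norm $\|u_n - u\|_\infty$ need not tend to zero, so norm closedness alone is not enough. For this case one should either equip $\LdiF{}$ with the weak* topology (under which bounded a.s.\ convergence does imply weak* convergence of $u_n$ to $u$, so the closedness argument goes through unchanged) or invoke an auxiliary decomposability argument in the spirit of chapter~2, theorem~1.6 of~\cite{M05} to extend the finite pasting to the countable setting directly.
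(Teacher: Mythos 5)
Your proof is correct and reaches the conclusion by a recognizably similar but technically lighter route than the paper's. The paper also approximates $u$ by elements of $R_t(X)$ that agree with $u$ on a large union of atoms and equal a fixed element of $R_t(X)$ on the remainder, but it works with the full (possibly uncountable) index set: after translating by some $m \in R_t(X)$ so that $0 \in R_t(X+m)$, it writes $u-m$ as the net-indexed sum $\sum_{j \in J} 1_{A_j}(u-m)$ in the sense of arbitrary summation in a topological group, handles the finite partial sums via a separate proposition on $N$-fold conditionally convex combinations (which itself needs closedness and an approximation by strictly positive weights), and concludes with a $\liminf$ of sets. You instead (i) reduce to the countable subfamily of positive-measure atoms, (ii) obtain the finite pastings by direct induction on the two-set identity $1_D w_1 + 1_{D^c} w_2 \in R_t(X)$, which is precisely conditional convexity with $\lambda = 1_D$ and $Y = X$ and requires no closedness at that stage, and (iii) pass to a genuine sequential limit by dominated convergence. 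This avoids both the uncountable-summation formalism and the auxiliary proposition; the invocation of the local property to produce the $v_k$ is unnecessary (membership in $1_{A_{j_k}}R_t(X)$ already yields $v_k$ by definition of elementwise multiplication) but harmless.

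Two caveats. First, your reduction asserts that the positive-measure atoms indexed by $J$ already have union of full measure; this is not automatic (consider $[0,1]$ partitioned into singletons, where every atom is null, the right-hand side of the lemma becomes all of $M_t$, and the statement itself fails). The paper's proof tacitly needs the same hypothesis, since its identity $u-m=\sum_{j\in J}1_{A_j}(u-m)$ as a net limit requires the tail measure $\P\bigl(\cup_{j \in J\setminus K}A_j\bigr)$ to vanish, so this is a shared implicit assumption rather than a defect peculiar to your argument --- but you should state it as a hypothesis rather than assert it as a consequence. Second, your handling of $p=+\infty$ is apt: norm convergence of $u_n$ to $u$ genuinely fails, bounded a.s.\ convergence does give weak* convergence, and weak*-closedness of the values then suffices; the paper's own net limit suffers from the identical problem in the $L^\infty$ norm topology it adopts in this section, so your explicit treatment of this case is, if anything, more careful than the original.
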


Before giving the proof we give a remark on the uncountable summation as it will be used in part 2 (b) of the proof.
\begin{rem}\label{rem_summ}
As given in~\cite[Chapter~3~Section~5]{B66} and~\cite[Chapter~3~Section~3.9]{C66}, the arbitrary summation on a Hausdorff commutative topological group is given by $\sum_{j \in J} f_j = \lim_{K \in \mathcal{J}} \sum_{k \in K} f_k$, for any $\{f_j \in \mathcal{X}: j \in J\}$ where $\mathcal{X}$ is a Hausdorff commutative topological group, such that $\mathcal{J} = \lrcurly{K \subseteq J: \#K < +\infty}$, i.e. $\mathcal{J}$ are the finite subsets of $J$.  Note that $\mathcal{J}$ is a net with order given by set inclusion and join given by the union.

In particular, for our concerns, the metric topologies for $\LdpF{t}$ for $p \in [0,+\infty]$
are all Hausdorff commutative topological groups.  (If $p = 0$ then we consider convergence in measure, which is equivalent to a metric space with metric $d(f,g) = \int_{\Omega} \frac{|f - g|}{1 + |f-g|} d\P$ (lemma 13.40 in~\cite{AB07})).
\end{rem}

\begin{proof}[Proof of lemma~\ref{lemma_partition}]
Note that $1_{D} R_t(X) = \lrcurly{1_{D} u: u \in R_t(X)}$ for any $D \in \Ft{t}$. For notational convenience let $\hat{R}_t(X) := \lrcurly{u \in M_t: \exists J \subseteq I \mbox{ with }\P(\cup_{j \in J} A_j) = 1\mbox{ such that } 1_{A_j} u \in 1_{A_j} R_t(X)\; \forall j \in J} $.
\begin{enumerate}
\item The inclusion $R_t \subseteq \hat{R}_t$ follows straight forward: Let $u \in R_t(X)$, then by definition $1_D u \in 1_D R_t(X)$ for any $D \in \Ft{t}$, and in particular this is true for $D = A_i$ for any $i \in I$.  Therefore it follows that $u \in \hat{R}_t(X)$.
\item To prove $\hat{R}_t\subseteq R_t$ we will consider the two case: finite and infinite partitions. Let $u \in \hat{R}_t(X)$ and $J \subseteq I$ the underlying subindex.  Then $u = \sum_{j \in J} 1_{A_j} u$ almost surely, therefore $u \in R_t(X)$ if and only if $\sum_{j \in J} 1_{A_j} u \in R_t(X)$ since they are in the same equivalence class. Let $\#J$ denote the cardinality of the set $J$.  Note that by definition $1_{A_j} u \in 1_{A_j} R_t(X)$ for every $j \in J$.
    \begin{enumerate}
    \item If $\#J < +\infty$, i.e. if $J$ is a finite set, then trivially
        \[\sum_{j \in J} 1_{A_j} u \in \sum_{j \in J} 1_{A_j} R_t(X) \subseteq R_t(X)\]
        by closedness and conditional convexity of $R_t(X)$ as shown in proposition~\ref{prop_conditional_convex} below.  And thus $u \in R_t(X)$.
    \item Consider the case $\#J = +\infty$, i.e. if $J$ is not a finite set. Let $u \in \hat{R}_t(X)$, that is there exists $J \subseteq I$ with $\P(\cup_{j \in J} A_j) = 1$ such that $1_{A_j} u \in 1_{A_j} R_t(X)$ for all $j \in J$, or equivalently $1_{A_j} (u-m) \in 1_{A_j} R_t(X+m)$ for all $j \in J$ for some $m \in R_t(X)$ by using the translation property of $R_t$. We want to show $u \in R_t(X)$, respectively $u - m \in R_t(X + m)$.  Recall the summation as given in remark~\ref{rem_summ}, and the notation $\mathcal{J} = \lrcurly{K \subseteq J: \#K < +\infty}$.
        \begin{align}
        \nonumber u-m&=\sum_{j \in J} 1_{A_j} (u-m) \in \sum_{j \in J} 1_{A_j} R_t(X + m) = \lrcurly{\sum_{j \in J} 1_{A_j} Z_j: \forall j \in J: Z_j \in R_t(X + m)}\\
        \label{eq_partition-1} &= \lrcurly{\lim_{K \in \mathcal{J}} \sum_{k \in K} 1_{A_k} Z_k: \forall j \in J: Z_j \in R_t(X + m)}\\
        \nonumber &= \lrcurly{\lim_{K \in \mathcal{J}} \lrparen{\sum_{k \in K} 1_{A_k} Z_k + 1_{(\cup_{j \in J \backslash K} A_j)} 0}: \forall j \in J: Z_j \in R_t(X + m)}\\
        \label{eq_partition-2} &\subseteq \lrcurly{\lim_{K \in \mathcal{J}} \lrparen{\sum_{k \in K} 1_{A_k} Z_k + 1_{(\cup_{j \in J \backslash K} A_j)} \bar{Z}}: \forall j \in J: Z_j,\bar{Z} \in R_t(X + m)}\\
        \nonumber &\subseteq \liminf_{K \in \mathcal{J}} \lrcurly{\sum_{k \in K} 1_{A_k} Z_k + 1_{(\cup_{j \in J \backslash K} A_j)} \bar{Z}: \forall k \in K: Z_k,\bar{Z} \in R_t(X + m)}\\
        \nonumber &= \liminf_{K \in \mathcal{J}} \lrparen{\sum_{k \in K} 1_{A_k} R_t(X+m) + 1_{(\cup_{j \in J \backslash K} A_j)} R_t(X+m)}\\
        \label{eq_partition-3} &= \liminf_{K \in \mathcal{J}} R_t(X + m) = R_t(X+m).
        \end{align}
        Equation~\eqref{eq_partition-1} follows from the definition of an arbitrary summation as given in~\cite{B66,C66}, see remark~\ref{rem_summ}.  Inclusion~\eqref{eq_partition-2} follows from $0 \in R_t(X+m)$ since $m \in R_t(X)$.
        Equation~\eqref{eq_partition-3} follows from the finite case given above applied to the partition $((A_k)_{k \in K},\cup_{j \in J \backslash K} A_j)$.  Note that $\cup_{j \in J \backslash K} A_j \in \Ft{t}$ by $\seq{{\Ft{}}}$ a filtration satisfying the usual conditions (and $\Ft{t}$ is a sigma algebra).  Furthermore, note that we define the limit inferior as in~\cite{L11} to be $\liminf_{n \in N} B_n = \bigcap_{n \in N} \cl\bigcup_{m \geq n} B_m$ for a net of sets $(B_n)_{n \in N}$.
    \end{enumerate}
\end{enumerate}
\end{proof}

The following proposition is used in the proof of lemma~\ref{lemma_partition}.
\begin{prop}
\label{prop_conditional_convex}
A closed set $D \subseteq \LdpF{t}$ is conditionally convex if and only if for any $N \in \mathbb{N}$ where $N \geq 2$
\begin{equation}
\label{eq_N-convex}
\sum_{n = 1}^N \lambda_n D \subseteq D
\end{equation}
for every $(\lambda_n)_{n = 1}^N \in \Lambda_N := \lrcurly{(x_n)_{n = 1}^N: \sum_{n = 1}^N x_n = 1 \sas,  x_n \in L^\infty_t(\R_{+}) \;\forall n \in \{1,\ldots,N\}}$.
\end{prop}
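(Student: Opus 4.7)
The plan is to prove this by induction on $N$, starting from the definition of conditional convexity as the two-element case.

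The implication ``$\eqref{eq_N-convex}$ for all $N\ge 2$ $\Rightarrow$ conditional convexity'' is immediate: take $N=2$ and $(\lambda_1,\lambda_2)=(\lambda,1-\lambda)$ with $\lambda\in L^\infty_t$ such that $0\le \lambda\le 1$. For the converse, the base case $N=2$ is exactly the definition of conditional convexity of $D$ (i.e. $X_1,X_2\in D$ and $\lambda\in L^\infty_t(\mathbb{R}_+)$ with $0\le\lambda\le 1$ imply $\lambda X_1+(1-\lambda)X_2\in D$).

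For the inductive step, suppose the statement holds for $N$ and let $(\lambda_n)_{n=1}^{N+1}\in\Lambda_{N+1}$ and $X_1,\dots,X_{N+1}\in D$. Set $\mu:=\sum_{n=1}^N \lambda_n\in L^\infty_t(\mathbb{R}_+)$ with $0\le\mu\le 1$; note $\lambda_{N+1}=1-\mu$. On the $\mathcal{F}_t$-measurable set $B:=\{\mu>0\}$ define $\nu_n:=\lambda_n/\mu$ for $n=1,\dots,N$, and on $B^c$ set $\nu_n:=1/N$. Then $(\nu_n)_{n=1}^N\in\Lambda_N$ (each $\nu_n$ is in $L^\infty_t(\mathbb{R}_+)$ since $\lambda_n\le \mu$ on $B$, and $\sum_n\nu_n=1$ a.s.), so by the inductive hypothesis $Y:=\sum_{n=1}^N \nu_n X_n\in D$. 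The key identity is
\begin{equation*}
\mu Y+(1-\mu)X_{N+1}=\sum_{n=1}^{N+1}\lambda_n X_n,
\end{equation*}
which holds on $B$ because $\mu\nu_n=\lambda_n$, and on $B^c$ because each $\lambda_n=0$ for $n\le N$ there while $\lambda_{N+1}=1$. Applying conditional convexity to $Y,X_{N+1}\in D$ with weight $\mu\in L^\infty_t([0,1])$ gives $\sum_n\lambda_n X_n\in D$, completing the induction.

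The only real subtlety is handling the set $\{\mu=0\}$ when defining the normalized weights $\nu_n$, which is resolved by choosing arbitrary fixed weights summing to one on that null-contribution set; the closedness hypothesis on $D$ is not actually invoked in this induction, and appears in the statement only because the proposition is used in the context of closed conditionally convex sets (in particular in the proof of Lemma~\ref{lemma_partition}).
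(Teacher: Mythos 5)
Your proof is correct, but it takes a genuinely different route from the paper's. The paper first establishes \eqref{eq_N-convex} for the subfamily $\Lambda_N^{>}$ of strictly positive weights, where the classical static induction applies verbatim because the partial sums $\mu=\sum_{n=1}^{N}\lambda_n$ are almost surely positive and can be divided by; it then passes to a general $(\lambda_n)_{n=1}^{N}\in\Lambda_N$ by approximating it almost surely with a sequence in $\Lambda_N^{>}$, using dominated convergence to obtain convergence of $\sum_{n}\lambda_n^m X_n$ in $\LdpF{t}$, and invoking the closedness of $D$ to conclude that the limit remains in $D$. You instead run the induction directly on $\Lambda_N$ and resolve the division-by-zero issue on the $\Ft{t}$-measurable set $\{\mu=0\}$ by assigning arbitrary fixed weights there --- which is legitimate because the normalized weights $\nu_n$ need only be $\Ft{t}$-measurable with $0\le\nu_n\le 1$, and on $\{\mu=0\}$ the identity $\mu Y+(1-\mu)X_{N+1}=\sum_{n}\lambda_n X_n$ holds trivially since $\lambda_{N+1}=1$ there. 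Your argument is more elementary (no limiting procedure) and, as you correctly observe, establishes the forward implication without the closedness hypothesis, so it proves a marginally stronger statement; the paper's approximation argument is more routine but genuinely needs $D$ to be closed. The two proofs agree on the easy converse direction.
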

\begin{proof}
\begin{enumerate}
\item[$\Leftarrow$] If $N = 2$ then this is the definition of conditional convexity.  If $N > 2$ then choose $(\lambda_n)_{n = 1}^N$
    such that $\lambda_n = 0$ almost surely for every $n > 2$, this then reduces to the case when $N = 2$ and thus $D$ is conditionally
    convex.

\item[$\Rightarrow$] We will first define a set of multipliers for strict convex combinations \[\Lambda_N^> = \lrcurly{(x_n)_{n = 1}^N: \sum_{n = 1}^N x_n = 1 \sas, x_n \in L^\infty_t(\R_{++})\;\forall n \in \{1,\ldots,N\}}.\]  Then the result for $\Lambda_N^>$ for any $N \in \mathbb{N}$ follows as in the static case (i.e. when $x_n\in\R_{++}$) by induction.

    Let $(\lambda_n)_{n = 1}^N \in \Lambda_N$.  Then there exists a sequence of $((\lambda_n^m)_{n = 1}^N)_{m = 0}^{+\infty} \subseteq \Lambda_N^>$ which converges almost surely to $(\lambda_n)_{n = 1}^N$ (i.e. for any $n \in \{1,\ldots,N\}$, $(\lambda_n^m)_{m = 0}^{+\infty}$ converges almost surely to $\lambda_n$, and for every $m$ it holds $\sum_{n = 1}^N \lambda_n^m = 1$ almost surely).  By the dominated convergence theorem, it follows that $\lambda_n^m X$ converges to $\lambda_n X$ in the metric topology
    for any $X \in \LdpF{t}$.  Therefore for any $(X_n)_{n = 1}^N \subseteq D$ (and let $\bar{X}_m = \sum_{n = 1}^N \lambda_n^m X_n \in D$ for any $m$)
    \begin{align*}
    \sum_{n = 1}^N \lambda_n X_n &= \sum_{n = 1}^N \lim_{m \to +\infty} \lambda_n^m X_n = \lim_{m \to +\infty} \sum_{n = 1}^N \lambda_n^m X_n = \lim_{m \to +\infty} \bar{X}_m \in D
    \end{align*}
    by $\bar{X}_m$ convergent (since it is the finite sum of converging series) and $D$ closed.
\end{enumerate}
\end{proof}

\begin{cor}
\label{cor_convex_decomposable}
Any conditional risk measure $R_t$ with closed and conditionally convex images has $\Ft{t}$-decomposable images.
\end{cor}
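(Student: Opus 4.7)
The plan is to deduce this essentially as an immediate consequence of Proposition~\ref{prop_conditional_convex}, which we have just established. The key observation is that indicator functions of a finite $\Ft{t}$-measurable partition form a special case of the conditional convex combinations considered there.

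First, I would fix an arbitrary $X \in \LdpF{}$ and let $D := R_t(X)$. By hypothesis $D$ is closed and conditionally convex in $\LdpF{t}$ (since $D \subseteq M_t \subseteq \LdpF{t}$, and $M_t$ is itself closed in $\LdpF{t}$). Next, I would take a finite partition $(\Omega_t^n)_{n=1}^N \subseteq \Ft{t}$ of $\Omega$ and any family $(u_n)_{n=1}^N \subseteq D$, and observe that the indicator coefficients $\lambda_n := 1_{\Omega_t^n}$ satisfy $\lambda_n \in L^\infty_t(\R_+)$ and $\sum_{n=1}^N \lambda_n = 1$ $\P$-a.s., so that $(\lambda_n)_{n=1}^N \in \Lambda_N$ in the notation of Proposition~\ref{prop_conditional_convex}.

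Applying Proposition~\ref{prop_conditional_convex} then yields
\[
\sum_{n=1}^N 1_{\Omega_t^n} u_n \in \sum_{n=1}^N 1_{\Omega_t^n} D \subseteq D = R_t(X),
\]
which is exactly the condition for $\Ft{t}$-decomposability in Definition~\ref{defn_decomposable}. Since $X$ was arbitrary, every image of $R_t$ is $\Ft{t}$-decomposable.

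There is essentially no obstacle here: the only subtlety is verifying that the hypotheses of Proposition~\ref{prop_conditional_convex} are met by the images of $R_t$ (closedness and conditional convexity are both assumed), and that indicator-weighted sums on a finite $\Ft{t}$-partition are a legitimate instance of a $\Lambda_N$-combination. Both are immediate, so the corollary reduces to citing the proposition with the particular coefficients $\lambda_n = 1_{\Omega_t^n}$.
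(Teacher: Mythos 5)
Your proof is correct. The core idea is exactly the one the paper relies on: indicator functions $1_{\Omega_t^n}$ of a finite $\Ft{t}$-partition form an admissible element of $\Lambda_N$, so the $N$-fold conditional convexity of Proposition~\ref{prop_conditional_convex} applied to the closed set $D = R_t(X)$ gives $\sum_{n=1}^N 1_{\Omega_t^n} u_n \in \sum_{n=1}^N 1_{\Omega_t^n} D \subseteq D$. The only difference is in the routing: the paper formally deduces the corollary from Lemma~\ref{lemma_partition} (the characterization of $R_t(X)$ via arbitrary, possibly uncountable, partitions), whereas you invoke Proposition~\ref{prop_conditional_convex} directly. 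Since the finite-partition case of Lemma~\ref{lemma_partition} is itself proved by precisely your indicator-coefficient argument, your version is a more direct path to the same conclusion and avoids the detour through the stronger (and harder) uncountable-partition statement, which is not needed for $\Ft{t}$-decomposability as defined in Definition~\ref{defn_decomposable}. No gaps.
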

\begin{proof}
Let $R_t$ be a conditional risk measure with closed and conditionally convex images, and let $X \in \LdpF{}$.  Let $(\Omega_t^n)_{n =1}^N \subseteq \Ft{t}$, for some $N \in \mathbb{N}$, be a finite partition of $\Omega$.  By lemma~\ref{lemma_partition}, \[R_t(X) = \lrcurly{u \in M_t: \exists J \subseteq \{1,...,N\}: \P(\cup_{j \in J} \Omega_t^j) = 1, \forall j \in J: 1_{\Omega_t^j} u \in 1_{\Omega_t^j} R_t(X)}.\]  Therefore, if $(u_n)_{n =1}^N \subseteq R_t(X)$, 
then $1_{\Omega_t^m} \sum_{n =1}^N 1_{\Omega_t^n} u_n = 1_{\Omega_t^m} u_m \in 1_{\Omega_t^m} R_t(X)$ for every $m \in \{1,...,N\}$, and thus $\sum_{n =1}^N 1_{\Omega_t^n} u_n \in R_t(X)$.
\end{proof}

We showed that when the dynamic risk measure has closed and conditionally convex images, the set-optimization  approach of section~\ref{sec_vector}
and the measurable selector approach of section~\ref{sec_measurable} coincide. As a conclusion, the set-optimization  approach which is
using convex analysis results for set-valued functions, i.e. set-optimization, seems to be the richer approach as it allows to handle primal and dual representations for
$\LdpF{}$ spaces ($p\in[1,+\infty]$) as well as for the space $B_{K_T,n}$ (or any other locally convex preimage space). Furthermore, it allows to consider conditionally convex (and not necessarily conditionally coherent) risk measures as well as convex risk measures, whereas the measurable selectors approach relies heavily on the conditional coherency assumption.

\subsection{Set-optimization approach versus  set-valued portfolios}
\label{sec_vector-portfolio}
As in the prior sections, consider $\LdpF{}$ spaces with $p \in [0,+\infty]$.
\begin{thm}
\label{thm_vector-portfolio}
Given a conditional risk measure $R_t: \LdpF{} \to \mathcal{P}(M_t;M_{t,+})$ (see definition~\ref{defn_conditional}), then the function
${\bf R}_t: \mathcal{S}_T^d \to \mathcal{P}(M_t;M_{t,+})$ defined by
\begin{equation}
\label{ME}
{\bf R}_t(\X) := \bigcup_{Z \in \LdpK{p}{}{\X}} R_t(Z)
\end{equation}
for any set-valued portfolio $\X$ is a set-valued conditional risk measure (see definition~\ref{defn_fun-sets_rm}).

Given a set-valued conditional risk measure ${\bf R}_t: \bar{\mathcal {S}}_T^d \to \mathcal{P}(M_t;M_{t,+})$ (see
definition~\ref{defn_fun-sets_rm}) and a mapping $\X: \LdpF{} \to \bar{\mathcal {S}}_T^d$ of the set-valued portfolio associated with a
(random) portfolio vector such that $\X$ is monotone and translative, i.e. $\X(X) \subseteq \X(Y)$ if $Y - X \in \LdpF{+}$ and $\X(X + u) = \X(X) + u$ for any $X,Y \in \LdpF{}$ and $u \in M_t$, then the function
$R_t: \LdpF{} \to \mathcal{P}(M_t;M_{t,+})$ defined by
\begin{equation}
\label{RR}
R_t(X) := {\bf R}_t(\X(X))
\end{equation}
for any $X \in \LdpF{}$ is a conditional risk measure (see definition~\ref{defn_conditional}) which might not be finite at zero.
\end{thm}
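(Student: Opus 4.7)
The plan is to verify each clause of the two definitions (Definition~\ref{defn_conditional} for $R_t$ and Definition~\ref{defn_fun-sets_rm} for ${\bf R}_t$) by direct set-theoretic manipulation, since the statement is essentially a structural transfer result. The only ingredient beyond definitions is the identity $\LdpK{p}{}{\X+m} = \LdpK{p}{}{\X} + m$ for $m\in M_t \subseteq \LdpF{}$, which follows because $(\X+m)(\omega) = \X(\omega)+m(\omega)$ almost surely. No deep obstacle arises; the only subtlety is checking that ${\bf R}_t$ takes values in $\mathcal{P}(M_t;M_{t,+})$ and, for the second part, explaining why finiteness at zero may fail.

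For the first part, I would first check that ${\bf R}_t(\X) \in \mathcal{P}(M_t;M_{t,+})$: since each $R_t(Z)$ is itself in $\mathcal{P}(M_t;M_{t,+})$, i.e.\ satisfies $R_t(Z) = R_t(Z) + M_{t,+}$, taking the union over $Z\in \LdpK{p}{}{\X}$ preserves this identity. Then for monotonicity, if $\X \subseteq {\bf Y}$ almost surely, then $\LdpK{p}{}{\X} \subseteq \LdpK{p}{}{{\bf Y}}$, so $\bigcup_{Z \in \LdpK{p}{}{\X}} R_t(Z) \subseteq \bigcup_{Z \in \LdpK{p}{}{{\bf Y}}} R_t(Z)$, giving ${\bf R}_t({\bf Y}) \supseteq {\bf R}_t(\X)$. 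For cash invariance, using the identity above together with $M_t$-translativity of $R_t$,
\[
{\bf R}_t(\X + m) = \bigcup_{Z \in \LdpK{p}{}{\X} + m} R_t(Z) = \bigcup_{Z' \in \LdpK{p}{}{\X}} R_t(Z' + m) = \bigcup_{Z' \in \LdpK{p}{}{\X}} \lrsquare{R_t(Z') - m} = {\bf R}_t(\X) - m.
\]

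For the second part, I would verify $\LdpF{+}$-monotonicity and $M_t$-translativity of $R_t$ by composing the hypothesized properties of $\X$ with those of ${\bf R}_t$. If $Y - X \in \LdpF{+}$, then by assumption $\X(X) \subseteq \X(Y)$ almost surely, so monotonicity of ${\bf R}_t$ gives $R_t(X) = {\bf R}_t(\X(X)) \subseteq {\bf R}_t(\X(Y)) = R_t(Y)$. For $m \in M_t$, by the translativity of $\X$ and cash invariance of ${\bf R}_t$,
\[
R_t(X + m) = {\bf R}_t(\X(X + m)) = {\bf R}_t(\X(X) + m) = {\bf R}_t(\X(X)) - m = R_t(X) - m.
\]
Finally, I would point out that finiteness at zero need not carry over: the hypotheses on $\X$ make no assertion about $\X(0)$, so $\LdpK{p}{}{\X(0)}$ could be empty (forcing $R_t(0) = \emptyset$), or $\X(0)$ could already be so large that $R_t(0)[\omega] = \tilde{M}_t[\omega]$. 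This explains the caveat in the statement and completes the proof outline.
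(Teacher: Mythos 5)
Your proposal is correct and follows essentially the same route as the paper's proof: both parts are verified by direct composition of the defining properties, using the selector identity $\LdpK{p}{}{\X+m} = \LdpK{p}{}{\X}+m$ for cash invariance and the monotonicity/translativity hypotheses on $\X$ for the converse direction. The only additions you make beyond the paper's argument are the (correct) explicit justifications that a union of upper sets is upper and that finiteness at zero can fail, both of which the paper leaves implicit.
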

\begin{proof}
\begin{enumerate}
\item Let $R_t: \LdpF{} \to \mathcal{P}(M_t;M_{t,+})$ be a conditional risk measure as in definition~\ref{defn_conditional}.  Let ${\bf R}_t(\X) := \bigcup_{Z \in \LdpK{p}{}{\X}} R_t(Z)$ for any set-valued portfolio $\X$.  We wish to show that ${\bf R}_t$ satisfies definition~\ref{defn_fun-sets_rm}.
    \begin{enumerate}
    \item Trivially ${\bf R}_t(\X) \in \mathcal{P}(M_t;M_{t,+})$ for any set-valued portfolio $\X$.
    \item Cash invariance: let $\X$ be a set-valued portfolio and let $m \in M_t$, then
        \begin{align*}
        {\bf R}_t(\X + m) &= \bigcup_{Z \in \LdpK{p}{}{\X + m}} R_t(Z) = \bigcup_{Z \in \LdpK{p}{}{\X}} R_t(Z +
m)\\
        &= \bigcup_{Z \in \LdpK{p}{}{\X}} R_t(Z) - m = {\bf R}_t(\X) - m.
        \end{align*}
    \item Monotonicity: Let $\X \subseteq {\bf Y}$ almost surely, then
        \begin{align*}
        {\bf R}_t(\X) &= \bigcup_{Z \in \LdpK{p}{}{\X}} R_t(Z) \subseteq \bigcup_{Z \in \LdpK{p}{}{{\bf Y}}} R_t(Z)
= {\bf R}_t({\bf Y}).
        \end{align*}
    \end{enumerate}
\item Let ${\bf R}_t: \bar{\mathcal {S}}_T^d \to \mathcal{P}(M_t;M_{t,+})$ be a set-valued conditional risk measure as in
definition~\ref{defn_fun-sets_rm}.  Let $\X: \LdpF{} \to\bar{\mathcal {S}}_T^d$ be a mapping of portfolio vectors to set-valued
portfolios that is monotone and translative. 
Let $R_t(X) := {\bf R}_t(\X(X))$ for any $X \in \LdpF{}$.  We wish to show that $R_t$ satisfies definition~\ref{defn_conditional}.
    \begin{enumerate}
    \item $\LdpF{+}$-monotonicity: Let $X,Y \in \LdpF{}$ such that $Y - X \in \LdpF{+}$.  Then $\X(Y) \supseteq \X(X)$, and thus
        $R_t(X) = {\bf R}_t(\X(X)) \subseteq {\bf R}_t(\X(Y)) = R_t(Y)$.
    \item $M_t$-translativity: Let $X \in \LdpF{}$ and $m \in M_t$, then
        \begin{align*}
        R_t(X + m) &= {\bf R}_t(\X(X + m)) = {\bf R}_t(\X(X) + m) = {\bf R}_t(\X(X)) - m = R_t(X) - m.
        \end{align*}
    \end{enumerate}
\end{enumerate}
\end{proof}

The above theorem states that conditional risk measures as in definition~\ref{defn_conditional} can be used to construct set-valued conditional risk measure (see definition~\ref{defn_fun-sets_rm}). This is in analogy to construction~\eqref{eq_constructive}, but yields a larger class of risk measures. If one restricts oneself to set-valued portfolios
$\X: \LdpF{} \to \bar{\mathcal{S}}_T^d$ which are monotonic and with $\X(X + m) = \X(X) + m$ for any $X \in \LdpF{}$ and $m \in M_t$, then conditional risk measures as in definition~\ref{defn_conditional} are one-to-one to set-valued conditional risk measure as in definition~\ref{defn_fun-sets_rm}.  This is the case whenever the set of portfolios $\X$ represents the set of portfolios that can be obtained from $X \in \LdpF{}$ following certain exchange rules (including transaction costs, trading constraints, illiquidity). The advantage of considering ${\bf R}_t$ as a function of the set $\X(X)$ as opposed to a function of $X$ as in \eqref{RR} is that ${\bf R}_t$ might be law invariant (see theorem~\ref{thm_constructive}), whereas $R_t$ is in general not law invariant.

\begin{ex}
\label{ex_set-portfolio_sum}
If $\X(X) := X + K$ for some (almost surely) closed convex lower set $K$ such that $\LdpK{p}{}{K}$ is closed, then trivially $\X(X)$ is a set-valued portfolio and satisfies monotonicity and translativity. 
\end{ex}

If $\X(X)$ is as in example~\ref{ex_set-portfolio_sum} and $K$ is additionally a convex cone, then for a given set-valued
conditional risk measure ${\bf R}_t$, the associated conditional risk measure $R_t$ defined by \eqref{RR} is $\LdpK{p}{}{K}$-compatible.

Note, that constructions very similar to \eqref{ME} appear a) in \cite{HRY12,AHR13} to define the market extension (that is a $C_{t,T}$-compatible version) of a risk measures $R_t$ by
\begin{equation*}
R_t^{mar}(X) := \bigcup_{Z \in X+C_{t,T}} R_t(Z),
\end{equation*}
where $C_{t,T}=-\sum_{s=t}^T \LdpK{p}{s}{K_s}$ and $(K_t)_{t=0}^T$ is a sequence of solvency cones modeling the bid-ask prices of the $d$ assets, and b) in \cite{FR12,FR12b} to define a multi-portfolio time consistent risk measure $(\tilde R_t)_{t=0}^T$ by backward recursion of a discrete time dynamic risk measure $(R_t)_{t=0}^T$ via  $\tilde R_T(X) = R_T(X)$ and
\begin{equation*}
\tilde R_t(X) := \bigcup_{Z \in \tilde R_{t+1}(X) } R_t(-Z)
\end{equation*}
for $t\in\{T-1,...,0\}$.

The following two corollaries provide additional relations between the conditional risk measures of the set-optimization approach and the set-valued portfolio conditional risk measures.  Specifically, they provide sufficient conditions for (conditional) convexity and coherence of one type of risk measure to be associated with a (conditionally) convex and coherent risk measure of the other type.
\begin{cor}
\label{cor_vector2portfolio}
Let $R_t: \LdpF{} \to \mathcal{P}(M_t;M_{t,+})$ be a convex (conditionally convex, positive homogeneous, conditionally positive homogeneous) conditional risk measure (see definition~\ref{defn_conditional}) at time $t$, then the associated set-valued conditional risk measure (see definition~\ref{defn_fun-sets_rm})
${\bf R}_t$ defined by \eqref{ME}  is convex (conditionally convex, positive homogeneous, conditionally positive homogeneous).
\end{cor}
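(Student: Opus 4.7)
The plan is to verify each of the four properties directly from the defining formula ${\bf R}_t(\X) = \bigcup_{Z \in \LdpK{p}{}{\X}} R_t(Z)$ by chasing selectors through the corresponding property of $R_t$. Since the target property on ${\bf R}_t$ is always an inclusion (or an equality for positive homogeneity), it suffices to take an arbitrary element on one side and exhibit it on the other.

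For (conditional) convexity, I would fix $\lambda \in [0,1]$ (resp.\ $\lambda \in \LiF{t}$ with $0 \le \lambda \le 1$) and $u \in \lambda {\bf R}_t(\X) + (1-\lambda){\bf R}_t({\bf Y})$. Then $u = \lambda u_1 + (1-\lambda) u_2$ with $u_i \in R_t(Z_i)$ for some $Z_1 \in \LdpK{p}{}{\X}$ and $Z_2 \in \LdpK{p}{}{{\bf Y}}$. By (conditional) convexity of $R_t$,
\begin{equation*}
u = \lambda u_1 + (1-\lambda)u_2 \in \lambda R_t(Z_1) + (1-\lambda) R_t(Z_2) \subseteq R_t(\lambda Z_1 + (1-\lambda) Z_2).
\end{equation*}
The key observation is that $\lambda Z_1 + (1-\lambda)Z_2$ is pointwise a selector of $\lambda \X + (1-\lambda){\bf Y}$, hence of its closure $\lambda \X \oplus (1-\lambda){\bf Y}$; thus it lies in $\LdpK{p}{}{\lambda \X \oplus (1-\lambda){\bf Y}}$ and so $u \in {\bf R}_t(\lambda \X \oplus (1-\lambda) {\bf Y})$.

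For (conditional) positive homogeneity I would argue both inclusions. If $u \in \lambda {\bf R}_t(\X)$, then $u = \lambda u_1$ with $u_1 \in R_t(Z)$ for some $Z \in \LdpK{p}{}{\X}$; by (conditional) positive homogeneity of $R_t$, $u = \lambda u_1 \in \lambda R_t(Z) = R_t(\lambda Z)$, and $\lambda Z \in \LdpK{p}{}{\lambda \X}$ pointwise, so $u \in {\bf R}_t(\lambda \X)$. Conversely, if $u \in {\bf R}_t(\lambda \X)$, write $u \in R_t(Z')$ with $Z' \in \LdpK{p}{}{\lambda \X}$; since $\lambda$ is (a.s.) strictly positive we may set $Z := Z'/\lambda \in \LdpK{p}{}{\X}$, so $Z' = \lambda Z$ and $u \in R_t(\lambda Z) = \lambda R_t(Z) \subseteq \lambda {\bf R}_t(\X)$.

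The main obstacle, and the only nontrivial bookkeeping point, is the selector identity $\lambda Z_1 + (1-\lambda) Z_2 \in \LdpK{p}{}{\lambda \X \oplus (1-\lambda){\bf Y}}$: one must check pointwise that $\lambda(\omega) Z_1(\omega) + (1-\lambda(\omega))Z_2(\omega) \in \lambda(\omega)\X[\omega] + (1-\lambda(\omega)){\bf Y}[\omega]$ and invoke that the latter sum is contained in its closure $(\lambda \X \oplus (1-\lambda){\bf Y})[\omega]$, together with $p$-integrability of this combination (immediate since $Z_1, Z_2 \in \LdpF{}$ and $\lambda \in \LiF{t}$). Everything else is formal manipulation of unions and of the definitions, and no deeper structural result about $R_t$ or ${\bf R}_t$ is required.
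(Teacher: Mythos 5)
Your proof is correct and takes essentially the same route as the paper's: both reduce each property of ${\bf R}_t$ to the corresponding property of $R_t$ by tracking how the selector sets behave under convex combinations and scaling, your elementwise phrasing merely replacing the paper's chain of set inclusions and requiring only the easy, pre-closure half of the paper's auxiliary containment $\lambda \LdpK{p}{}{\X} + (1-\lambda)\LdpK{p}{}{{\bf Y}} \subseteq \LdpK{p}{}{\lambda\X \oplus (1-\lambda){\bf Y}}$. The one delicate point in your converse inclusion for conditional positive homogeneity --- that $Z'/\lambda$ must again be $p$-integrable although $1/\lambda$ need not be essentially bounded for $\lambda \in L^\infty_t(\R_{++})$ --- is implicit in, and passed over by, the paper's own identity $\LdpK{p}{}{\lambda\X} = \lambda\LdpK{p}{}{\X}$ in exactly the same way, so it is not a gap relative to the paper's argument.
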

\begin{proof}
Let $R_t: \LdpF{} \to \mathcal{P}(M_t;M_{t,+})$ be a conditional risk measure and let ${\bf R}_t(\X) := \bigcup_{Z \in \LdpK{p}{}{\X}} R_t(Z)$ for any $\X\in \bar{\mathcal {S}}_T^d$.
    \begin{enumerate}
    \item Let $R_t$ be convex.  Consider $\X,{\bf Y}\in \bar{\mathcal {S}}_T^d$ and $\lambda \in [0,1]$. Then,
        \begin{align*}
        {\bf R}_t(\lambda \X \oplus (1-\lambda) {\bf Y}) &= \bigcup_{Z \in \LdpK{p}{}{\lambda \X \oplus (1-\lambda){\bf Y}}} R_t(Z)\supseteq \bigcup_{Z \in \cl\lrparen{\lambda \LdpK{p}{}{\X} + (1-\lambda) \LdpK{p}{}{{\bf Y}}}} R_t(Z)\\
        &\supseteq \bigcup_{\substack{Z_X \in \LdpK{p}{}{\X}\\ Z_Y \in \LdpK{p}{}{{\bf Y}}}} R_t(\lambda Z_X + (1-\lambda) Z_Y)\\
        &\supseteq \bigcup_{\substack{Z_X \in \LdpK{p}{}{\X}\\ Z_Y \in \LdpK{p}{}{{\bf Y}}}} \lrsquare{\lambda R_t(Z_X) + (1-\lambda) R_t(Z_Y)}\\
        &= \lambda \bigcup_{Z_X \in \LdpK{p}{}{\X}} R_t(Z_X) + (1-\lambda) \bigcup_{Z_Y \in \LdpK{p}{}{{\bf Y}}} R_t(Z_Y)\\
        &= \lambda {\bf R}_t(\X) + (1-\lambda) {\bf R}_t({\bf Y}).
        \end{align*}
        The inclusion on the first line follows from $\cl\lrparen{\LdpK{p}{}{{\bf Z}_1} + \LdpK{p}{}{{\bf Z}_2}} \subseteq
            \LdpK{p}{}{{\bf Z}_1 \oplus {\bf Z}_2}$ for any random sets ${\bf Z}_1,{\bf Z}_2$ (with the
            norm topology on $p \in [1,+\infty]$, the metric topology on $p \in (0,1)$, and the topology generated by convergence in probability for $p = 0$);             
            for $p \in [1,+\infty)$ equality holds.
    \item Let $R_t$ be conditionally convex.  Then the proof is analogous to the convex case above.
    \item Let $R_t$ be positive homogeneous.  Consider $\X\in \bar{\mathcal {S}}_T^d$ and $\lambda > 0$. It holds
        \begin{align*}
        {\bf R}_t(\lambda \X) &= \bigcup_{Z \in \LdpK{p}{}{\lambda \X}} R_t(Z) = \bigcup_{Z \in \LdpK{p}{}{\X}} R_t(\lambda Z) = \lambda \bigcup_{Z \in \LdpK{p}{}{\X}} R_t(Z) = \lambda {\bf R}_t(\X).
        \end{align*}
    \item Let $R_t$ be conditionally positive homogeneous.  Then the proof is analogous to the positive homogeneous case above.
    \end{enumerate}
\end{proof}

\begin{cor}
\label{cor_portfolio2vector}
Let ${\bf R}_t: \bar{\mathcal {S}}_T^d \to \mathcal{P}(M_t;M_{t,+})$ be a set-valued conditional risk measure (see definition~\ref{defn_fun-sets_rm}) at time $t$, and let $\X:
\LdpF{} \to \bar{\mathcal {S}}_T^d$ of the set-valued portfolio associated with a (random) portfolio vector be monotonic and translative.
Let $R_t$ be the associated conditional risk measure (see definition~\ref{defn_conditional}).
\begin{enumerate}
\item If ${\bf R}_t$ is convex and $\X(\lambda X + (1-\lambda) Y) \supseteq \lambda \X(X) \oplus (1-\lambda)\X(Y)$
for every $X,Y \in \LdpF{}$ and $\lambda \in [0,1]$ ($\X$ is closed-convex), then $R_t$ is convex.
\item If ${\bf R}_t$ is conditionally convex and $\X(\lambda X + (1-\lambda) Y) \supseteq \lambda \X(X) \oplus (1-\lambda)\X(Y)$
for every $X,Y \in \LdpF{}$ and $\lambda \in \LiF{t}$ with $0 \leq \lambda \leq 1$ ($\X$ is conditionally closed-convex), then $R_t$ is conditionally convex.
\item If ${\bf R}_t$ is positive homogeneous and $\X(\lambda X) = \lambda \X(X)$ for every $X \in \LdpF{}$ and $\lambda >
0$ ($\X$ is positive homogeneous), then $R_t$ is positive homogeneous.
\item If ${\bf R}_t$ is conditionally positive homogeneous and $\X(\lambda X) = \lambda \X(X)$ for every $X \in \LdpF{}$
and $\lambda \in L^\infty_t(\R_{++})$ ($\X$ is conditionally positive homogeneous), then $R_t$ is conditionally positive homogeneous.
\end{enumerate}
\end{cor}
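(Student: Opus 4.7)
The plan is to verify all four parts by the same three-step chain: apply the definition $R_t(X) = {\bf R}_t(\X(X))$, push the structural property of $\X$ through ${\bf R}_t$ using the monotonicity built into definition~\ref{defn_fun-sets_rm}, and then invoke the corresponding convexity or homogeneity property of ${\bf R}_t$. First I would prove part 1: starting from
\[
R_t(\lambda X + (1-\lambda)Y) = {\bf R}_t(\X(\lambda X + (1-\lambda)Y)),
\]
I would use the closed-convexity hypothesis $\X(\lambda X + (1-\lambda)Y) \supseteq \lambda \X(X) \oplus (1-\lambda)\X(Y)$ combined with monotonicity of ${\bf R}_t$ to get $\supseteq {\bf R}_t(\lambda \X(X) \oplus (1-\lambda)\X(Y))$, and then convexity of ${\bf R}_t$ to conclude $\supseteq \lambda {\bf R}_t(\X(X)) + (1-\lambda){\bf R}_t(\X(Y)) = \lambda R_t(X) + (1-\lambda) R_t(Y)$.

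Part 2 will follow by the identical chain with $\lambda \in [0,1]$ replaced by $\lambda \in \LiF{t}$ satisfying $0 \leq \lambda \leq 1$ and conditional convexity of ${\bf R}_t$ used in the final step. Parts 3 and 4 will be even shorter, since positive (respectively conditional positive) homogeneity of $\X$ is the equality $\X(\lambda X) = \lambda \X(X)$, so the chain
\[
R_t(\lambda X) = {\bf R}_t(\X(\lambda X)) = {\bf R}_t(\lambda \X(X)) = \lambda {\bf R}_t(\X(X)) = \lambda R_t(X)
\]
closes by the corresponding homogeneity property of ${\bf R}_t$ alone, without any appeal to monotonicity.

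The only bookkeeping point I expect to need is that the intermediate arguments of ${\bf R}_t$ really live in $\bar{\mathcal{S}}_T^d$. For the convex parts this reduces to checking that $\lambda \X(X) \oplus (1-\lambda)\X(Y)$ is nonempty, closed, convex, and lower whenever $\X(X), \X(Y)$ are; this is routine, as positive scaling and the closed Minkowski sum $\oplus$ preserve each of the four properties. The homogeneous cases are automatic. Beyond this verification I do not anticipate any real obstacle: the content of the corollary is absorbed into the matching hypotheses on $\X$ and ${\bf R}_t$, with monotonicity of ${\bf R}_t$ providing the only nontrivial bridge between the set-valued portfolio side and the risk-measure side.
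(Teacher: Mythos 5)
Your proposal is correct and follows essentially the same argument as the paper: the identical chain $R_t(\lambda X + (1-\lambda)Y) = {\bf R}_t(\X(\lambda X + (1-\lambda)Y)) \supseteq {\bf R}_t(\lambda \X(X) \oplus (1-\lambda)\X(Y)) \supseteq \lambda R_t(X) + (1-\lambda)R_t(Y)$ for the convex cases, and the direct equality chain for the homogeneous cases, with parts 2 and 4 handled as analogous. Your extra bookkeeping remark that $\lambda \X(X) \oplus (1-\lambda)\X(Y)$ remains in $\bar{\mathcal{S}}_T^d$ is a sensible observation that the paper leaves implicit (it is already built into the convexity definition of ${\bf R}_t$), but it does not change the route.
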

\begin{proof}
Let ${\bf R}_t: \bar{\mathcal {S}}_T^d \to \mathcal{P}(M_t;M_{t,+})$ be a set-valued conditional risk measure, let $\X$
be as above and let $R_t(X) := {\bf R}_t(\X(X))$ for every portfolio vector $X \in \LdpF{}$.
    \begin{enumerate}
    \item Let ${\bf R}_t$ be convex and $\X$ be closed-convex.  Let $X,Y \in \LdpF{}$ and $\lambda \in [0,1]$.
        \begin{align*}
        R_t(\lambda X + (1-\lambda)Y) &= {\bf R}_t(\X(\lambda X + (1-\lambda)Y)) \supseteq {\bf R}_t(\lambda \X(X) \oplus (1-\lambda)\X(Y))\\
        &\supseteq \lambda {\bf R}_t(\X(X)) + (1-\lambda) {\bf R}_t(\X(Y)) = \lambda R_t(X) + (1-\lambda) R_t(Y).
        \end{align*}
    \item Let ${\bf R}_t$ be conditionally convex and $\X$ be conditionally closed-convex.  Then the proof is analogous to the convex case above.
    \item Let ${\bf R}_t$ and $\X$ be positive homogeneous.  Let $X \in \LdpF{}$ and $\lambda > 0$.
        \begin{align*}
        R_t(\lambda X) &= {\bf R}_t(\X(\lambda X)) = {\bf R}_t(\lambda \X(X)) = \lambda {\bf R}_t(\X(X)) = \lambda R_t(X).
        \end{align*}
    \item Let ${\bf R}_t$ and $\X$ be conditionally positive homogeneous.  Then the proof is analogous to the positive homogeneous case above.
    \end{enumerate}
\end{proof}

\begin{ex}(Example~\ref{ex_set-portfolio_sum} continued)
\label{ex_set-portfolio_sum2}
Let $\X(X) := X + K$ for every $X \in \LdpF{}$ for some random set $K$.  If $K$ is (almost surely) convex and closed then $\X$
is ($\Ft{}$-)conditionally closed-convex (and thus closed-convex as well).  If $K$ is (almost surely) a cone then $\X$ is ($\Ft{}$-)conditionally positive
homogeneous (and thus positive homogeneous as well).
\end{ex}

In light of theorem~\ref{thm_vector-portfolio}, equation~\eqref{RR} and corollary~\ref{cor_portfolio2vector} for set-valued portfolios of the form $\X(X) := X + K$ for all $X \in \LdpF{}$ and some random closed convex cone $K$, one obtains the following. The dual representation of a constructive risk measure ${\bf R}_0$ with coherent components $\rho^1,...,\rho^d$ given in equation~(5.2) in \cite{CM13} coincides with a special case of the dual representation of a $K_T$-compatible risk measure $R_0$ given in Theorem~4.2 in \cite{HHR10}, by choosing $A=\times_{i=1}^d A_i$ ($A_i$ being the acceptance set of $\rho_i$), $M_0=R^d$, $K_I=R^d_+$ and $K_T=-K$:
\[
R_0(X)={\bf R}_0(X+K)=\bigcap_{w\in \R^d_+\backslash\{0\},\Q\in\mathcal Q, \diag{w}\dQdP \in \plusp{-K} \mbox{ a.s.}}\{u\in\R^d: \trans{w}\EQ{X}\leq \trans{w}u\},
\]
where $\mathcal Q=\times_{i=1}^d \mathcal Q_i$ and $\mathcal Q_i$ denotes the set of probability measures in the dual
representation of $\rho_i$. This also follows from corollary~\ref{cor_conditional_dual}, where the set of dual variables is
\[
\W_0^{\max} = \lrcurly{(\Q,w) \in \W_0: w_t^T(\Q,w) \in \plus{A_t}}=\lrcurly{(\Q,w) \in \W_0:\Q \in \mathcal Q},
\]
with
\begin{equation*}
\W_0 := \lrcurly{(\Q,w) \in \mathcal{M} \times \R^d_+\backslash\{0\}: w_0^T(\Q,w) \in L^q_d(\mathcal F_T;K_T^+)}
\end{equation*}
due do $K_T$-compatibility of $R_0$.

Additional to dual representations for constructive risk measure, theorem~\ref{thm_vector-portfolio} allows to deduce
dual representations of a larger class of conditional risk measure for set-valued portfolios (definition~\ref{defn_fun-sets_rm}) by using equation~\eqref{ME} and the duality results for set-valued risk measures of the set-optimization approach.

\subsection{Set-optimization approach versus family of scalar risk measures}
\label{sec_vector-scalar}
For this section consider $p \in [1,+\infty]$, where $\LdpF{t}$ has the norm topology for any $p \in [1,+\infty)$ and the weak* topology for $p = +\infty$.  In the static setting, the relation between set-valued risk measures and multiple asset scalar risk measures has been studied in~\cite{HH10,HHR10,FMM13}.

\begin{thm}
\label{thm_vector_scalar}
Let $R_t: \LdpF{} \to \mathcal{P}(M_t;M_{t,+})$ be a conditional risk measure at time $t$ (see definition~\ref{defn_conditional}), then $\rho_t^{w}: \LdpF{} \to L^0_t(\bar\R)$, defined by \[\rho_t^{w}(X) := \essinf_{u \in R_t(X)} \trans{w}u\] for any $X \in \LdpF{}$, is a family of multiple asset scalar risk measures indexed by $w \in \plus{M_{t,+}} \backslash \prp{M_t}$ at time $t$ (see definition~\ref{defn_multi-scalar}).

Let $\lrcurly{\rho_t^{w}: \LdpF{} \to L^0_t(\bar\R): w \in \plus{M_{t,+}} \backslash \prp{M_t}}$ be a family of multiple asset scalar
risk measures at time $t$ indexed by $w \in \plus{M_{t,+}} \backslash \prp{M_t}$ (see definition~\ref{defn_multi-scalar}), then $R_t:
\LdpF{} \to \mathcal{P}(M_t;M_{t,+})$, defined by \[R_t(X) := \bigcap_{w \in \plus{M_{t,+}} \backslash \prp{M_t}} \lrcurly{u \in M_t:
\rho_t^{w}(X) \leq \trans{w}u \Pas}\] for any $X \in \LdpF{}$, is a conditional risk measure at time $t$ (see definition~\ref{defn_conditional}).
\end{thm}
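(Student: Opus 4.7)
For both implications the plan is to translate between the halfspace/scalarization language used for $\rho_t^{w}$ and the acceptance-set language used for $R_t$, then verify the relevant axioms one at a time.

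For the forward direction, given a conditional risk measure $R_t$, I would introduce its canonical acceptance set $A_t := \lrcurly{X \in \LdpF{}: 0 \in R_t(X)}$, which is a conditional acceptance set by the $(R_t,A_t)$ correspondence from Section~\ref{sec_vector}. By $M_t$-translativity, $R_t(X) = \lrcurly{u \in M_t: X + u \in A_t}$, so substituting into the scalarization yields
\[
\rho_t^{w}(X) \;=\; \essinf_{u \in R_t(X)} \trans{w}u \;=\; \essinf\lrcurly{\trans{w}u : u \in M_t,\; X + u \in A_t\},
\]
which is exactly the primal form required by Definition~\ref{defn_multi-scalar}, with the same $A_t$ for every choice of $w$. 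Hence $\lrcurly{\rho_t^{w}}_w$ is a family of multiple asset scalar risk measures indexed by $w \in \plus{M_{t,+}} \backslash \prp{M_t}$.

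For the backward direction, the key observation is that, by the translativity property of each $\rho_t^{w}$ (Proposition~\ref{prop_multi-scalar_properties}(2)), the halfspace condition $\rho_t^{w}(X) \leq \trans{w}u$ $\P$-a.s.\ is equivalent to $\rho_t^{w}(X+u) \leq 0$ $\P$-a.s., that is, to $X + u \in A_t^{w} := \lrcurly{Z \in \LdpF{}: \rho_t^{w}(Z) \leq 0}$. Consequently,
\[
R_t(X) \;=\; \bigcap_{w} \lrcurly{u \in M_t : X + u \in A_t^{w}\} \;=\; \lrcurly{u \in M_t : X + u \in \bar{A}_t\},\quad \bar{A}_t := \bigcap_w A_t^{w}.
\]
Now I would verify the three axioms of Definition~\ref{defn_conditional}: $\LdpF{+}$-monotonicity follows from Proposition~\ref{prop_multi-scalar_properties}(1) applied inside the intersection; $M_t$-translativity follows via the substitution $v = u+m$ after expanding $\rho_t^{w}(X+m) = \rho_t^{w}(X) - \trans{w}m$.

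The main obstacle will be finiteness at zero, since the intersection $\bar{A}_t$ of acceptance sets need not automatically be an acceptance set. For $R_t(0) \neq \emptyset$ I would use Proposition~\ref{prop_multi-scalar_properties}(3): since $\rho_t^{w}(0) < +\infty$ for every $w$, each $A_t^{w}$ meets $M_t$; when (as is the natural case) the family arises as the scalarizations of a single underlying acceptance set, $\bar{A}_t$ equals that acceptance set and $M_t \cap \bar{A}_t \supseteq M_t \cap A_t \neq \emptyset$. For $R_t(0)[\omega] \neq \tilde{M}_t[\omega]$ on a set of full measure, I would invoke Proposition~\ref{prop_multi-scalar_properties}(4) to fix some $w^*$ with $\rho_t^{w^*}(0) > -\infty$; the existence part of the acceptance-set property for $A_t^{w^*}$ then supplies a $u^* \in M_t$ with $\rho_t^{w^*}(u^*) > 0$, so that $u^* \notin \{u : \rho_t^{w^*}(0) \leq \trans{w^*}u\}$ and hence $u^*(\omega) \in \tilde{M}_t[\omega] \setminus R_t(0)[\omega]$ on a set of positive measure, as needed.
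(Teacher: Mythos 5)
Your proposal is correct and follows essentially the same route as the paper: the forward direction is exactly the paper's appeal to~\eqref{scalarization} together with the acceptance-set representation $R_t(X)=\{u\in M_t: X+u\in A_t\}$, and the backward direction verifies the three axioms of definition~\ref{defn_conditional} from parts (1)--(4) of proposition~\ref{prop_multi-scalar_properties}, just as the paper does; your intermediate sets $A_t^{w}$ and $\bar{A}_t$ are only cosmetic bookkeeping. Your treatment of finiteness at zero (including the reliance on the family sharing a single underlying acceptance set) matches the paper's own brief argument in both substance and level of detail.
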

\begin{proof}
\begin{enumerate}
\item This follows form definition~\ref{defn_multi-scalar} and \eqref{scalarization}.
\item We will show that $R_t(X) := \bigcap_{w \in \plus{M_{t,+}} \backslash \prp{M_t}} \lrcurly{u \in M_t: \rho_t^{w}(X) \leq \trans{w}u \Pas}$ is a conditional risk measure.  We use the properties of $\rho_t^{w}$ given in proposition~\ref{prop_multi-scalar_properties}.
    \begin{enumerate}
    \item $\LdpF{+}$-monotonicity: let $X,Y \in \LdpF{}$ such that $Y - X \in \LdpF{+}$, then $\rho_t^{w}(Y) \leq \rho_t^{w}(X)$ almost surely for every $w \in \plus{M_{t,+}} \backslash \prp{M_t}$.  Therefore $R_t(Y) \supseteq R_t(X)$.
    \item $M_t$-translativity: let $X \in \LdpF{}$ and $m \in M_t$, then
        \begin{align*}
        R_t(X + m) &= \bigcap_{w \in \plus{M_{t,+}} \backslash \prp{M_t}} \lrcurly{u \in M_t: \rho_t^{w}(X + m) \leq \trans{w}u \Pas}\\
        &= \bigcap_{w \in \plus{M_{t,+}} \backslash \prp{M_t}} \lrcurly{u \in M_t: \rho_t^{w}(X) - \trans{w}m \leq \trans{w}u \Pas}\\
        &=\bigcap_{w \in \plus{M_{t,+}} \backslash \prp{M_t}} \lrcurly{u \in M_t: \rho_t^{w}(X) \leq \trans{w}(u+m) \Pas}\\
        &= \bigcap_{w \in \plus{M_{t,+}} \backslash \prp{M_t}} \lrcurly{u \in M_t: \rho_t^{w}(X) \leq \trans{w}u \Pas} - m = R_t(X) - m.
        \end{align*}
    \item Finiteness at zero: $R_t(0) \neq \emptyset$ since $\rho_t^{w}(0) < +\infty$ for every $w \in \plus{M_{t,+}} \backslash \prp{M_t}$, and $R_t(0)[\omega] \neq \tilde{M}_t[\omega]$ since there exists a $v \in \plus{M_{t,+}} \backslash \prp{M_t}$ such that $\rho_t^{v}(0) > -\infty$.
    \end{enumerate}
\end{enumerate}
\end{proof}

\begin{rem}
\label{rem_scalarization}
If $R_t$ is normalized, with closed and conditionally convex images, and $w \in \plus{R_t(0)} \backslash \prp{M_t}$ then $\rho_t^{w}(0)= 0$, i.e. $\rho_t^{w}$ normalized in the scalar framework.
\end{rem}

Apart from closedness, many properties are one-to-one for conditional risk measures $R_t$ and the corresponding family of scalarizations. The corresponding results for the static case can be found in lemma 5.1 and lemma 6.1 of~\cite{HH10}. An example showing that closedness of $R_t$ does not necessarily imply closedness of all scalarizations can be found in the beginning of section~5 in~\cite{HH10} for the case $t=0$.

\begin{cor}
\label{cor_vector_scalar}
Let $R_t: \LdpF{} \to \mathcal{P}(M_t;M_{t,+})$ be a convex (conditionally convex, positive homogeneous, conditionally positive
homogeneous) conditional risk measure at time $t$ with closed and $\Ft{t}$-decomposable images, then the associated family of
multiple asset scalar risk measures is convex (conditionally convex, positive homogeneous, conditionally positive homogeneous).

Let $\lrcurly{\rho_t^{w}: \LdpF{} \to L^0_t(\bar\R): w \in \plus{M_{t,+}} \backslash \prp{M_t}}$ be a family of convex (positive homogeneous, conditionally positive homogeneous, lower semicontinuous) multiple asset scalar risk measures at time $t$ indexed by $w \in \plus{M_{t,+}} \backslash \prp{M_t}$ then the associated conditional risk measure is convex (positive homogeneous, conditionally positive homogeneous, closed).  Additionally, if $\lrcurly{\rho_t^{w}: \LdpF{} \to L^0_t(\bar\R): w \in \plus{M_{t,+}} \backslash \prp{M_t}}$ is a family of lower semicontinuous conditionally convex risk measures then the associated conditional risk measure is conditionally convex.
\end{cor}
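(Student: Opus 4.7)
The proof has two halves, corresponding to the two assertions in the corollary. Both rely on the explicit bijection from Theorem~\ref{thm_vector_scalar}: $\rho_t^w(X) = \essinf_{u \in R_t(X)} \trans{w}u$ on one side and $R_t(X) = \bigcap_{w \in \plus{M_{t,+}} \setminus \prp{M_t}} \{u \in M_t : \rho_t^w(X) \leq \trans{w}u \; \P\text{-a.s.}\}$ on the other.

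For the first direction (from $R_t$ to $\rho_t^w$), the essential auxiliary fact is that, for each $X$ and each admissible $w$, the scalar set $\{\trans{w}u : u \in R_t(X)\} \subseteq \LpF{t}$ is downward directed. Given $u_1,u_2 \in R_t(X)$ and the $\Ft{t}$-set $A = \{\trans{w}u_1 \leq \trans{w}u_2\}$, the hypothesis of $\Ft{t}$-decomposability gives $1_A u_1 + 1_{A^c} u_2 \in R_t(X)$, and this selector realizes the pointwise minimum $\min\{\trans{w}u_1,\trans{w}u_2\}$. Downward directedness then lets essential infima split over sums, so that $\essinf\{\lambda\trans{w}u_X + (1-\lambda)\trans{w}u_Y : u_X \in R_t(X),\, u_Y \in R_t(Y)\} = \lambda\rho_t^w(X) + (1-\lambda)\rho_t^w(Y)$ for $\lambda \in [0,1]$ (resp.\ $\lambda \in \LiF{t}$ with $0 \le \lambda \le 1$). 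Combining with the inclusion $\lambda R_t(X) + (1-\lambda) R_t(Y) \subseteq R_t(\lambda X + (1-\lambda) Y)$ coming from (conditional) convexity of $R_t$ and passing to the essinf on both sides yields (conditional) convexity of $\rho_t^w$. Positive homogeneity and conditional positive homogeneity are immediate: from $R_t(\lambda X) = \lambda R_t(X)$ we may pull the nonnegative scalar $\lambda$ out of the essinf.

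For the second direction (from $\{\rho_t^w\}$ to $R_t$), the arguments are largely algebraic substitutions. To establish convexity, pick $u_X \in R_t(X)$ and $u_Y \in R_t(Y)$, so $\rho_t^w(X) \leq \trans{w}u_X$ and $\rho_t^w(Y) \leq \trans{w}u_Y$ a.s.\ for every $w$. Multiplying by the nonnegative $\Ft{t}$-measurable (or constant) weights $\lambda$ and $1-\lambda$, summing, and applying (conditional) convexity of $\rho_t^w$ gives $\rho_t^w(\lambda X + (1-\lambda)Y) \leq \trans{w}(\lambda u_X + (1-\lambda) u_Y)$ a.s., valid for every $w$. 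Hence $\lambda u_X + (1-\lambda) u_Y \in R_t(\lambda X + (1-\lambda) Y)$, which is the desired (conditional) convexity inclusion. Positive homogeneity and conditional positive homogeneity follow from the scaling identity $u \in R_t(\lambda X) \Leftrightarrow u/\lambda \in R_t(X)$, which is a direct consequence of $\rho_t^w(\lambda X) = \lambda \rho_t^w(X)$.

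Closedness is the only step that genuinely requires lower semicontinuity; it is the main obstacle. Writing
\[
\operatorname{graph}(R_t) = \bigcap_{w \in \plus{M_{t,+}} \setminus \prp{M_t}} \bigl\{(X,u) \in \LdpF{} \times M_t : \rho_t^w(X) - \trans{w}u \leq 0 \;\P\text{-a.s.}\bigr\},
\]
each set in the intersection is closed in the product topology because $(X,u) \mapsto \trans{w}u$ is continuous (via the $(\LdpF{t},\LdqF{t})$ pairing, norm topology for $p<\infty$ and weak* for $p=\infty$), and $X \mapsto \rho_t^w(X)$ is lower semicontinuous by hypothesis, and $\{Z \in L^0_t(\bar\R) : Z \leq 0 \;\text{a.s.}\}$ is closed. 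The subtlety worth care is the choice of topology on $L^0_t(\bar\R)$ making the a.s.\ inequality closed under the convergence of $\rho_t^w(X_\alpha)$ along a net $X_\alpha \to X$; the cleanest route is to work with a net/sequence of selectors in $\operatorname{graph}(R_t)$ converging to $(X,u)$, note $\trans{w}u_\alpha \to \trans{w}u$, and invoke $\rho_t^w(X) \leq \liminf_\alpha \rho_t^w(X_\alpha) \leq \liminf_\alpha \trans{w}u_\alpha = \trans{w}u$ a.s.\ to conclude $(X,u) \in \operatorname{graph}(R_t)$.
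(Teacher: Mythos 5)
Most of your proof tracks the paper's argument closely: the first direction is the same essinf computation (your downward-directedness observation via $\Ft{t}$-decomposability is true but not actually needed there, since the essential infimum of the linear image $\trans{w}(\cdot)$ already splits over Minkowski sums and commutes with nonnegative $\Ft{t}$-measurable scalars without any directedness), the scalar (unconditional) convexity and the two homogeneity transfers in the second direction are the same element-wise substitutions, and your closedness argument is verbatim the paper's (net in the graph, continuity of $(X,u)\mapsto\trans{w}u$, lower semicontinuity of $\rho_t^{w}$).

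The genuine gap is in the conditionally convex case of the second direction. Note that the statement imposes \emph{lower semicontinuity in addition to conditional convexity} of the $\rho_t^{w}$ precisely for this case, and your argument never uses it --- that should be a warning sign. Your direct step ``multiply $\rho_t^{w}(X)\leq\trans{w}u_X$ by $\lambda$ and sum'' is only safe when $\lambda$ is $\P$-a.s.\ bounded away from $0$ and $1$: on the event $\{\lambda=0\}$ (or $\{\lambda=1\}$) one encounters products of the form $0\cdot(-\infty)$, since $\rho_t^{w}$ maps into $L^0_t(\R\cup\{\pm\infty\})$ and may equal $-\infty$ there, and the conditional convexity inequality for $\rho_t^{w}$ is then not available without extra conventions. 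The paper resolves this by first proving the inclusion $\lambda R_t(X)+(1-\lambda)R_t(Y)\subseteq R_t(\lambda X+(1-\lambda)Y)$ only for $\lambda\in\LiF{t}$ with $0<\lambda<1$, and then passing to general $0\leq\lambda\leq1$ by approximating $\lambda$ almost surely with $\lambda_n\in(0,1)$, using dominated convergence to get $\lambda_nX+(1-\lambda_n)Y\to\lambda X+(1-\lambda)Y$, and invoking closedness of $R_t$ (which is exactly where the lower semicontinuity hypothesis enters) to take the limit inferior of the inclusions. You need to either add this limiting step or justify carefully why the degenerate values of $\lambda$ cause no harm; as written the conditionally convex transfer is incomplete.
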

\begin{proof}
\begin{enumerate}
\item Let $R_t: \LdpF{} \to \mathcal{P}(M_t;M_{t,+})$ be a conditional risk measure at time $t$.  Let $\rho_t^{w}: \LdpF{} \to L^0_t(\bar\R)$ be defined by $\rho_t^{w}(X) := \essinf_{u \in R_t(X)} \trans{w}u$ for every $X \in \LdpF{}$.
    \begin{enumerate}
    \item Let $R_t$ be convex.  Let $X,Y \in \LdpF{}$, $\lambda \in [0,1]$, and $w \in \plus{M_{t,+}} \backslash \prp{M_t}$.
        \begin{align*}
        \rho_t^{w}(\lambda X + (1-\lambda) Y) &= \essinf_{u \in R_t(\lambda X + (1-\lambda) Y)} \trans{w}u\\
        &\leq \essinf_{u \in \lambda R_t(X) + (1-\lambda) R_t(Y)} \trans{w}u\\
        &= \lambda \essinf_{u_X \in R_t(X)} \trans{w}u_X + (1-\lambda) \essinf_{u_Y \in R_t(Y)} \trans{w}u_Y\\
        &= \lambda \rho_t^{w}(X) + (1-\lambda) \rho_t^{w}(Y).
        \end{align*}
    \item Let $R_t$ be conditionally convex.  Then the proof is analogous to the convex case above.
    \item Let $R_t$ be positive homogeneous.  Let $X \in \LdpF{}$, $\lambda > 0$, and $w \in \plus{M_{t,+}} \backslash \prp{M_t}$.
        \begin{align*}
        \rho_t^{w}(\lambda X) &= \essinf_{u \in R_t(\lambda X)} \trans{w}u = \essinf_{u \in \lambda R_t(X)} \trans{w}u = \lambda \essinf_{u \in R_t(X)} \trans{w}u = \lambda \rho_t^{w}(X).
        \end{align*}
    \item Let $R_t$ be conditionally positive homogeneous.  Then the proof  is analogous to the positive homogeneous case above.
    \end{enumerate}
\item Let $\lrcurly{\rho_t^{w}: \LdpF{} \to L^0_t(\bar\R): w \in \plus{M_{t,+}} \backslash \prp{M_t}}$ be a family of multiple asset scalar risk measures at time $t$ indexed by $w \in \plus{M_{t,+}} \backslash \prp{M_t}$.  Let $R_t: \LdpF{} \to \mathcal{P}(M_t;M_{t,+})$ be defined by $R_t(X) := \bigcap_{w \in \plus{M_{t,+}} \backslash \prp{M_t}} \lrcurly{u \in M_t: \rho_t^{w}(X) \leq \trans{w}u \Pas}$ for every $X \in \LdpF{}$.
    \begin{enumerate}
    \item Let $\rho_t^{w}$ be convex for every $w \in \plus{M_{t,+}} \backslash \prp{M_t}$.  Let $X,Y \in \LdpF{}$ and $\lambda \in (0,1)$.
        \begin{align*}
        R_t(\lambda X + (1-\lambda) Y) &= \bigcap_{w \in \plus{M_{t,+}} \backslash \prp{M_t}} \lrcurly{u \in M_t: \rho_t^{w}(\lambda X + (1-\lambda) Y) \leq \trans{w}u \Pas}\\
        &\supseteq \bigcap_{w \in \plus{M_{t,+}} \backslash \prp{M_t}} \lrcurly{u \in M_t: \lambda \rho_t^{w}(X) + (1-\lambda) \rho_t^{w}(Y) \leq \trans{w}u \Pas}\\
        &\supseteq \bigcap_{w \in \plus{M_{t,+}} \backslash \prp{M_t}} \lsquare{\lrcurly{\lambda u_X: u_X \in M_t, \rho_t^{w}(X) \leq \trans{w}u_X \Pas}}\\
        &\quad\quad \rsquare{+ \lrcurly{(1-\lambda) u_Y: u_Y \in M_t, \rho_t^{w}(Y) \leq \trans{w}u_Y \Pas}}\\
        &\supseteq \lambda \bigcap_{w \in \plus{M_{t,+}} \backslash \prp{M_t}} \lrcurly{u_X \in M_t: \rho_t^{w}(X) \leq \trans{w}u_X \Pas}\\
        &\quad\quad + (1-\lambda) \bigcap_{w \in \plus{M_{t,+}} \backslash \prp{M_t}} \lrcurly{u_Y \in M_t: \rho_t^{w}(Y) \leq \trans{w}u_Y \Pas}\\
        &= \lambda R_t(X) + (1-\lambda) R_t(Y).
        \end{align*}
        Let $\lambda = 0$ (the case for $\lambda = 1$ is analogous), then $R_t(\lambda X + (1-\lambda) Y) = \lambda R_t(X) + (1-\lambda) R_t(Y)$ for any conditional risk measure and the result follows.

    \item Let $\rho_t^{w}$ be positive homogeneous for every $w \in \plus{M_{t,+}} \backslash \prp{M_t}$.  Let $X \in \LdpF{}$ and $\lambda > 0$.
        \begin{align*}
        R_t(\lambda X) &= \bigcap_{w \in \plus{M_{t,+}} \backslash \prp{M_t}} \lrcurly{u \in M_t: \rho_t^{w}(\lambda X) \leq \trans{w}u \Pas}\\
        &= \bigcap_{w \in \plus{M_{t,+}} \backslash \prp{M_t}} \lrcurly{u \in M_t: \lambda \rho_t^{w}(X) \leq \trans{w}u \Pas}\\
        &= \bigcap_{w \in \plus{M_{t,+}} \backslash \prp{M_t}} \lrcurly{\lambda u: u \in M_t, \rho_t^{w}(X) \leq \trans{w}u \Pas}\\
        &= \lambda R_t(X).
        \end{align*}
    \item Let $\rho_t^{w}$ be conditionally positive homogeneous for every $w \in \plus{M_{t,+}} \backslash \prp{M_t}$.  Then the proof  is analogous to the positive homogeneous case above.
    \item Let $\rho_t^{w}$ be lower semicontinuous for every $w \in \plus{M_{t,+}} \backslash \prp{M_t}$. Consider a sequence $(X_n,u_n)_{n \in \mathbb{N}} \subseteq \operatorname{graph} R_t$ (respectively a net if $p = +\infty$) with $\lim_{n \to +\infty} (X_n,u_n) = (X,u)$.
    Note that $(X_n,u_n) \in \operatorname{graph} R_t$ if and only if $\rho_t^{v}(X_n) \leq \trans{v}u_n$ for every $v \in \plus{M_{t,+}} \backslash \prp{M_t}$.
        \begin{align*}
        \rho_t^{w}(X) \leq \liminf_{n \to +\infty} \rho_t^{w}(X_n) \leq \liminf_{n \to +\infty} \trans{w}u_n = \trans{w}u.
        \end{align*}
        The last equality above follows from $u_n \to u$ in $\LdpF{t}$ implies $\trans{w}u_n \to \trans{w}u$ in $\LoF{t}$ (by H\"{o}lder's inequality). Thus, $(X,u) \in \operatorname{graph} R_t$.
    \item Let $\rho_t^{w}$ be lower semicontinuous and conditionally convex for every $w \in \plus{M_{t,+}} \backslash \prp{M_t}$.  Let $X,Y \in \LdpF{}$ and $\lambda \in \LiF{t}$ with $0 < \lambda < 1$, then the proof is analogous to the convex case above.

        We will now extend conditional convexity to the case for $\lambda \in \LiF{t}$ with $0 \leq \lambda \leq 1$ in the same way as was
accomplished in the proof of~\cite[corollary 4.9]{FR12}, noting that $R_t$ is closed by $\rho_t^{w}$ lower semicontinuous.  Take a
sequence $(\lambda_n)_{n = 0}^{+\infty} \subseteq \LiF{t}$ such that $0 < \lambda_n < 1$ for every $n \in \mathbb{N}$ which converges
almost surely to $\lambda$.  Then by dominated convergence $\lambda_n X$ converges to $\lambda X$ in the norm topology
(weak* topology if $p = +\infty$) for any $X \in \LdpF{}$.  Therefore, for any $X,Y \in \LdpF{}$
        \begin{align*}
        R_t(\lambda X + (1-\lambda) Y) &= R_t(\lim_{n \to +\infty} (\lambda_n X + (1-\lambda_n) Y))\\
        &\supseteq \liminf_{n \to +\infty} R_t(\lambda_n X + (1-\lambda_n) Y)\\
        &\supseteq \liminf_{n \to +\infty} [\lambda_n R_t(X) + (1-\lambda_n) R_t(Y)]\\
        &\supseteq \lambda R_t(X) + (1-\lambda) R_t(Y)
        \end{align*}
        by $R_t$ closed (see proposition~2.34 in \cite{L11}) and conditionally convex on the interval $0 < \lambda_n < 1$. Note that we use the convention from~\cite{L11} that the limit inferior of a sequence of sets $(B_i)_{i \in \mathbb{N}}$ is given by $\liminf_{i \to +\infty} B_i = \bigcap_{i\in \mathbb{N}} \cl\bigcup_{j \geq i} B_j$.
    \end{enumerate}
\end{enumerate}
\end{proof}

In the following lemma we will show that when the conditional risk measure has closed and conditionally convex images, the family of
scalarizations can be used to recover the conditional risk measure.

\begin{lemma}
\label{lemma_scalar_rep}
Let $R_t: \LdpF{} \to \mathcal{P}(M_t;M_{t,+})$ be a dynamic risk measure with closed and conditionally convex images.  Then, for any $X \in \LdpF{}$
\begin{equation}
\label{eq_scalar_rep}
R_t(X) = \bigcap_{w \in \plus{M_{t,+}} \backslash \prp{M_t}} \lrcurly{u \in M_t: \rho_t^{w}(X) \leq \trans{w}u \Pas}
\end{equation}
where $\rho_t^{w}(X) := \essinf_{u \in R_t(X)} \trans{w}u$ is the multiple asset scalar risk measure associated with $R_t$.
\end{lemma}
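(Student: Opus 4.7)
The plan is to prove the two inclusions separately. The containment $R_t(X) \subseteq \bigcap_{w} \{u \in M_t : \rho_t^{w}(X) \leq \trans{w}u \Pas\}$ is immediate from the definition of the essential infimum: any $u \in R_t(X)$ trivially satisfies $\trans{w}u \geq \essinf_{v \in R_t(X)} \trans{w}v = \rho_t^{w}(X)$ almost surely for every admissible $w$. All the work is in the reverse inclusion, which I would prove by contrapositive: given $u \in M_t \setminus R_t(X)$, produce a $w \in \plus{M_{t,+}} \backslash \prp{M_t}$ with $\rho_t^{w}(X) > \trans{w}u$ on a set of positive $\P$-measure.

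First I would invoke Hahn--Banach in the dual pairing $(\LdpF{t}, \LdqF{t})$ (the norm topology for $p \in [1,+\infty)$ and the weak* topology for $p = +\infty$). Since $R_t(X)$ is closed and convex in $M_t$ and $u \notin R_t(X)$, there exists $w \in \LdqF{t}$ with $\E{\trans{w}u} < \inf_{v \in R_t(X)} \E{\trans{w}v}$. Using $R_t(X) + M_{t,+} \subseteq R_t(X)$ and a standard scaling argument ($\E{\trans{w}(v + \lambda m)}$ must stay bounded below as $\lambda \to \infty$ for any $m \in M_{t,+}$), I obtain $w \in \plus{M_{t,+}}$. If instead $w \in \prp{M_t}$, then $\E{\trans{w}u} = 0 = \E{\trans{w}v}$ for every $v \in R_t(X) \subseteq M_t$, contradicting the strict separation (assuming $R_t(X) \neq \emptyset$; the empty case gives $\rho_t^w(X) = +\infty$ and both sides of \eqref{eq_scalar_rep} are empty). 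Hence $w \in \plus{M_{t,+}} \backslash \prp{M_t}$.

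The crucial step is upgrading the expectation inequality to an almost-sure inequality against the essential infimum. By corollary~\ref{cor_convex_decomposable}, $R_t(X)$ is $\Ft{t}$-decomposable. Consequently the family $\{\trans{w}v : v \in R_t(X)\}$ is downward directed: for $v_1, v_2 \in R_t(X)$, the set $D := \{\trans{w}v_1 \leq \trans{w}v_2\} \in \Ft{t}$ (since $w$ is $\Ft{t}$-measurable), so $v := 1_D v_1 + 1_{D^c} v_2 \in R_t(X)$ and $\trans{w}v = \min(\trans{w}v_1, \trans{w}v_2)$. By the standard essential-infimum result (see e.g.\ Neveu), downward directedness yields
\[
\E{\rho_t^{w}(X)} = \E{\essinf_{v \in R_t(X)} \trans{w}v} = \inf_{v \in R_t(X)} \E{\trans{w}v} > \E{\trans{w}u}.
\]
Therefore $\rho_t^{w}(X) > \trans{w}u$ on a set of positive $\P$-measure, showing that $u$ lies outside the right-hand side of \eqref{eq_scalar_rep}, which completes the proof.

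The main obstacle is the last step: a generic Hahn--Banach separation only produces an inequality of expectations, whereas \eqref{eq_scalar_rep} requires a pointwise (almost sure) inequality with the essential infimum. Bridging this gap is precisely where conditional convexity (via $\Ft{t}$-decomposability from corollary~\ref{cor_convex_decomposable}) is indispensable; without it, the separating functional only certifies that $u$ fails a weaker, unconditional version of the halfspace condition. The topological distinction between $p<+\infty$ (norm topology) and $p=+\infty$ (weak* topology) requires minor care in the separation step but does not affect the structure of the argument, since the dual space is $\LdqF{t}$ in both settings.
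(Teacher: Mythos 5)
Your proof is correct and follows essentially the same route as the paper's: the easy inclusion from the definition of the essential infimum, a separating-hyperplane argument for the reverse inclusion, verification that the separating functional lies in $\plus{M_{t,+}} \backslash \prp{M_t}$, and the interchange $\E{\essinf_{\hat u \in R_t(X)} \trans{w}\hat u} = \inf_{\hat u \in R_t(X)} \E{\trans{w}\hat u}$ via $\Ft{t}$-decomposability (corollary~\ref{cor_convex_decomposable}). The only cosmetic difference is that you derive the interchange from downward directedness directly where the paper cites theorem~1 of~\cite{Y85}, and you spell out the degenerate empty case and the exclusion of $\prp{M_t}$ slightly more explicitly.
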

\begin{proof}
\begin{enumerate}
\item[$\subseteq$:] By definition it is easy to see that $u \in R_t(X)$ implies that $\trans{w}u \geq \rho_t^{w}(X)$ for any $w \in \plus{M_{t,+}} \backslash \prp{M_t}$.
\item[$\supseteq$:] Let $u \in \bigcap_{w \in \plus{M_{t,+}} \backslash \prp{M_t}} \lrcurly{u \in M_t: \rho_t^{w}(X) \leq \trans{w}u \Pas}$.  Assume $u \not\in R_t(X)$.  Then since $R_t(X)$ is closed and convex we can apply the separating hyperplane theorem.  In particular, there exists a $v \in \plus{M_{t,+}} \backslash \prp{M_t}$ such that $\E{\trans{v}u} < \inf_{\hat{u} \in R_t(X)} \E{\trans{v}\hat{u}}$ (if $v \not\in \plus{M_{t,+}} \backslash \prp{M_t}$ then $\inf_{\hat{u} \in R_t(X)} \E{\trans{v}\hat{u}} = -\infty$ by $R_t(X) = R_t(X) + M_{t,+}$).  This implies that $\E{\rho_t^{v}(X)} = \E{\essinf_{\hat{u} \in R_t(X)} \trans{v}\hat{u}} \leq \E{\trans{v}u} < \inf_{\hat{u} \in R_t(X)} \E{\trans{v}\hat{u}}$.

    By corollary~\ref{cor_convex_decomposable}, $R_t(X)$ is $\Ft{t}$-decomposable.  Therefore by theorem 1 of~\cite{Y85} (and $\lrcurly{\trans{v}u: u \in R_t(X)} \subseteq \LoF{t}$), it follows that $\E{\essinf_{\hat{u} \in R_t(X)} \trans{v}\hat{u}} = \inf_{\hat{u} \in R_t(X)} \E{\trans{v}\hat{u}}$.  This is a contradiction and thus $u \in R_t(X)$.
\end{enumerate}
\end{proof}

\bibliographystyle{plain}
\bibliography{biblio}
\end{document}